\providecommand{\keywords}[1]{\textbf{Keywords: } #1}
\titleformat{\section}{\centering\large\scshape}{\thesection}{1em}{}
\titleformat{\subsection}{\centering\normalsize\scshape}{\thesubsection}{1em}{}
\renewcommand{\algocf@captiontext}[2]{#1\algocf@typo. \AlCapFnt{}#2} 
\def\@algocf@capt@plain{top}
\renewcommand{\algocf@makecaption}[2]{%
	\addtolength{\hsize}{\algomargin}%
	\sbox\@tempboxa{\algocf@captiontext{#1}{#2}}%
	\ifdim\wd\@tempboxa >\hsize
	\hskip .5\algomargin%
	\parbox[t]{\hsize}{\algocf@captiontext{#1}{#2}}
	\else%
	\global\@minipagefalse%
	\hbox to\hsize{\box\@tempboxa}
	\fi%
	\addtolength{\hsize}{-\algomargin}%
}
\newcommand{\R}{\mathds{R}}
\newcommand{\E}{\mathds{E}}
\newcommand{\X}{\mathds{X}}
\renewcommand{\P}{\mathds{P}}
\newcommand{\Qcr}{\mathscr{Q}}
\newcommand{\Xcr}{\mathscr{X}}
\newcommand{\Ccr}{\mathscr{C}}
\newcommand{\D}{{\rm d}}
\newcommand{\kernel}{\mathcal K}
\newcommand{\Law}{\mathcal{L}}
\def\simind{\stackrel{\mbox{\scriptsize{ind}}}{\sim}}
\def\simiid{\stackrel{\mbox{\scriptsize{iid}}}{\sim}}
\newcommand{\indic}{\mathds{1}}
\newtheorem{theorem}{Theorem}
\newtheorem{example}{Example}
\newtheorem{definition}{Definition}
\newtheorem{proposition}{Proposition}
\newtheorem{lemma}{Lemma}
\newtheorem{proof}{Proof}
\newtheorem{algo}{Algorithm}
\begin{document}

\title{\scshape\LARGE{Contaminated Gibbs-type priors}}

\author[1]{Camerlenghi F.\thanks{federico.camerlenghi@unimib.it}}
\author[1]{Corradin R.\thanks{riccardo.corradin@unimib.it}}
\author[1]{Ongaro A.\thanks{andrea.ongaro@unimib.it}}
\affil[1]{\normalsize{Department of Economics, Management and Statistics,University of Milano-Bicocca}}
\date{\today}

\maketitle 

\begin{abstract}
	Gibbs-type priors are widely used as key components in several Bayesian nonparametric models. By virtue of their flexibility and mathematical tractability, they turn out to be predominant priors in species sampling problems, clustering and mixture modelling.  We introduce a new family of processes which extend the Gibbs-type one, by including a contaminant component in the model to account for the presence of anomalies (outliers) or an excess of observations with frequency one. We first investigate the induced random partition, the associated  predictive distribution and we characterize the asymptotic behaviour of the number of clusters. All the results we obtain are in closed form and easily interpretable, as a noteworthy example we focus on the contaminated version of the Pitman-Yor process. Finally we pinpoint the advantage of our construction in different applied problems: we show how the contaminant component helps to perform outlier detection for an astronomical clustering problem and to improve predictive inference in a species-related dataset, exhibiting  a high number of species with frequency one. 
	
	\vspace{12pt}
	\noindent\keywords{Bayesian nonparametrics; Gibbs-type priors; mixture models; species sampling models; random partitions; outliers.}
\end{abstract}

\section{Introduction}

The great success of the  Dirichlet process within the Bayesian nonparametric framework has paved the way for the definition and investigation of a large variety of random probability measures. Indeed, since  the seminal paper by \cite{Fer73}, several discrete nonparametric priors have been proposed to accommodate for exchangeable observations, among these we mention: the Pitman-Yor process or two parameter Poisson-Dirichlet process \citep{Per92, Pit96}; species sampling processes \citep{Pit96}; priors based on normalization of completely random measures \citep{regz_03,Lij10}. Gibbs-type priors are another important class of Bayesian nonparametric models early introduced by \citep{Gne05} and recently investigated in \citep{Deb15}. The Gibbs-type family has the advantage to balance modelling flexibility and mathematical tractability. These  processes  have been successfully used in several frameworks, just to mention a few examples: to face prediction within species sampling framework \citep[e.g.][]{Lij07}, to define mixture models \citep[e.g.][]{Ish01,Lij07b}, for survival analysis \citep[e.g.][]{Jar10}, and for applications in linguistic and information retrieval \citep[e.g.][]{Teh06, Teh10}. \cite{Roy20} have recently discussed a class of feature allocation models  parametrized by Gibbs-type random probability measures.


We introduce a new family of Bayesian nonparametric models where a Gibbs-type prior is contaminated with an exogenous diffuse probability measure, called \textit{contaminant measure}. More precisely, we define a new random probability measure as  a convex linear combination of a Gibbs-type prior $\tilde{q}$ and a diffuse probability $P_0$, i.e. we deal with $\tilde{p}= \beta \tilde{q}+ (1-\beta) P_0$, where $\beta \in [0,1]$ is a weight which tunes the impact of the contaminant measure. We refer to $\tilde{p}$ as a contaminated Gibbs-type prior (see Definition \ref{def:cGP}) and its distribution is then used  as a nonparametric prior in a Bayesian context. 
We will show that the advantage of this process in the Bayesian nonparametric setting is twofold: i)  $\tilde{p}$ is a tractable prior outside the Gibbs-type family which allows to enrich the predictive structure of exchangeable models, through the inclusion of the additional sampling information on the number of observations with frequency one out of the observed sample; ii) the contaminant measure $P_0$ accounts naturally for the presence of anomalies in the data (observations which are under some respects singular), thus resulting particularly suited for several applied problems. 
With regard to point ii), contaminated Gibbs-type prior can be exploited for modelling discrete data directly, when one needs to inflate the observations with frequency one. As an example, in Section \ref{sec:discrete_app}, we consider species detection data from the Global Biodiversity Information Facility project \citep{GBI21} with a high number of species detected only once. Within this framework we show the advantage of our model with respect to the traditional Pitman-Yor process to model the excess of ones. 
Besides, the proposed construction turns out to be useful in other contexts, such as in language modelling, disclosure risk assessment and operational taxonomic unit data analysis. See Section \ref{sec:discussion} for a thoughtful discussion on these applications. Furthermore, the new construction can be exploited as a mixing measure in mixture models to account for the presence of outliers in a dataset. In particular we are motivated by an astronomical dataset \citep{Iba11} composed by $n = 139$ stars, and we aim to understand which stars belong to a globular cluster and which stars are contaminants, i.e., outliers. Outlier detection is a crucial problem in Statistics and similar convex constructions are available also in classical setting, see, e.g., \cite{clustering_19} for an account. In the Bayesian framework, some contributions are available and rely on the use of traditional Dirichlet process. \cite{Qui_03,Qui_06} focus on product partition models, and they develop a decision-theoretic approach  that allows selecting a partition
with the purpose of outlier detection in regression problems.
\cite{Sho_11} identify an outlier detection criterion based on the Bayes factor, where they compare a
partition containing outliers against a partition with fewer or no outliers. As a remarkable addition with respect to the current literature, our prior process contemplates a specific component in the model, i.e. the contaminant measure $P_0$, to account for the presence of outliers, which thus follow a different generating process with respect to observations. This modeling strategy results in smoothed density estimates with respect to the ones obtained without the presence of the contaminant measure in the model. 
Motivated by all these applications, we introduce and deeply investigate the random partition structure induced by contaminated Gibbs-type priors. All the stated results are available in a closed form, they are simple and with a natural interpretation. The induced prediction rule can be easily explained in terms of a new Chinese restaurant with a social and a non-social room. As a concrete example, throughout the paper we focus on the contaminated version of the Pitman-Yor process, which exhibits more tractable expressions for all the quantity of interest and to face predictive inference. In particular, a simple extended P\'olya urn representation of the updating mechanism implied by this process can also be obtained.


To the best of our knowledge, the Bayesian nonparametric literature has never focused on Gibbs type priors contaminated with the inclusion of a diffuse measure to model anomalies or the excess of singular observations. Experiments with simulated and real data will show their advantage to properly model the presence of such data with respect to standard Gibbs-type priors, which result in severe bias to estimate model parameters and in poor predictive performances.

\section{Contaminated Gibbs-type priors and their properties} \label{sec:model}

Let  $\{X_i\}_{i \geq 1}$  be a sequence of observations 
taking values in a Polish space $\X$, equipped with its Borel $\sigma$-field $\Xcr$. 
In the Bayesian nonparametric setting $X_1, X_2, \ldots$ are typically supposed to be exchangeable \citep{defin_37}, which is tantamount to saying that there exists a random probability measure 
$\tilde{p}$ such that $X_i | \tilde{p} \simiid \tilde{p}$, where the distribution of $\tilde{p}$ works as a prior in the Bayesian nonparametric framework.
The distribution of $\tilde{p}$, indicated by $\Qcr$, is called the de Finetti measure of the sequence $X_1, X_2, \ldots$, and several prior specifications 
$\Qcr$ are available in the Bayesian nonparametric literature. 
Among these we mention the remarkable class of species sampling models \citep{Pit96}. We recall that  an exchangeable sequence of observations $\{ X_i \}_{i \geq 1}$  is called a \textit{species sampling sequence}  if and only if it is governed by a distribution of the following type
\begin{equation}\label{eq:tilde_p_gen}
\tilde{p} = \sum_{j \geq 1} p_j \delta_{Z_j} +\Big( 1-\sum_{j \geq 1} p_j\Big) P_0,
\end{equation}
for a sequence of random weights $\{ p_j  \}_{j \geq 1}$ with $p_j \geq 0$ and $\sum_{j \geq 1}p_j \leq 1$ almost surely, and a sequence of random atoms $\{ Z_j \}_{j \geq 1}$ i.i.d. from $P_0$ independent of  $\{ p_j  \}_{j \geq 1}$, where  $P_0$ is assumed to be a  diffuse probability measure on  $(\X, \Xcr)$. A random distribution $\tilde{p} $ of the form in \eqref{eq:tilde_p_gen} is called a \textit{species sampling model}. Further, a species sampling model is termed \textit{proper} if and only if $\sum_{j \geq 1} p_j = 1$ almost surely, and most of the current Bayesian nonparametric literature focuses on the proper species sampling models. In this paper we discuss the case in which $\sum_{j \geq 1} p_j < 1$ with positive probability, and we show how non-proper models are particularly suited to take into account contaminated observations or more generally observations with frequency one.

Among the very general class of species sampling models we recover special subclasses of priors, which have been duly investigated in the literature, e.g., homogeneous normalized random measures with independent increments \citep{regz_03} and Gibbs-type priors \citep{Gne05,Deb15}. Here we focus on a contaminated version of Gibbs-type priors. For this reason, it is worth recalling that Gibbs-type random probability measures are typically characterized in terms of the exchangeable random partition \citep{Pit06} induced by the data.
More precisely, given a sample $X_{1:n}:= (X_1, \ldots , X_n)$ from a species sampling model governed by a random probability measure $\tilde{p}$, the $n$ observations are naturally partitioned into $K_n=k$ groups of distinct values, denoted here as 
$X_1^*, \ldots , X_k^*$, with corresponding frequencies $(N_{n,1}, \ldots , N_{n, K_n})= (n_1, \ldots , n_k)$. The exchangeable partition probability function (EPPF) corresponds to the probability of observing a specific partition of the data into clusters of distinct values, and it can be formalized as
\begin{equation}
\label{eq:EPPF_def}
\Pi_k^{(n) } (n_1, \ldots , n_k)  := \int_{\X^k } \E  \prod_{j=1}^k  \tilde{p}^{n_j} (\D  x_j^*).
\end{equation}
Gibbs-type priors are proper species sampling models  $\tilde{p}$ characterized by means of their sequence of EPPFs $\{ \Pi_k^{(n)} : \; n \geq 1, \, 1 \leq k \leq n  \}$, which can be expressed in the following form
\begin{equation}\label{eq:EPPF_def_gibbs}
\Pi_k^{(n) } (n_1, \ldots , n_k) = V_{n,k} \prod_{i=1}^k (1-\sigma)_{n_i-1},
\end{equation}
for all $n \geq 1$, $k \leq n$ and positive integers $n_1, \ldots , n_k$ with $\sum_{i=1}^k n_i =n$,  where $(a)_b= \Gamma (a+b)/\Gamma (a)$ in \eqref{eq:EPPF_def_gibbs}, for $a, b >0$, denotes the Pochhammer symbol. The discount parameter $\sigma <1$ and the non-negative weights $\{ V_{n,k} : \; n \geq 1, \, 1 \leq k \leq n \}$ must satisfy the recurrence  relation $V_{n,k} = (n -\sigma k) V_{n+1,k} +V_{n+1, k+1}$ for all $k =1, \ldots , n$, $n \geq 1$, with the proviso $V_{1,1}=1$ and $V_{0,0}=1$. The sequence of weights $V_{n,k}$s can be specified to recover prior processes commonly used in literature, such as the Dirichlet process \citep{Fer73}, the Pitman-Yor process \citep{Pit97}, the normalized inverse Gaussian process \citep{Lij05} and the normalized generalized gamma process \citep[see e.g.][and references therein]{Lij07b}. 
Building upon Gibbs-type priors, we now introduce a new family of prior processes which account for the possibility of contaminated observations.
\begin{definition} \label{def:cGP}
	Let $\tilde{q}$ be a Gibbs-type prior, specified by the sequence of weights $\{ V_{n,k} : \; n \geq 1, \, 1 \leq k \leq n \}$ and $\sigma<1$. 
	A \textit{contaminated Gibbs-type prior} is a random probability measure  on $(\X, \Xcr)$ defined as
	\begin{equation}\label{eq:HGT_def}
	\tilde p = \beta \tilde q + (1 - \beta) P_0, \qquad \beta \in (0, 1),
	\end{equation}
	where $Q_0(\, \cdot \,) = \E[\tilde{q}(\, \cdot \,)]$ is the base measure of $\tilde{q}$, and $Q_0,P_0$ are diffuse probability measures. 
\end{definition}
The prior $\tilde{p}$ in \eqref{eq:HGT_def} is a convex linear combination of two components: an almost surely discrete component $\tilde{q}$ which generates the data, and a diffuse probability measure $P_0$ which accounts for contaminated observations. In the sequel we refer to $P_0$ as the \textit{contaminant measure}.
Sampling from $\tilde p$ can be interpreted as sampling from a population formed by two parts: the first one, representing  a $\beta$ fraction of the entire population, is composed by a countable number of species each appearing with positive probability. The second part ($1-\beta$ fraction) can be thought of as composed by a continuum of individuals each belonging to a different species. Therefore any time we sample from this second part a new species is obtained that cannot be re-observed. As stated above, for simplicity we shall call  \textit{contaminant} this second part and \textit{contaminated} the relative observations. However, the diffuse part can be used more generally to account for any population which displays unique elements (see Section \ref{sec:discussion}) and/or to model a high number of generic singletons in the observations. Finally, note that in Definition \ref{def:cGP}, the contaminant measure $P_0$ may be different from the base measure $Q_0$, thus $\tilde{p}$ in \eqref{eq:HGT_def} may not be a species sampling model. This additional flexibility is introduced because it was found useful in some applied contexts to distinguish the distribution of contaminated observations from the others.

We first derive the expectation and the covariance structure of a contaminated Gibbs-type prior in order to understand  how 
the contaminant measure affects the distribution of $\tilde{p}$.
\begin{proposition}\label{thm:prior_quant}
	Let $\tilde p$ be a contaminated Gibbs-type prior as in Definition \ref{def:cGP}. Let $A, B \in \Xcr$, then 
	\[
	\begin{split}
	\E[\tilde p(A)] &= \beta Q_0(A) + (1-\beta)P_0(A),\\
	\mathrm{Cov}(\tilde p(A), \tilde p(B)) &= \beta^2 (1-\sigma) \frac{V_{2,1}}{V_{1,1}} [Q_0(A \cap B) - Q_0(A)Q_0(B)].
	\end{split}
	\]
\end{proposition}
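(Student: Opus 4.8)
The plan is to split the statement into its two parts, dispose of the mean by linearity, and reduce the covariance of $\tilde{p}$ to the covariance of the underlying Gibbs-type prior $\tilde{q}$, which I then evaluate from the EPPF at sample size two.

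For the expectation, since $P_0$ is a fixed diffuse measure and $\beta$ a constant, linearity gives $\E[\tilde{p}(A)] = \beta\,\E[\tilde{q}(A)] + (1-\beta)P_0(A)$, and the first display follows from $Q_0(\cdot) = \E[\tilde{q}(\cdot)]$ as in Definition \ref{def:cGP}. For the covariance, I would expand $\tilde{p}(A)\tilde{p}(B)$ via \eqref{eq:HGT_def}; as $P_0(A), P_0(B)$ are constants while only $\tilde{q}(A), \tilde{q}(B)$ are random, every cross term involving $P_0$ cancels upon subtracting $\E[\tilde{p}(A)]\E[\tilde{p}(B)]$, leaving $\mathrm{Cov}(\tilde{p}(A),\tilde{p}(B)) = \beta^2\,\mathrm{Cov}(\tilde{q}(A),\tilde{q}(B))$. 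The problem thus reduces to the Gibbs-type prior alone.

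To compute $\E[\tilde{q}(A)\tilde{q}(B)]$, I would sample $X_1, X_2 \mid \tilde{q} \simiid \tilde{q}$, so that $\E[\tilde{q}(A)\tilde{q}(B)] = \P(X_1 \in A, X_2 \in B)$, and condition on whether the two observations tie. Because $\tilde{q}$ is a proper species sampling model whose distinct values are i.i.d.\ from $Q_0$, the tie event $\{K_2 = 1\}$ has probability $\Pi_1^{(2)}(2) = V_{2,1}(1-\sigma)$ and forces the shared value into $A \cap B$, while the non-tie event $\{K_2 = 2\}$ has probability $\Pi_2^{(2)}(1,1) = V_{2,2}$ and places two independent $Q_0$-distributed values; here I use \eqref{eq:EPPF_def_gibbs} together with $(1-\sigma)_1 = 1-\sigma$ and $(1-\sigma)_0 = 1$. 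This yields $\E[\tilde{q}(A)\tilde{q}(B)] = V_{2,1}(1-\sigma)\,Q_0(A\cap B) + V_{2,2}\,Q_0(A)Q_0(B)$. Subtracting $Q_0(A)Q_0(B)$ and using the Gibbs recurrence at $n=k=1$, namely $V_{1,1} = (1-\sigma)V_{2,1} + V_{2,2}$ with $V_{1,1} = 1$, so that $V_{2,2} - 1 = -(1-\sigma)V_{2,1}$, collapses the expression to $(1-\sigma)V_{2,1}[Q_0(A\cap B) - Q_0(A)Q_0(B)]$; multiplying by $\beta^2$ and writing $V_{2,1} = V_{2,1}/V_{1,1}$ gives the stated formula.

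The step I expect to be the main obstacle is correctly linking the second moment to the EPPF, which rests on the standard fact that the distinct values of a species sampling model are i.i.d.\ from $Q_0$ and independent of the induced partition, so that the two-sample tie probability is exactly $\Pi_1^{(2)}(2)$. As a fully explicit safeguard I could instead write $\tilde{q} = \sum_{j\geq 1} p_j \delta_{Z_j}$ and expand the double sum into diagonal and off-diagonal parts, obtaining $\E[\sum_j p_j^2] = V_{2,1}(1-\sigma)$ and, by properness, $\E[\sum_{j\neq l} p_j p_l] = V_{2,2}$, which reproduces the same two coefficients.
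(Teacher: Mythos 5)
Your proposal is correct and follows essentially the same route as the paper's proof: the mean by linearity, the reduction $\mathrm{Cov}(\tilde p(A),\tilde p(B))=\beta^2\,\mathrm{Cov}(\tilde q(A),\tilde q(B))$, the identification $\E[\tilde q(A)\tilde q(B)]=\P(X_1\in A, X_2\in B)$ for $X_1,X_2\mid\tilde q \simiid \tilde q$, and the recurrence $V_{1,1}=(1-\sigma)V_{2,1}+V_{2,2}$ to collapse the result. The only cosmetic difference is that you evaluate the two-sample probability by conditioning on the tie event via the EPPF at $n=2$, whereas the paper applies the one-step Gibbs predictive distribution $\P(X_2\in B\mid X_1)$ directly --- these are the same computation, since the predictive weights are precisely the corresponding EPPF ratios.
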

As consequence of Proposition \ref{thm:prior_quant}, 
one has $\mathrm{Var}(\tilde p(A)) = \beta^2 \mathrm{Var}(\tilde q (A))$, therefore the diffuse probability measure $P_0$ in \eqref{eq:HGT_def} has the effect to shrink $\tilde q (A)$ towards its expected value. See Section \ref{sup:proof_prior} for a proof of Proposition \ref{thm:prior_quant}.
\begin{example}[Contaminated Pitman-Yor process]  \label{ex:cPY1}
	Among the class of Gibbs-type priors, the Pitman-Yor process represents a noteworthy example, widely used in numerous applications. The contaminated Pitman-Yor process  can be constructed by selecting $\tilde q$ in \eqref{eq:HGT_def} to be a Pitman-Yor process. In such a case we recall that 
	\begin{equation}\label{eq:w_PY}
	V_{n,k} = \frac{\prod_{i=1}^{k-1}(\vartheta + i \sigma)}{(\vartheta + 1)_{n-1}}
	\end{equation}
	with $\sigma \in [0,1)$ and $\vartheta > -\sigma$. Moreover, if  $\sigma = 0$ we recover the Dirichlet process. 
\end{example}

\section{Random partition, prediction and asymptotic properties} \label{sec:inference}
Having introduced all the modeling assumptions in Section \ref{sec:model}, we now study the partition structure induced by a sample of observations from 
the random probability measure in \eqref{eq:HGT_def}, we further derive a closed form expression for predictive distributions and asymptotic properties for the number of clusters. 
We first focus on the random partition induced by a sequence of exchangeable observations governed by a contaminated Gibbs-type prior, deriving  the EPPF. 
\begin{theorem}\label{thm:EPPF_Gibbs}
	Let $\tilde p$ be a contaminated Gibbs-type prior  as in \eqref{eq:HGT_def}, with  $P_0$ and  $Q_0$ two diffuse probability measures 
	on $(\X, \Xcr)$. Suppose that  $X_i | \tilde{p} \simiid \tilde{p}$, as $i \geq 1$, then the probability that $n$ observations $X_{1:n}$ are partitioned into $K_n=k$ clusters of distinct values 
	$X_1^*, \ldots , X_{k}^*$  with corresponding frequencies $(N_{n,1}, \ldots , N_{n, K_n})= (n_1, \ldots , n_k)$ equals
	\begin{equation}\label{eq:EPPF_convex_comb}
	\Pi_k^{(n)} (n_1, \ldots , n_k) =  \E_{\bar{M}_{m_1}} [V_{n-\bar{M}_{m_1}, k- \bar{M}_{m_1}}] \beta^{n-m_{1}} \prod_{i=1}^{k} (1-\sigma)_{n_i-1}
	\end{equation}
	where $\bar{M}_{m_1} \sim {\mathrm{Binom}} (m_{1}, 1 - \beta)$  and $m_{1}= \# \{ i : \; n_i=1 \}$ denotes the number of singletons (i.e. observations with frequency one) out of the sample of size $n$.
\end{theorem}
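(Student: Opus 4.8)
The plan is to represent sampling from $\tilde p$ by means of latent \emph{source indicators} and then to marginalise them out. Since $\tilde p = \beta \tilde q + (1-\beta)P_0$, I can generate each observation by first tossing an independent coin: with probability $\beta$ the observation is drawn from the Gibbs component $\tilde q$, and with probability $1-\beta$ it is drawn from the contaminant $P_0$. Formally, introduce $\zeta_1, \ldots, \zeta_n \simiid \mathrm{Bern}(\beta)$, let $Y_i \mid \tilde q \simiid \tilde q$ and $W_i \simiid P_0$ (all the relevant families taken independent), and set $X_i = \zeta_i Y_i + (1-\zeta_i) W_i$. One checks immediately that $\P(X_i \in A \mid \tilde q) = \beta \tilde q(A) + (1-\beta)P_0(A)$, so that the $X_i$ are conditionally i.i.d. from $\tilde p$, as required. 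The decisive observation is that, because both $Q_0$ and $P_0$ are diffuse, the contaminant draws $\{W_i : \zeta_i = 0\}$ are almost surely pairwise distinct and almost surely distinct from every atom of $\tilde q$. Consequently each observation assigned to $P_0$ forms its own singleton block, every block of size $\geq 2$ in the observed partition is composed exclusively of observations coming from $\tilde q$, while a singleton block may have either origin: it may be a contaminant draw, or a point sampled from $\tilde q$ whose atom was selected only once.

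I would then fix a partition $\rho$ of $\{1, \ldots, n\}$ into blocks of sizes $(n_1, \ldots, n_k)$, with $m_1 = \#\{i : n_i = 1\}$ singletons, and condition on the source indicators. Summing over the subsets $T$ of singleton blocks that are declared contaminant, say $|T| = j$, the $n-j$ observations allocated to $\tilde q$ must realise the sub-partition obtained from $\rho$ by deleting the $j$ contaminant singletons; this has $k-j$ blocks. Since $\{Y_i : \zeta_i = 1\}$ are i.i.d. from the Gibbs-type prior $\tilde q$, this conditional probability is exactly $V_{n-j,\,k-j}\prod_{i}(1-\sigma)_{n_i-1}$, the product running over the retained blocks; as each singleton contributes $(1-\sigma)_0 = 1$, this product equals $\prod_{i=1}^k (1-\sigma)_{n_i-1}$ irrespective of $j$. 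Each configuration with $|T| = j$ carries indicator weight $\beta^{n-j}(1-\beta)^j$ and there are $\binom{m_1}{j}$ of them, whence
\[
\Pi_k^{(n)}(n_1, \ldots, n_k) = \prod_{i=1}^k (1-\sigma)_{n_i-1} \sum_{j=0}^{m_1} \binom{m_1}{j}\beta^{n-j}(1-\beta)^j V_{n-j,\,k-j}.
\]

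Finally I would factor out $\beta^{n-m_1}$ and recognise $\binom{m_1}{j}\beta^{m_1-j}(1-\beta)^j = \P(\bar M_{m_1} = j)$ with $\bar M_{m_1}\sim {\mathrm{Binom}}(m_1, 1-\beta)$, turning the sum into $\E_{\bar M_{m_1}}[V_{n-\bar M_{m_1},\,k-\bar M_{m_1}}]$ and yielding \eqref{eq:EPPF_convex_comb}. The main obstacle is the rigorous justification of the conditional factorisation: one must argue, using the diffuseness of $P_0$ and $Q_0$ together with the proper species-sampling structure of $\tilde q$, that conditionally on the indicators the contaminant draws create exactly $j$ fresh singletons with probability one and do not interfere with the Gibbs-type clustering of the remaining points, so that the conditional law of the partition genuinely factorises into the Gibbs EPPF on the $\tilde q$-points (the integration over the distinct values in \eqref{eq:EPPF_def} being trivial since $P_0$ and $Q_0$ are probability measures). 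The remaining combinatorial bookkeeping — counting which singletons are contaminant and checking that singletons leave the Pochhammer product unchanged — is routine.
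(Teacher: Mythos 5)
Your proof is correct and, at its core, it performs the same marginalisation as the paper: both arguments reduce the event to a choice of which of the $m_1$ singleton blocks are contaminant, attach the weight $\binom{m_1}{j}\beta^{n-j}(1-\beta)^{j}\,V_{n-j,k-j}\prod_{i=1}^{k}(1-\sigma)_{n_i-1}$ to each choice of $j$ contaminant singletons, and recognise the sum over $j$ as an expectation with respect to $\bar M_{m_1}\sim\mathrm{Binom}(m_1,1-\beta)$. The difference is one of execution rather than of idea. The paper works analytically on the moment formula \eqref{eq:EPPF_def}: it expands each factor $(\beta\tilde q+(1-\beta)P_0)^{n_i}$ by the binomial theorem and discards the terms in which $P_0$ charges a block more than once or charges a block of size at least two, justifying this by declaring them of higher order with respect to the dominating diffuse measure $\mu=\tfrac12 P_0+\tfrac12 Q_0$, after which the surviving integral is identified as the Gibbs-type EPPF. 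You instead make the latent Bernoulli source indicators explicit from the outset --- these are exactly the variables $J_i$ that the paper itself introduces in \eqref{eq:hgibbs_aug} immediately after the theorem --- and you replace the paper's ``neglecting the superior order terms'' step with the almost-sure statement that contaminant draws are pairwise distinct and distinct from every atom of $\tilde q$, so that conditionally on the indicators the partition law factorises into the Gibbs EPPF of the restricted partition (with $n-j$ elements and $k-j$ blocks) times the indicator that all contaminant points sit in singleton blocks. This is arguably a more transparent treatment of the role of diffuseness, which the paper's proof handles rather tersely; the price, which you correctly flag and then discharge, is verifying that the coupling reproduces conditionally i.i.d. sampling from $\tilde p$ and that the contaminant points do not interfere with the clustering of the $\tilde q$-points. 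The remaining bookkeeping --- the $\binom{m_1}{j}$ configurations, the observation that $(1-\sigma)_0=1$ so deleting singletons leaves the Pochhammer product $\prod_{i=1}^{k}(1-\sigma)_{n_i-1}$ unchanged, and the factorisation of $\beta^{n-m_1}$ --- coincides exactly with the paper's.
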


See Section \ref{prof:th1} of the Appendix for a proof of Theorem \ref{thm:EPPF_Gibbs}.
From the expected value in \eqref{eq:EPPF_convex_comb}, it is apparent that the use of the contaminant measure $P_0$ in \eqref{eq:HGT_def} acts on observations with frequency one and, as expected, they play a central role in the expression of the EPPF. In order to fix the terminology we call \textit{singletons} the observations with frequency one, while the \textit{structural singletons} are those values generated from the contaminant measure $P_0$, whose number equals the latent quantity $\bar{M}_{m_1}$. Note that the term \textit{structural} refers to the fact that these values cannot be observed twice and this statistic could be of potential interest in certain applied problems, as it will be discussed in Section \ref{sec:discussion}.

We now get a glimpse of the probabilistic implications of the random partition \eqref{eq:EPPF_convex_comb} induced by contaminated Gibbs-type priors 
as compared to the pure Gibbs-type priors. 
In order to do this, we 
denote by  $(n_1, \ldots , n_k)$  and $(n_1', \ldots , n_k')$ two distinct compositions having the same number of distinct  values $k$ and corresponding to two samples with the same size $n$; the probability ratio between the EPPFs corresponding to the two compositions will be denoted by
$\mathrm{R} (n_1, \ldots  n_k ;n_1', \ldots  n_k '; n,k): =\Pi_k^{(n)} (n_1, \ldots , n_k)/ \Pi_k^{(n)} (n_1', \ldots , n_k')$.
In Proposition \ref{prop:compare_EPPF} (Section \ref{sec:proof_compare_EPPF}) we compare the probability ratio $\mathrm{R}$ when $\Pi_k^{(n)}$ is a Gibbs-type EPPF \eqref{eq:EPPF_def_gibbs} and when it equals the EPPF of a contaminated Gibbs-type prior \eqref{eq:EPPF_convex_comb}.
Proposition \ref{prop:compare_EPPF} of the Appendix clarifies that if the two compositions have the same number of singletons, the ratio is the same for the contaminated and non-contaminated model. On the other side, if the number of singletons out of the composition $(n_1, \ldots , n_k)$ is bigger w.r.t. the number of singletons out of $(n_1', \ldots , n_k')$, the relative ratio increases in the contaminated model. See Section \ref{sec:proof_compare_EPPF} for details.
Thus, in relative terms and given the number $k$ of distinct values, the contaminated Gibbs model modifies the probabilities of compositions only when a different number of singletons is involved, favouring compositions with a higher number of these elements. 

For computational convenience, we can equivalently describe the EPPF \eqref{eq:EPPF_convex_comb} introducing a set of suitable latent variables on an augmented probability space. Indeed, we can denote by $J_1, \ldots, J_{n}$ Bernoulli random variables, where the generic $J_i$ indicates if the $i$th observation is generated from the contaminant measure $P_0$ ($J_i=0$), or from the a.s. discrete component $\tilde{q}$ ($J_i = 1$). 
Thus, the introduction of latent elements $J_1, \ldots , J_n$ leads us to deal with the following augmented model
\begin{equation}
\begin{aligned}\label{eq:hgibbs_aug}
X_i | \tilde{p}, J_i  & \simiid J_i \tilde{q} + (1-J_i) P_0 \\
J_i  &\simiid \mathrm{Bern} (\beta),
\end{aligned}
\end{equation}
from which we may recover the marginal model by integrating  \eqref{eq:hgibbs_aug} with respect to $J_i$. Furthermore, $J_i =1 $ in \eqref{eq:hgibbs_aug} if the corresponding observation $X_i$ has been recorded at least twice in the sample: indeed 
if an observation $X_i$ is generated from $P_0$, it does not appear again in the sample with probability $1$.
Thus, the non-degenerate $J_i$s are those values referring to singletons out of the sample $X_{1:n}$. Without loss of generality 
we can assume that the elements appearing once in the sample are the first $m_{1}$ observations $X_1, \dots, X_{m_{1}}$. Based upon this augmentation, the random variable $\bar{M}_{m_1}$ in \eqref{eq:EPPF_convex_comb} equals $\sum_{i=1}^{m_1} (1-J_i)$ which represents the number of structural singletons among the observations recorded only once and it could be of potential interest in many application areas, as discussed in Section \ref{sec:discussion}. 
We now  describe the predictive distribution of the next observation $X_{n+1}$, conditionally given $X_{1:n}$  and the latent variables $J_{1:m_1}= (J_1, \dots, J_{m_{1}})$.
\begin{proposition}\label{prp:predictive}
	Let $\tilde p$ be a contaminated Gibbs-type prior  as in \eqref{eq:HGT_def}, with  $P_0$ and  $Q_0$ two diffuse probability measures 
	on $(\X, \Xcr)$.  Assume that  $X_i | \tilde{p} \simiid \tilde{p}$, as $i \geq 1$, and consider a  sample  $X_{1:n}$ which displays $K_n=k$ distinct values, denoted as $X_1^*, \ldots , X_k^*$, with respective frequencies $(N_{n,1}, \ldots , N_{n, K_n})= (n_1, \ldots , n_k)$, and the first $m_{1}$ values $X_1^*, \ldots , X_{m_{1}}^*$ are singletons. Then
	\begin{equation}
	\label{eq:pred_cond}
	\begin{split}
	&\P (X_{n+1} \in \D x^* | X_{1:n}, J_{1:m_1} ) = (1-\beta) P_0(\D x^*) + \beta \frac{V_{n-\bar{M}_{m_1}+1, k-\bar{M}_{m_1}+1}}{V_{n-\bar{M}_{m_1}, k-\bar{M}_{m_1}}} Q_0(\D x^*)\\
	&\qquad+ \beta\frac{V_{n-\bar{M}_{m_1}+1, k-\bar{M}_{m_1}}}{V_{n- \bar{M}_{m_1}, k- \bar{M}_{m_1}}} \left( \sum_{i=1}^{m_1} J_i (1-\sigma)  \delta_{X_i^*} (\D x^*)+ \sum_{i=m_1 + 1}^{k} (n_i-\sigma)  \delta_{X_i^*} (\D x^*) \right),
	\end{split}
	\end{equation}
	where $\bar{M}_{m_1} = \sum_{i=1}^{m_1} (1 - J_i)$ represents the latent number of structural singletons.
\end{proposition}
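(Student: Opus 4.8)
The plan is to identify the conditional predictive with the posterior mean of $\tilde p$ and then to exploit the linear structure of $\tilde p$ together with the augmented model \eqref{eq:hgibbs_aug}. First I would attach to the new observation its own latent label $J_{n+1}\sim\mathrm{Bern}(\beta)$; marginalizing it gives $X_{n+1}\mid\tilde p\sim\tilde p$, independent of the past given $\tilde p$, so that
\[
\P(X_{n+1}\in\D x^*\mid X_{1:n},J_{1:m_1})=\E\big[\tilde p(\D x^*)\mid X_{1:n},J_{1:m_1}\big].
\]
Since $\tilde p=\beta\tilde q+(1-\beta)P_0$ and $P_0$ is non-random, this splits as $(1-\beta)P_0(\D x^*)+\beta\,\E[\tilde q(\D x^*)\mid X_{1:n},J_{1:m_1}]$, which isolates the contaminant term of \eqref{eq:pred_cond} and reduces the task to computing the posterior mean of the Gibbs-type component $\tilde q$.

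The crucial step is to show that conditioning on $J_{1:m_1}$ collapses the posterior of $\tilde q$ onto an ordinary Gibbs-type posterior based on a reduced sample. In \eqref{eq:hgibbs_aug} the observations with $J_i=0$ are drawn from $P_0$ independently of $\tilde q$, hence are conditionally independent of $\tilde q$ given the allocations and carry no information about it; I would make this precise by factorizing the joint law of $(\tilde q,X_{1:n})$ given $J_{1:m_1}$. Discarding the $\bar M_{m_1}=\sum_{i=1}^{m_1}(1-J_i)$ structural singletons then leaves an exchangeable subsample of size $n-\bar M_{m_1}$ governed by the law of $\tilde q$, i.e.\ by a Gibbs-type prior; its $k-\bar M_{m_1}$ blocks are the non-structural singletons (the $X_i^*$ with $i\le m_1$ and $J_i=1$, each of frequency one) together with $X_{m_1+1}^*,\dots,X_k^*$ of frequencies $n_{m_1+1},\dots,n_k$.

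I would then invoke the standard Gibbs-type predictive law applied to this reduced sample. Taking the ratio of the EPPF \eqref{eq:EPPF_def_gibbs} between consecutive sample sizes, a configuration of size $m$ with $j$ blocks of sizes $\ell_1,\dots,\ell_j$ yields
\[
\E[\tilde q(\D x^*)\mid\cdot\,]=\frac{V_{m+1,j+1}}{V_{m,j}}\,Q_0(\D x^*)+\frac{V_{m+1,j}}{V_{m,j}}\sum_{i=1}^{j}(\ell_i-\sigma)\,\delta_{X_i^*}(\D x^*),
\]
using $(1-\sigma)_{\ell_i}/(1-\sigma)_{\ell_i-1}=\ell_i-\sigma$. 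Setting $m=n-\bar M_{m_1}$, $j=k-\bar M_{m_1}$ and substituting the block sizes identified above — so that $\ell_i-\sigma=1-\sigma$ for the non-structural singletons and $\ell_i-\sigma=n_i-\sigma$ for $i>m_1$ — the singleton part collapses to $\sum_{i=1}^{m_1}J_i(1-\sigma)\delta_{X_i^*}(\D x^*)$ because $J_i=1$ precisely for the non-structural singletons. Multiplying by $\beta$ and adding $(1-\beta)P_0(\D x^*)$ gives exactly \eqref{eq:pred_cond}.

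I expect the only delicate point to be the conditional-independence argument of the second paragraph: one must verify that, given the allocation vector, the $P_0$-generated values are ancillary for $\tilde q$ while the remaining observations behave as an i.i.d.\ sample from $\tilde q$, so that the reduced problem is genuinely of Gibbs type with the same weights $\{V_{n,k}\}$ and discount $\sigma$. Everything afterwards is routine. As a consistency check I would also recover \eqref{eq:pred_cond} directly as a ratio of the EPPF \eqref{eq:EPPF_convex_comb} with $\bar M_{m_1}$ frozen by the conditioning, which should reproduce the same $V$-ratios.
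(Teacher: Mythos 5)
Your proof is correct, but it follows a genuinely different route from the paper's. The paper proves \eqref{eq:pred_cond} directly from the EPPF augmented with the latent variables (the joint law of $(X_{1:n},J_{1:m_1})$, cf.\ \eqref{eq:joint_aug}), computing the conditional probabilities as ratios of that law over the four exhaustive scenarios: $X_{n+1}$ new from $P_0$, new from $Q_0$, tied with a previously observed singleton, tied with a block of frequency $n_i\geq 2$. You instead establish the structural fact that, conditionally on the allocations, $\tilde q$ is independent of the $P_0$-generated values, so its posterior is an ordinary Gibbs-type posterior based on the reduced sample of size $n-\bar{M}_{m_1}$ with $k-\bar{M}_{m_1}$ blocks, and you then read off the answer as $(1-\beta)P_0$ plus $\beta$ times the standard Gibbs-type predictive. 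Your route avoids the case-by-case EPPF-ratio computation and turns the paper's remark after the proposition (the sample stripped of the structural singletons obeys the usual Gibbs-type predictive mechanism) into the proof itself, whereas the paper's argument is shorter once Theorem \ref{thm:EPPF_Gibbs} is in hand; your closing consistency check via ratios of the augmented EPPF is in fact exactly the paper's proof. Two points in your ``delicate'' step deserve to be made explicit: (i) conditioning on $J_{1:m_1}$ alone suffices because $J_i=1$ almost surely for every observation of frequency at least two (a $P_0$-draw is a.s.\ never repeated, as the paper observes when introducing \eqref{eq:hgibbs_aug}), so $J_{1:m_1}$ determines the full allocation vector a.s.\ and the factorization $\Law(\tilde q, X_{1:n}\mid J_{1:n})=\Law(\tilde q)\prod_{i:J_i=1}\tilde q(\D X_i)\prod_{i:J_i=0}P_0(\D X_i)$ applies; (ii) since $P_0$ and $Q_0$ are diffuse and the $P_0$-draws are independent of the atoms of $\tilde q$, there are a.s.\ no ties between the two groups, so the reduced sample indeed exhibits exactly $k-\bar{M}_{m_1}$ distinct values with the block sizes you list. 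With these two observations added, your argument is complete.
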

From the sampling mechanism dictated by the predictive distribution \eqref{eq:pred_cond}, it is apparent that those values sampled from the contaminant measure $P_0$ cannot be observed twice; moreover, at each sampling step, the probability of sampling a contaminated observation equals $1-\beta$ and does not depend on $n$. 
Note also that the sample without the $\bar{M}_{m_1}$ structural singletons is characterized by the usual predictive mechanism of  Gibbs-type priors. Finally, it is worth mentioning that the prediction rule has a nice interpretation in terms of a modified Chinese restaurant metaphor. Consider a restaurant with two rooms: a social and a non-social room.
The first customer arrives and she chooses a table either in the social room with probability $\beta$ or in the non-social room with probability $(1-\beta)$, she also chooses a dish which is shared by all the customers that will join the same table.
When the $n$th customer arrives, she first selects either the social room with probability $\beta$ or the non-social room with probability $1-\beta$. In the former case she can either sits at new table or at an occupied table according to the traditional Chinese restaurant metaphor in the social room, while in the latter case she sits alone at a new table eating a new dish.

If we further assume that $P_0 = Q_0$, which corresponds to a proper species sampling model, we can derive an explicit form of the predictive distribution integrating over $J_{1:m_1}$ as shown in the following result.
\begin{proposition} \label{prp:predictive_mar}
	Let $\tilde p$ be a contaminated Gibbs-type prior  as in \eqref{eq:HGT_def}, with  $P_0=Q_0$ a diffuse probability measure
	on $(\X, \Xcr)$. Suppose that  $X_i | \tilde{p} \simiid \tilde{p}$, as $i \geq 1$, and consider a  sample  $X_{1:n}$ which displays $K_n=k$ distinct values, denoted as $X_1^*, \ldots , X_k^*$, with respective frequencies $(N_{n,1}, \ldots , N_{n, K_n})= (n_1, \ldots , n_k)$, and the first $m_{1}$ values $X_1^*, \ldots , X_{m_{1}}^*$ are singletons. Then
	\begin{align}
	\label{eq:predictiveG}
	&\P (X_{n+1} \in \D x^* | X_{1:n} ) =  \left\{ (1-\beta) +\beta \frac{\E_{\bar M_{m_1}}[V_{n-\bar M_{m_1} + 1, k-\bar M_{m_1} + 1}]}{
		\E_{\bar M_{m_1}}[V_{n-\bar M_{m_1}, k-\bar M_{m_1}}]} \right\} P_0 (\D x^*)\notag \\
	&\qquad+ \frac{1}{m_1}\sum_{i=1}^{m_1} \beta (1-\sigma) \frac{\E_{\bar M_{m_1}}[(m_1 - \bar M_{m_1}) V_{n-\bar M_{m_1} + 1, k-\bar M_{m_1}}]}{
		\E_{\bar M_{m_1}}[V_{n-\bar M_{m_1}, k-\bar M_{m_1}}]} \delta_{X_i^*} (\D x^*)\\
	&\qquad\qquad+ \sum_{i=m_1+1}^{k} \beta (n_i-\sigma) \frac{\E_{\bar M_{m_1}}[ V_{n-\bar M_{m_1} + 1, k-\bar M_{m_1}}]}{
		\E_{\bar M_{m_1}}[V_{n-\bar M_{m_1}, k-\bar M_{m_1}}]} \delta_{X_i^*} (\D x^*)\notag
	\end{align}
	where $\bar M_{m_1}\sim \mathrm{Binom}(m_1, 1 - \beta)$.
\end{proposition}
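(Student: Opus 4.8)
The plan is to obtain the marginal predictive \eqref{eq:predictiveG} by averaging the conditional predictive of Proposition~\ref{prp:predictive} over the posterior law of the latent indicators $J_{1:m_1}$ given the sample. By the tower property,
\begin{equation*}
\P(X_{n+1}\in \D x^* \mid X_{1:n}) = \E\big[\, \P(X_{n+1}\in \D x^* \mid X_{1:n}, J_{1:m_1}) \,\big|\, X_{1:n}\,\big],
\end{equation*}
so the first task is to identify the conditional law of $J_{1:m_1}$ given $X_{1:n}$. Since observations recorded at least twice force $J_i=1$, only the indicators attached to the $m_1$ singletons are genuinely random, and \eqref{eq:pred_cond} depends on them solely through $\bar M_{m_1}=\sum_{i=1}^{m_1}(1-J_i)$.

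For the posterior I would reuse the joint law underlying Theorem~\ref{thm:EPPF_Gibbs}: conditionally on a configuration of $J_{1:m_1}$ with $\bar M_{m_1}$ zeros, the $\bar M_{m_1}$ structural singletons are drawn from $P_0$ while the remaining $n-\bar M_{m_1}$ observations follow the Gibbs-type scheme, yielding a joint probability proportional to $\binom{m_1}{\bar M_{m_1}}(1-\beta)^{\bar M_{m_1}}\beta^{m_1-\bar M_{m_1}} V_{n-\bar M_{m_1}, k-\bar M_{m_1}}\prod_{i=1}^k(1-\sigma)_{n_i-1}$. Dividing by the EPPF of Theorem~\ref{thm:EPPF_Gibbs} gives a \emph{tilted binomial} posterior,
\begin{equation*}
\P(\bar M_{m_1}=r \mid X_{1:n}) = \frac{\binom{m_1}{r}(1-\beta)^{r}\beta^{m_1-r} V_{n-r, k-r}}{\E_{\bar M_{m_1}}[V_{n-\bar M_{m_1}, k-\bar M_{m_1}}]}, \qquad \bar M_{m_1}\sim \mathrm{Binom}(m_1,1-\beta),
\end{equation*}
whose tilting weight is exactly $V_{n-r,k-r}$; note also that, since the joint probability depends on the configuration only through $r$, conditionally on $\bar M_{m_1}=r$ the posterior is uniform over the $\binom{m_1}{r}$ assignments.

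With this posterior the summands of \eqref{eq:pred_cond} that involve $J_{1:m_1}$ only through $\bar M_{m_1}$ are immediate. Setting $P_0=Q_0$ and averaging the first two terms, the tilting weight $V_{n-\bar M_{m_1},k-\bar M_{m_1}}$ cancels the denominator of each ratio, so the posterior average collapses to a ratio of binomial expectations; this produces the coefficient of $P_0(\D x^*)$ and, identically, the coefficient $\beta(n_i-\sigma)\,\E_{\bar M_{m_1}}[V_{n-\bar M_{m_1}+1,k-\bar M_{m_1}}]/\E_{\bar M_{m_1}}[V_{n-\bar M_{m_1},k-\bar M_{m_1}}]$ for the established clusters $i=m_1+1,\dots,k$, matching the third line of \eqref{eq:predictiveG}.

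The delicate step, and the one I expect to be the main obstacle, is the singleton block, where the factor $J_i$ multiplies $V_{n-\bar M_{m_1}+1,k-\bar M_{m_1}}/V_{n-\bar M_{m_1},k-\bar M_{m_1}}$, itself a function of $\bar M_{m_1}=\sum_j(1-J_j)$; because $J_i$ enters the very statistic one is averaging over, it cannot simply be replaced by its marginal mean. The resolution is to condition first on $\bar M_{m_1}$ and exploit the exchangeability of the singleton labels recorded above: given $\bar M_{m_1}$ the posterior is uniform over the configurations, so $\E[J_i\mid \bar M_{m_1}, X_{1:n}]=(m_1-\bar M_{m_1})/m_1$ for every $i\le m_1$. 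Substituting this conditional mean and averaging as before introduces the factor $(m_1-\bar M_{m_1})$ inside the expectation together with the prefactor $1/m_1$, reproducing the second line of \eqref{eq:predictiveG}. Summing the three blocks then yields the claimed marginal predictive and completes the proof.
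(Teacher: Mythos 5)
Your proposal is correct and follows essentially the same route as the paper's proof: both use the tower property, derive the tilted posterior of $J_{1:m_1}$ (equivalently of $\bar M_{m_1}$, with tilting weight $V_{n-\bar m_1,k-\bar m_1}$ and normalizer $\E_{\bar M_{m_1}}[V_{n-\bar M_{m_1}, k-\bar M_{m_1}}]$) from the augmented joint law underlying Theorem \ref{thm:EPPF_Gibbs} via Bayes' theorem, and then integrate the conditional predictive \eqref{eq:pred_cond} term by term. The only difference is presentational: where the paper handles the singleton term by explicitly counting the configurations with $j_i=1$ and invoking the identity $\binom{m_1-1}{m_1-\bar m_1-1}=\frac{m_1-\bar m_1}{m_1}\binom{m_1}{\bar m_1}$, you obtain the same factor $(m_1-\bar M_{m_1})/m_1$ by observing that the posterior is uniform over configurations given $\bar M_{m_1}$ and computing $\E[J_i\mid \bar M_{m_1},X_{1:n}]$, which is the identical computation phrased through exchangeability.
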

We refer to Section \ref{sup:predictive} of the Appendix for a proof of Proposition \ref{prp:predictive_mar}.  
The predictive distribution \eqref{eq:predictiveG} clearly shows that the probability that $X_{n+1}$ does not belong to $\{X_1^*, \ldots , X_k^*\}$ depends on the initial sample through the sample size $n$, the number of distinct values $k$ and the number of singletons  $m_1$. This is a remarkable addition w.r.t. the Gibbs-type family, in which such a probability does not depend on $m_1$ \citep{Bat_17}. Moreover  the probability that $X_{n+1}$ equals a previously observed value $X_i^*$, with $i=1,\ldots,k$, not only depends on $n$, $n_i$ and $k$, as in the  Gibbs-type framework, but also on $m_1$. As a consequence contaminated models allows to enrich the predictive structure of an exchangeable model, though the inclusion of the additional sampling information on the number of singletons out of the observable sample. On the other side analytical tractability is still preserved.
In Section \ref{sup:prob_new_comparison} we study the re-sampling mechanism induced by contaminated Gibbs-type prior in comparison with standard Gibbs-type priors. More precisely we show that the contaminant measure mainly acts on singletons by decreasing their re-sampling probabilities w.r.t. observations with higher frequencies. On the other side, for observations with frequency larger than one, we  preserve   the same reinforcement as the discrete term of the model, and the parameter $\sigma$ exhibits the same behavior as in the Gibbs-type case.   We now specialize all the results   for the contaminated Pitman-Yor process of Example \ref{ex:cPY1}.
\begin{example}[contaminated Pitman-Yor (continued)]\label{ex:PYp2}
	Consider the contaminated Pitman-Yor process of Example \ref{ex:cPY1}. We may recover an explicit expression  for  its EPPF starting from \eqref{eq:EPPF_convex_comb} and  by observing that  the weights $V_{n,k}$s equal \eqref{eq:w_PY}. Thus, we obtain
	\begin{equation}
	\label{eq:EPPF_convex_comb_PY}
	\Pi_k^{(n)} (n_1, \ldots , n_k) =  \prod_{i=1}^k  (1-\sigma)_{n_i-1}   \sum_{\bar m_1=0}^{m_1} \binom{m_1}{\bar m_1} \beta^{n - \bar m_1} (1-\beta)^{\bar m_1}  \frac{\sigma^{k - \bar m_1} (\vartheta/\sigma)_{k - \bar m_1}}{(\vartheta)_{n - \bar m_1}}.
	\end{equation} 
	See Section \ref{sup:eppf_PY} of the Appendix for the derivation of \eqref{eq:EPPF_convex_comb_PY}. The expression of the EPPF \eqref{eq:EPPF_convex_comb_PY} plays a central role to carry out posterior inference in our applications,  indeed all the algorithms we have developed (see Section \ref{sup:disc} of the Appendix) are based on this expression. 
	We conclude the example specializing the predictive distribution \eqref{eq:pred_cond} for the contaminated Pitman-Yor model:
	\begin{equation}    
	\label{eq:predictive}
	\begin{split}
	&\P (X_{n+1} \in \D x | X_{1:n}, J_{1:m_1}) =  (1 - \beta)P_0(\D x) + \beta \frac{\vartheta + (k - \bar{M}_{m_1}) \sigma}{\vartheta + n - \bar{M}_{m_1}} Q_0(\D x)\\
	&\qquad\qquad + \sum_{i=1}^{m_1}J_i \beta\frac{1 - \sigma}{\vartheta + n- \bar{M}_{m_1}} \delta_{X_i^*}(\D x)+ \sum_{i=m_1 + 1}^{k}\beta\frac{n_i - \sigma}{\vartheta + n - \bar{M}_{m_1}} \delta_{X_i^*}(\D x).
	\end{split}
	\end{equation}
	In Section \ref{sec:lemma} we show that the probability of sampling a new value is monotone as a function of the number of distinct values $m_1$, which  results in a richer predictive structure w.r.t. the Pitman-Yor case, where $m_1$ does not appear in the probability of sampling a new value. For example, the dependence on $m_1$ is always increasing in the Dirichlet process case ($\theta=0$), whereas it is always decreasing in the stable process one ($\sigma=0$). Some numerical experiments are presented in Section \ref{sup:prob_new}. Finally  the predictive distribution \eqref{eq:predictive} can be described in terms of an urn model, with solid and strip balls, when the prior for $\beta$ is a beta with parameters $\alpha$ and $\vartheta$, which correspond to the initial weight of strip colored balls and of black solid balls, respectively. At the first sampling step, if a strip colored ball is drawn from the urn, then we return the ball in the urn with an additional strip colored ball of a new color. On the other side if we draw a black solid ball, then we return a black ball in the urn with an additional black ball of weight $\sigma$ and a solid ball of a new color with weight $1-\sigma$. At the generic $i$th step, one can sample three different kinds of balls: a strip ball of an arbitrary color, a black solid ball or a  colored solid ball, where once that we draw a colored solid ball, we replace that ball in the urn with an additional one of the same color. See Section \ref{sup:urn} for a detailed description of the updating mechanism.
\end{example}

We conclude this section with some considerations on distributional properties of the number of clusters with a given frequency in a sample of size $n$:
this helps us to better understand the advantage of contaminated Gibbs-type priors. To fix the notation, we consider a sample $X_{1:n}$ from a  contaminated Gibbs-type prior, and we denote by  $M_{n,r}$ the random number of elements observed $r$ times out of the sample. In the sequel, if $V$ is a statistic depending on the sample $X_{1:n}$, we write $V(\beta)$ to make explicit the dependence on the parameter $\beta$ of the contaminated prior \eqref{eq:HGT_def}.
The following proposition clarifies the effect of the contaminant component with respect to the Gibbs-type model in terms of stochastic dominance and asymptotic properties.
\begin{proposition}\label{prp:asymptotic}
	If $\beta_1 < \beta_2$, then $K_n (\beta_1)$ (resp. $M_{n,1} (\beta_1)$) stochastically dominates $K_n (\beta_2)$ (resp. $M_{n,1} (\beta_2)$).
	Moreover, as $n \to +\infty$, we have 
	\[
	\frac{K_n}{n}  \stackrel{a.s.}{\to} 1-\beta, \quad  \frac{M_{n,1}}{n}  \stackrel{a.s.}{\to} 1-\beta \quad \text{and} \quad
	\frac{M_{n,r}}{n^\sigma} \stackrel{a.s.}{\to} \frac{\sigma (1-\sigma)_{r-1}}{r!}  S_{\sigma} \beta^\sigma,
	\]
	where $S_\sigma$ denotes the $\sigma$-diversity random variable \citep{Pit06}.
\end{proposition}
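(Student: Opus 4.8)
\section*{Proof proposal}

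The plan is to work throughout on the augmented probability space of \eqref{eq:hgibbs_aug}, conditioning on the latent indicators $J_1, \ldots, J_n$. Writing $n^* := \sum_{i=1}^n J_i$ for the number of draws assigned to the discrete component $\tilde q$, the sample splits into two pieces: the $n^*$ observations with $J_i = 1$, which form an exchangeable Gibbs-type sample of (random) size $n^*$, and the $n - n^*$ observations with $J_i = 0$, which are i.i.d. from the diffuse measure $P_0$. Since both $P_0$ and the base measure $Q_0$ of $\tilde q$ are diffuse, the contaminant draws are almost surely pairwise distinct and almost surely disjoint from the values produced by $\tilde q$. Consequently each contaminant draw forms its own cluster, and, denoting by $K^{\tilde q}_{m}$ and $M^{\tilde q}_{m,r}$ the cluster- and frequency-counts of a size-$m$ Gibbs-type sample, one has the exact decompositions $K_n = (n - n^*) + K^{\tilde q}_{n^*}$, $M_{n,1} = (n - n^*) + M^{\tilde q}_{n^*,1}$, and $M_{n,r} = M^{\tilde q}_{n^*,r}$ for every $r \ge 2$.

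For the asymptotic statements I would combine these decompositions with two ingredients. First, the strong law of large numbers gives $n^*/n \to \beta$ almost surely, so in particular $n^* \to \infty$ a.s. Second, the classical a.s.\ asymptotics for Gibbs-type priors with $\sigma \in (0,1)$ give $K^{\tilde q}_m / m^\sigma \to S_\sigma$ and $M^{\tilde q}_{m,r}/m^\sigma \to \frac{\sigma (1-\sigma)_{r-1}}{r!} S_\sigma$ as $m \to \infty$ \citep{Pit06, Deb15}, whence $K^{\tilde q}_m/m \to 0$ and $M^{\tilde q}_{m,1}/m \to 0$ because $\sigma < 1$ (for $\sigma \le 0$ the corresponding counts grow even more slowly, so these two limits persist). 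Evaluating the Gibbs-type limits along the random but a.s.\ divergent subsequence $m = n^*$ and using $n^*/n \to \beta$, I obtain $K_n/n = (1 - n^*/n) + (K^{\tilde q}_{n^*}/n^*)(n^*/n) \to 1 - \beta$ and, identically, $M_{n,1}/n \to 1 - \beta$; for $r \ge 2$, $M_{n,r}/n^\sigma = (M^{\tilde q}_{n^*,r}/(n^*)^\sigma)(n^*/n)^\sigma \to \frac{\sigma(1-\sigma)_{r-1}}{r!}S_\sigma\, \beta^\sigma$.

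For the stochastic dominance I would build an explicit monotone coupling of the two models. Draw a single realization of $\tilde q$, i.i.d.\ sequences $Y_1, Y_2, \ldots \sim \tilde q$ and $W_1, W_2, \ldots \sim P_0$, and i.i.d.\ uniforms $U_1, \ldots, U_n$; set $J_i^{(\ell)} = \indic(U_i \le \beta_\ell)$ and $X_i^{(\ell)} = J_i^{(\ell)} Y_i + (1 - J_i^{(\ell)}) W_i$ for $\ell = 1, 2$. Then each $X^{(\ell)}$ reproduces \eqref{eq:hgibbs_aug} and hence has the law of a size-$n$ sample from the contaminated prior with weight $\beta_\ell$, while $\beta_1 < \beta_2$ forces $S_1 := \{i : J_i^{(1)} = 1\} \subseteq S_2 := \{i : J_i^{(2)} = 1\}$. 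Writing $D_\ell$ and $G_\ell$ for the number of distinct values and of singletons among $\{Y_i : i \in S_\ell\}$, the decomposition reads $K_n(\beta_\ell) = (n - |S_\ell|) + D_\ell$ and $M_{n,1}(\beta_\ell) = (n - |S_\ell|) + G_\ell$. The key combinatorial fact is that passing from $S_1$ to $S_2$ adds $t := |S_2| - |S_1|$ observations, and adding a single observation raises the number of distinct values by at most $1$ and the number of singletons by at most $1$; hence $D_2 - D_1 \le t$ and $G_2 - G_1 \le t$. Since the contaminant term $n - |S_\ell|$ decreases by exactly $t$, we get $K_n(\beta_1) - K_n(\beta_2) = t - (D_2 - D_1) \ge 0$ and $M_{n,1}(\beta_1) - M_{n,1}(\beta_2) = t - (G_2 - G_1) \ge 0$ pointwise on this coupling, which yields the asserted stochastic dominance.

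The step I expect to be most delicate is the stochastic dominance for $M_{n,1}$: the number of singletons is not monotone in the sample size (an added observation may turn a singleton into a doubleton), so the naive inequality fails and one must instead bound the increment of $G_\ell$ by the number of added points, exactly compensating the loss of $t$ structural singletons. A secondary technical point is the legitimacy of evaluating the Gibbs-type a.s.\ limits along the random subsequence $n^*$, which is justified since $n^* \to \infty$ a.s.\ and the underlying convergence holds simultaneously for all large $m$.
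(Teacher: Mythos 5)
Your proposal is correct, and its asymptotic half coincides with the paper's own argument: the paper decomposes $K_n(\beta) = \tilde K_{n-\bar M_n(\beta)} + \bar M_n(\beta)$ and $M_{n,1}(\beta)=\tilde M_{n-\bar M_n,1}+\bar M_n(\beta)$ (your $n-n^*$ is its $\bar M_n$), applies the strong law to $\bar M_n/n$, and evaluates the Gibbs-type $\sigma$-diversity limits along the a.s.\ divergent random subsequence, exactly as you do, including the factor $(n-\bar M_n)^\sigma/n^\sigma \to \beta^\sigma$ for $r\ge 2$. Where you genuinely diverge is the stochastic dominance. The paper argues one-directionally: starting from the $\beta_2$ model it thins the discrete assignments with independent $\mathrm{Bern}(\beta_1/\beta_2)$ variables $U_i$, so that $\sum_i U_i J_i(\beta_2)\sim\mathrm{Binom}(n,\beta_1)$, bounds $K_n(\beta_2)$ above by a quantity \emph{equal in law} to $K_n(\beta_1)$ using the same combinatorial fact you isolate (removing an observation lowers the number of distinct values by at most one), and then merely asserts that $M_{n,1}$ follows ``along similar lines''. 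Your uniform-threshold coupling $J_i^{(\ell)}=\indic(U_i\le \beta_\ell)$ with shared $Y$- and $W$-sequences instead realizes \emph{both} samples on a single probability space with $S_1\subseteq S_2$, reproduces the augmented model \eqref{eq:hgibbs_aug} marginally, and yields the pathwise inequalities $K_n(\beta_1)\ge K_n(\beta_2)$ and $M_{n,1}(\beta_1)\ge M_{n,1}(\beta_2)$ directly; in particular, your observation that an added point can raise the singleton count by at most one (even though it may also destroy a singleton, so naive monotonicity fails) supplies precisely the detail that the paper's ``similar lines'' remark glosses over for $M_{n,1}$. Your aside covering $\sigma\le 0$ for the two linear limits, and your explicit justification for passing a.s.\ limits along the random subsequence $n^*$, are harmless strengthenings not spelled out in the paper, which tacitly works with $\sigma\in(0,1)$; what the paper's thinning construction buys in exchange is that it never needs the two samples to live on a common space, only a distributional comparison, but your coupling is the more transparent route and proves strictly more (a monotone coupling, not just first-order dominance).
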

The first part of Proposition \ref{prp:asymptotic} is a result of first order stochastic dominance and it clarifies the effect of the contaminant measure in the model \eqref{eq:HGT_def}. As $\beta$ decreases, the number of distinct values and the number of singletons out of $X_{1:n}$ increases.   
By noticing that the case $\beta=1$ corresponds to a Gibbs-type prior, it is now apparent that our model has the advantage to increase (in mean) the number of distinct values and the number of singletons: the smaller beta, the higher $\E [K_n (\beta)]$ and $\E [M_{n,1} (\beta)]$.
The second part of Proposition \ref{prp:asymptotic} tells us that the number of distinct values $K_n$ and the number of unique values scale linearly with $n$: this is a remarkable difference with respect to Gibbs-type priors. Indeed, as $n \to +\infty$, for Gibbs-type priors both $K_n$ and $M_{n,1}$ grows as $n^\sigma$ \citep{Pit06}. Also, the asymptotic behavior of $M_{n,r}$ remains unchanged with respect to Gibbs-type priors, apart for the presence of the factor $\beta^\sigma$. This asymptotic behavior clarifies the role of the contaminant measure $P_0$ in \eqref{eq:HGT_def}, which produces an inflation of the number of singletons, and consequentially of the number of unique elements, but it is not acting on higher frequencies values.
We now specialize the results for the contaminated Pitman-Yor process.

\begin{example}[contaminated Pitman-Yor (continued)]  \label{ex:PY3}
	As for contaminated Pitman-Yor priors of Example \ref{ex:cPY1}, it is possible to evaluate the expected value of $M_{n,r}$ and $K_n$, in particular we have obtained:
	\begin{align*}
	\E [M_{n,1}] &=  n (1-\beta) +  n \beta  \E [(\beta B_1 +(1-\beta))^{n-1}] \\
	\E [M_{n,r} ] &= \frac{(1-\sigma)_{r-1}}{(\vartheta+1)_{r-1}} \binom{n}{r} \beta^r \E[(B_r \beta +1-\beta)^{n-r}], \quad \text{if } r \geq 2 \\
	\E [K_n]& = \frac{\vartheta}{\sigma} \E [ (B_1\beta +1-\beta)^n ] +\frac{n \beta}{\sigma} \E [B_1  (B_1\beta+1-\beta)^{n-1}]-\frac{\vartheta}{\sigma} +n (1-\beta)
	\end{align*}
	where $B_r $ is a Beta random variable with parameters $(\vartheta+\sigma, r-\sigma)$, as $r \geq 1$.
	See Section \ref{sec:app_cPY} of the Appendix for further details.  In Figure \ref{fig:prior_plots}, we compare the behavior of the expected values of the statistics $K_n$, $M_{n,1}$ and $M_{n,2}$  in the Pitman-Yor case with the same quantities for the contaminated model. It is apparent that for the latter model the two curves of $\E[K_n]$ and $\E[M_{n,1}]$ grow faster as function of $n$, with respect to the Pitman-Yor model. 
	We finally underline that, resorting to the results by \cite{Fav13}, one may face prediction for a large number of statistics arising in species sampling models. Indeed, in Section \ref{sec:app_cPY} of the Appendix, we evaluated the posterior expected value of the following meaningful statistics: i)  $K_m^{(n)}$, which denotes
	the number of distinct species out a future sample  $X_{n+1:n+m}= (X_{n+1}, \ldots , X_{n+m})$ not yet observed in the initial sample $X_{1:n}$; ii) $N_{m,r}^{(n)}$, which denotes the number of new and distinct observations recorded with frequency $r$ out of the additional sample $X_{n+1:n+m}$, hitherto unobserved in the initial sample of size $n$.
	All these posterior expected values display closed form expressions (see Section \ref{sec:app_cPY} for details), which depend  not only on $n$ and $k$, as for all the class of Gibbs-type priors \citep{Bat_17}, but also on the number of singletons $m_1$, thus improving the flexibility of the Pitman-Yor process. Building upon the results of \cite{Fav13}, one may derive formulas also for  contaminated Gibbs-type random partitions.
	\begin{figure}
		\centering
		\includegraphics[width=0.9\textwidth]{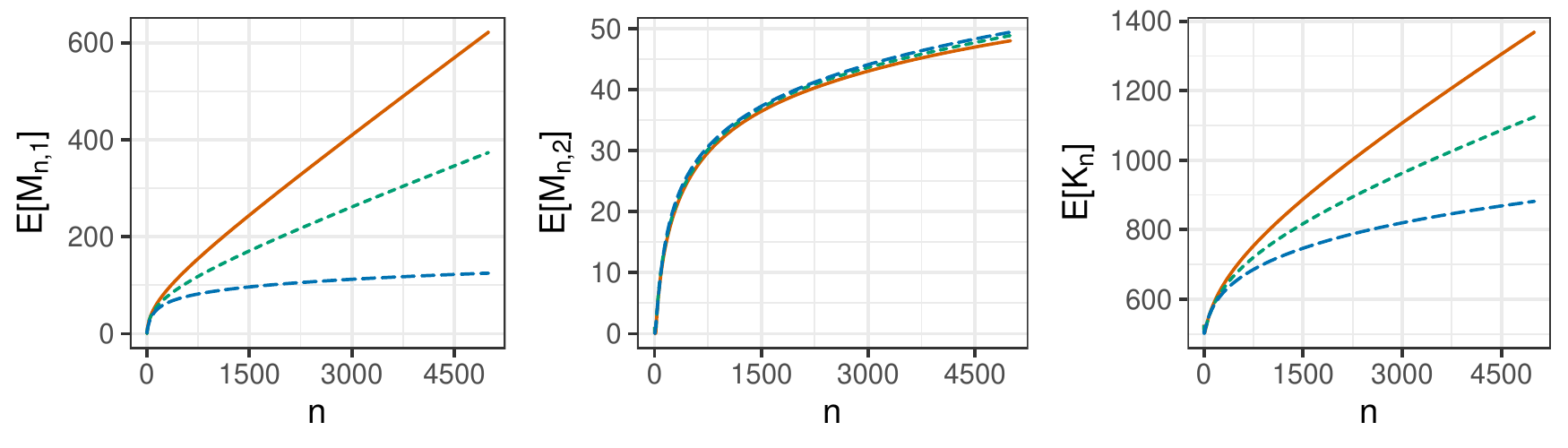}
		\caption{Curves of $\E[K_n]$, $\E[M_{n,1}]$  and $\E[M_{n,2}]$ as $n$ increases, for the contaminated Pitman-Yor model with $\beta=0.9$ (red), the contaminated Pitman-Yor model with $\beta=0.95$ (green), and the Pitman-Yor model (blue). The parameters are $\vartheta=50$ and $\sigma=0.2$.}\label{fig:prior_plots}
	\end{figure}
\end{example}

\section{Mixtures of contaminated Gibbs-type priors} \label{sec:mixture}

Contaminated Gibbs-type priors are not restricted only to species sampling models, but they can be used also as mixing measures to build contaminated mixture models.
Mixture models in Bayesian nonparametrics were early introduced by \citep{Lo84} for the Dirichlet process mixtures of univariate Gaussian distribution case, and later extended in several directions, by considering different kernel functions or mixing measures. 
The standard general framework can be described as follows. It is assumed that  observations are $\R^d$-valued random elements generated from a random density $\tilde f(y) = \int_\Xi \kernel(y;\xi)  p(\D \xi)$, where $\kernel(y; \xi):  \R^d \times \Xi  \to \R^+$ is a kernel and $\Xi$ is  a general Polish space.
Furthermore, the mixing measure $ p$ is usually assumed to belong to a specific class of discrete random probability measures.
If one denotes by $\xi_1, \ldots , \xi_n$ the latent variables corresponding to a sample of size $n$ from  $p$, the standard mixture model may be expressed in the following hierarchical form
\begin{equation}
\label{eq:mix_model}
\begin{split}
Y_{i} | \xi_i  \simind   \kernel(\, \cdot \, ;\xi_i) , \quad \xi_i | {p}  \simiid {p} \\
\end{split}
\end{equation}
for any $i =1, \ldots , n$.
We remark that the model \eqref{eq:mix_model} describes a general formulation of a mixture models. Indeed a realization of $\tilde{f}$ can be a continuous distribution, a discrete distribution \citep[e.g.][]{Krn08}, or a distribution defined on more abstract spaces, depending on the kernel function $\kernel$. Nowadays it is an established opinion in the applied statistics framework that mixture models are flexible tools for density estimation and model-based clustering analysis \citep{Fru19}. 

Here we propose to extend such framework by choosing as mixing measure $p$ the contaminated Gibbs-type prior $\tilde p$. 
Thanks to the definition of $\tilde p$, which is a linear convex combination of two elements, we can decompose the mixture in two terms, a first term corresponding to the discrete part of $\tilde p$ and a second term which corresponds to the diffuse component,
\begin{equation}\label{eq:model2}
\tilde f(y) = \beta \sum_{j=1}^{\infty} p_j \kernel(y;Z_j) + (1 - \beta) \int_\Xi \kernel(y;\xi) P_0(\D \xi)
\end{equation}
where the last equality holds in force of the almost sure discreteness of $\tilde q= \sum_{j \geq 1} p_j \delta_{Z_j}$. The first term on the r.h.s. of equation \eqref{eq:model2} describes the standard random mixture components of the model, while the second term corresponds to a different probabilistic mechanism contaminating the mixture. A noteworthy application of this model is to the cases where outliers are possibly present in the data. Indeed, according to well developed classical theory \citep{Fru19} they can be interpreted as generated by a different random process with respect to the other observations.

If one considers $Q_0 \neq P_0$, she can specify the contaminant measure $P_0$ depending on a specific scenario of interest: if our prior opinion is translated into contaminant observations on a particular subset of $\R^d$, we can force $P_0$ to shrink its mass on such subset. On the other hand, if we aim to model possible contaminant observations spreading over the entire support, we can specify $P_0$ over-disperse with respect to $Q_0$.

From the hierarchical formulation \eqref{eq:mix_model}, it is apparent that the random probability measure $\tilde p$ governs the distribution of the latent parameters $\xi_i$s. Thus, posterior inference for mixture models may  be performed by exploiting the results described in the previous sections deriving a marginal sampling strategy in the spirit of the seminal works of \citet{Esc88} and \citet{Esc95}.
See Section \ref{sup:mix} of the Appendix for a description of a possible sampling strategy to perform posterior inference with mixtures of contaminated Pitman-Yor processes. 

\section{Illustrations}  \label{sec:exp}
\subsection{Simulation studies}  \label{sec:simulations}

In the Appendix we carried out some simulation studies to illustrate the use of the contaminated Pitman-Yor process. We first tested the proposed model in discrete scenarios by simulating  observations from the contaminated Pitman-Yor process of Example \ref{ex:cPY1}, with different values of the parameters $\beta$, $\vartheta$ and $\sigma$. See Section \ref{sup:sim_disc}.
We faced posterior inference on the main parameters of the model, on $\beta$ and the number of structural singletons $\bar{M}_{m_1}$ by exploiting Algorithm \ref{algo:discrete}. Our strategy provides good results in terms of parameters' estimation, also in comparison with the Pitman-Yor process, which, for example, overestimates the discount parameter of the model in presence of contamination of the data. With the proposed model, we also  obtain reliable estimates of the weight $\beta$ and the number of structural singletons.

We then moved to a simulation study within the framework of mixture models in Section \ref{sec:cont_simu}. We tested the model on different simulated scenarios, where observations are generated from a mixture of Gaussian distributions, with the inclusion of some outliers in the sample. See Section \ref{sup:mix} of the Appendix for further details on the data generating process. We compare posterior inference faced with three different Gaussian mixture models, where the random mixing measure is specified as: i) a contaminated Pitman-Yor with $P_0 = Q_0$; ii) a contaminated Pitman-Yor with $P_0 \neq Q_0$, forcing an over-dispersion of the contaminant measure; iii) a Pitman-Yor.
Posterior inference is carried out on the basis of a marginal sampling scheme with the goal of outlier detection (see Algorithm \ref{algo:mixture} in Section \ref{sup:mix} of the Appendix). Note that  we consider an observation to be an outlier iff it is clustered as a singleton in the posterior point estimate of the latent partition dictated by the data. From Table \ref{tab:sing} in the Appendix, one may realize that the Pitman-Yor mixture model is not appropriate to perform outlier detection, indeed only few clusters with frequency one are detected. On the other hand, the two specifications of the contaminated Pitman-Yor mixture models produce reasonable estimates of the number of contaminants in the data, and the model with  $P_0 \neq Q_0$ displays appreciable superior performances.

\subsection{The North America Ranidae dataset}  \label{sec:discrete_app}

We consider a set of species detection data from the Global Biodiversity Information Facility project \citep{GBI21}. The project is an extensive database consisting in record of species found across the world, where for each individual is reported the taxonomy, location and possibly other relevant information. Our sample consists of $n = 131\,204$ observations belonging to $k = 619$ distinct species of the Ranidae family observed in North America, and identified by their scientific name. 
Among the $k=619$ species, $m_1 = 296$ species were observed only once in the sample, creating a possible inflation of the number of elements with frequency one. Such inflation might be caused by miss reported scientific name of the observed animals. We aim to investigate the benefit of including a contaminant measure in the prior model specification by comparing posterior inference when we use a contaminated and a standard Pitman-Yor process.  We choose non-informative prior specifications  for the parameters, namely $\vartheta \sim \mathrm{Gamma}(2, 0.02)$ and $\sigma , \beta \sim \mathrm{Unif}(0,1)$. We carried out posterior inference by  exploiting Algorithm \ref{algo:discrete} described in Section \ref{sup:disc} of the Appendix, and similarly for the standard model. Refer to Section \ref{sup:disc_app} for diagnostic summaries and algorithmic details.

\begin{figure}
	\centering
	\includegraphics[width = 0.9 \textwidth]{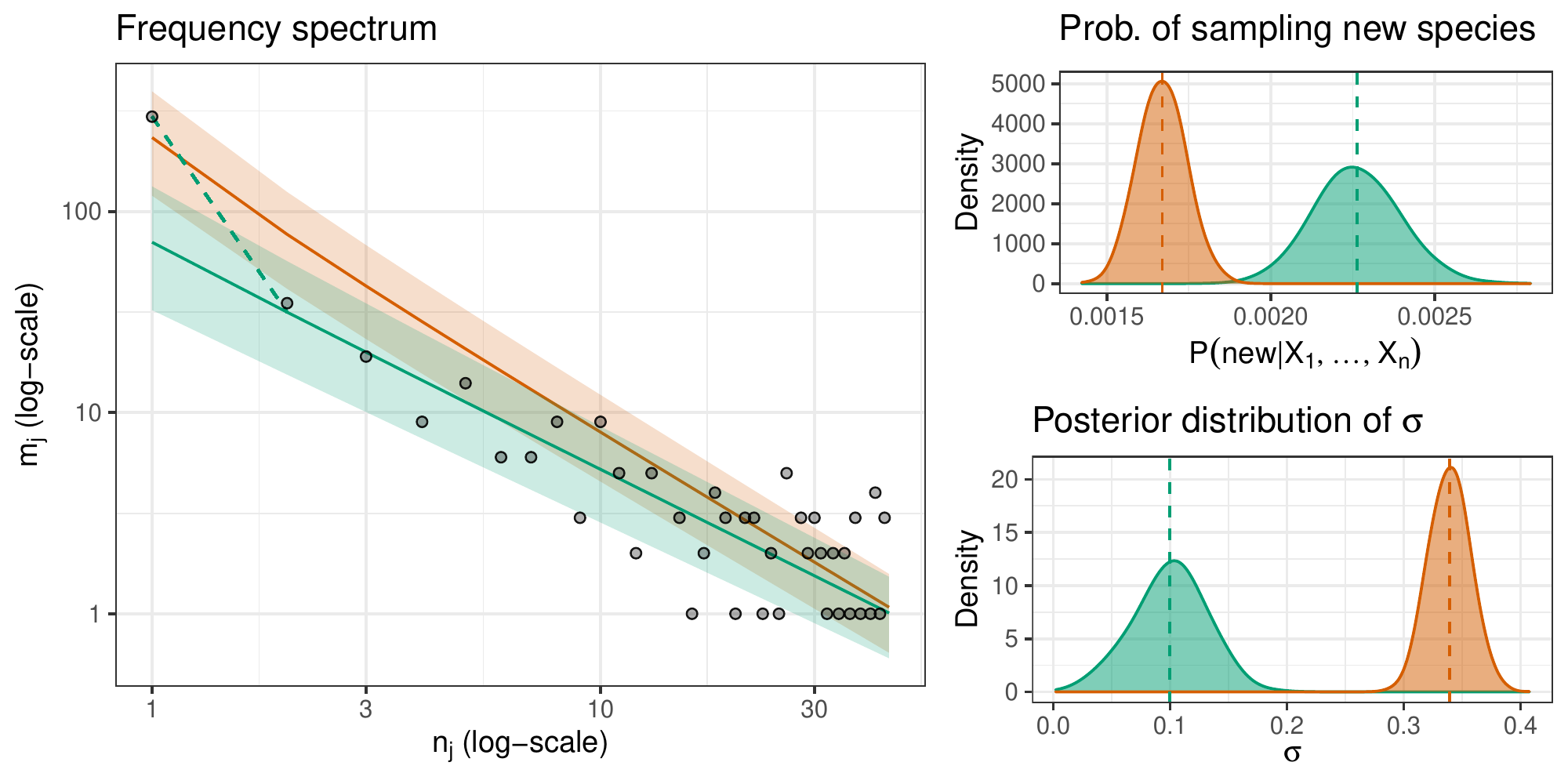}
	\caption{Posterior summaries for contaminated Pitman-Yor model (green) and Pitman-Yor model (orange). Left panel: frequency spectrum of the first non-empty frequencies, with the posterior expectation of the discrete part of the two models, shaded bands represent $90\%$ posterior credible intervals; the dashed line corresponds to the inflation of the diffuse component;. Right-top panel: posterior probability of sampling a new species. Right-bottom: posterior distribution of $\sigma$.}	
	\label{fig:disc_app}
\end{figure}

Figure \ref{fig:disc_app} clarifies how the presence of a large number of species observed only once leverages the estimation of the parameters in the Pitman-Yor model, while the use of a  contamination component helps to obtain a much more suitable modeling of the data. Indeed, in the latter case, some of the observations with frequency $1$ are assigned to the diffuse component. As consequence of the excessive number of singletons, the estimated posterior distribution of the frequency spectrum is remarkably different on small values of the support, as emphasized in the left panel of Figure \ref{fig:disc_app}. Furthermore, both the probability of sampling a new species and the posterior distribution of $\sigma$ in  the Pitman-Yor case are translated with respect to the contaminated model. Additional posteriors summaries are reported in the Appendix: the posterior distributions of $\vartheta, \beta$ and $\bar{M}_{m_1}$.

We finally consider the task  of predicting the number of new species and the number of new species observed with a given frequency in a follow-up sample,  given an initial training sample. We have retained the $80\%$ of the $n$ data for purposes of training, and the remaining $m$ data points are used as a test set. We focused on estimation of: i) $K_{m}^{(n-m)}$, the distinct number of new species in a follow-up sample hitherto unobserved in the initial training sample of size $n-m$; ii) $N_{m,1}^{(n-m)}$, the number of new species observed with frequency one in an additional sample of size $m$, hitherto unobserved in the training dataset. The posterior expectations of $K_{m}^{(n-m)}$  and $N_{m,1}^{(n-m)}$ are evaluated using the corresponding closed-form expressions, reported in Equations \eqref{eq:Kmn_posterior} and \eqref{eq:Nmr1} respectively, for the contaminated Pitman-Yor model.
The predicted values are compared with the true ones, obtained by extrapolating to the remaining
$m$ data. We repeated the experiment  $1\,000$ times in order to asses variability.
Figure \ref{fig:add_sample} shows the cross-validated distributions of $K_{m}^{(n-m)}$ and $N_{m,1}^{(n-m)}$ when we exploit the contaminated Pitman-Yor model in comparison with the predicted values obtained by using the Pitman-Yor process. The average true value is represented with a dashed black line.

\begin{figure}
	\centering
	\includegraphics[width = 0.45 \textwidth]{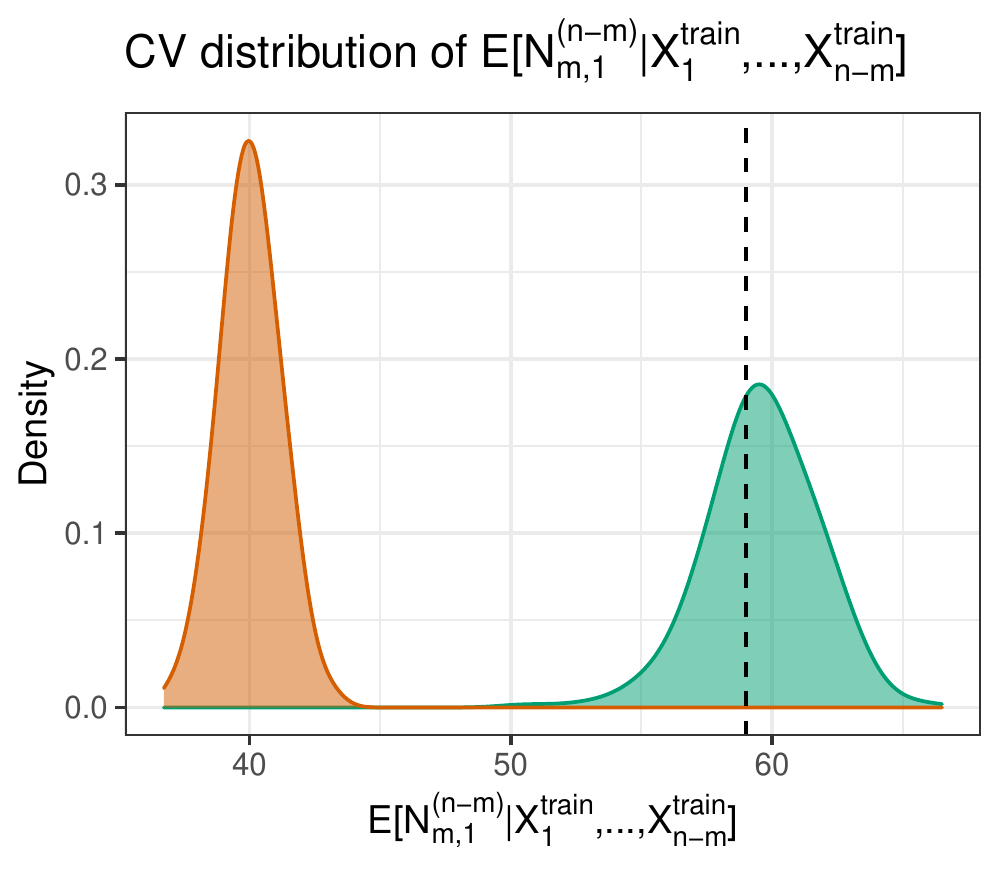}\centering
	\includegraphics[width = 0.45 \textwidth]{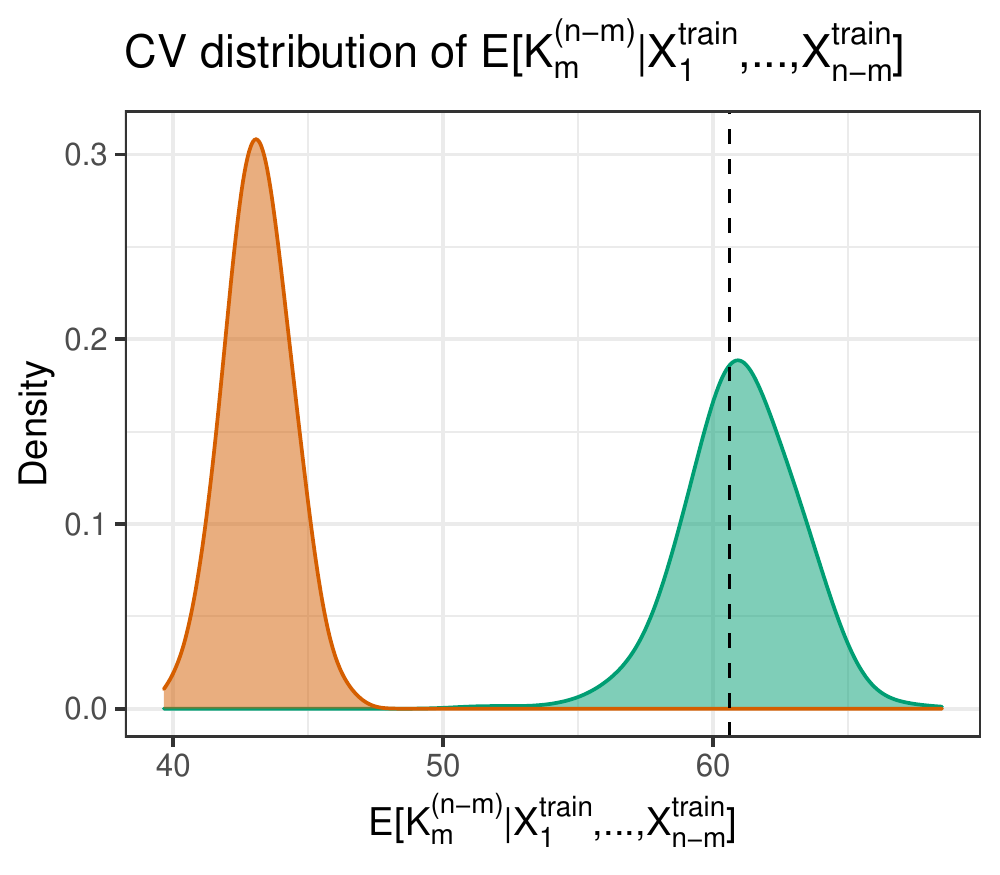}
	\caption{Cross-validated distributions of $K_{m}^{(n-m)}$  and $N_{m,1}^{(n-m)}$, for the contaminated Pitman-Yor model (green) and the Pitman-Yor model (orange). The black dashed lines correspond to the true values. 
	}\label{fig:add_sample}
\end{figure}

From Figure \ref{fig:add_sample}, it is apparent how the contaminant measure in the model specification can be crucial also for its predictive properties. Indeed the cross-validated distributions of $N_{m,1}^{(n-m)}$ and $K_{m}^{(n-m)}$ for the contaminated model, conditionally on an observed sample, shrink to the corresponding average of the observed values (black dashed line), while the distributions for the model without a contaminant term provides a systematic error in prediction. Such behavior is also confirmed by the mean squared error ($\mathrm{MSE}$) of the predictions, which is bigger for the Pitman-Yor (PY) model with respect to the contaminated Pitman-Yor (CPY), indeed: $\mathrm{MSE}(N_{m,1}^{(n-m)}) = 74.12$ for CPY and $\mathrm{MSE}(N_{m,1}^{(n-m)}) = 423.36$ for PY; $\mathrm{MSE}(K_{m}^{(n-m)}) = 75.48$ under CPY and $\mathrm{MSE}(K_{m}^{(n-m)}) = 370.67$ under PY. See Section \ref{sec:inf_new_samp} of the Appendix for further details on the cross-validation study. 
Finally we stress that this example highlights the strong lack of robustness of the pure Pitman-Yor model. Indeed, drastically erroneous inferential conclusions are caused by relatively few singletons ($m_1=296 $) compared to the total number of observations ($n = 131\,204 $).

\subsection{Analysis of the NGC 2419 data}\label{sec:cont_app}
We consider a set of data composed by $n = 139$ stars, possibly belonging to the globular cluster NG 2419 and sharing the same galactic center. The data were early introduced and studied by \citet{Iba11}. 
For each observation we have measurements of $d = 4$ different variables: the two-dimensional projection on the plane of the position of the star $(D_1, D_2)$, the line of sight velocity $V$ on a logarithmic scale, and the metallicity of the star $[Fe/H]$ on a logarithmic scale, which is a measure of the abundance of iron relative to hydrogen. We denote by $Y_i = (D_{1,i}, D_{2,i}, V_i, [Fe/H]_i)$ the
$i$th observed vector. 

A crucial problem for the astronomical community is to identify which stars belong to the globular cluster, and which star are contaminants (or outliers), to properly study the dynamic of a group of stars. To this aim, we consider a contaminated Pitman-Yor mixture model, specified with a multivariate Gaussian kernel function $\kernel(\, \cdot \, ; (\mu, \Sigma))$, with expectation $\mu$ and covariance matrix $\Sigma$. We further assume a base measure conjugate to the kernel function, i.e., $Q_0 \sim \mathrm{NIW}(\mu_0, \kappa_0, \nu_0, S_0)$ is a Normal-Inverse-Wishart distribution. We consider the same distributional law also for the diffuse component:  $P_0 \sim \mathrm{NIW}(\mu_1, \kappa_1, \nu_1, S_1)$. 
We specify the base measure of the discrete component $Q_0$ by setting $\mu_0$ equal to the sample mean of the data, $\kappa_0 = 1$, $\nu_0 = d + 3 = 7$, and $S_0$ equals to the diagonal of the sample variance of the data. We further specify the parameters of the contaminant measure as follows: $\mu_1$ equals the sample mean of the data, $\kappa_1 = 0.1$, $\nu_1 = d + 3 = 7$, and $S_1$ matches the sample variance of the data, in order to force an over-disperse contaminant measure with respect to the base measure. We complete the model specification assuming vague priors for the parameters of the mixing measure, $\vartheta \sim \mathrm{Gamma}(2,0.02)$ and  $\sigma , \beta\sim \mathrm{Unif}(0,1)$. Posterior inference is carried out by exploiting the sampling scheme described in Section \ref{sup:mix} of the Appendix, see also Section \ref{sup:mixt_app} for diagnostics. We exploit the decisional  approach based on the variation of information loss function \citep{Wad18,Ras18}
to provide an optimal posterior point estimate of the latent partition induced by the data.

The results are summarized in Table \ref{tab:vsIBA} in comparison with the previous clusters identified by \cite{Iba11}. Within the $16$ stars identified as contaminants, $4$ belongs to the globular cluster identified by \citet{Iba11}, $5$ stars to the likely globular cluster group, and $7$ to the contaminants. Most of the stars of the main estimated cluster, denoted by $A$ in Table \ref{tab:vsIBA}, belong to the globular cluster of \citet{Iba11}. We have also recovered two additional clusters: a cluster of stars mainly belonging to the globular cluster in \citet{Iba11}, and a cluster with two contaminant stars. Our findings are coherent with the previous literature, but providing a more conservative detection of the contaminant stars. Figure \ref{fig:application_plot} of the Appendix shows the optimal partition with the mean of the estimated posterior random density. From the contour lines in Figure \ref{fig:application_plot}, we can see how the inclusion of a diffuse component is producing a smoothed estimate of the expectation of the posterior random density, while for the standard Pitman-Yor mixture model the expectation of the posterior random density shows some peaks in correspondence of the contaminants, as shown in Figure \ref{fig:application_plot_PY}. We further compare our findings with the latent partition obtained using a Pitman-Yor mixture model, as described in Section \ref{sec:mixt_comparison} of the Appendix: the optimal latent partition recovered with the contaminated Pitman-Yor prior is characterized by a larger main globular cluster and a higher number of singletons. 

\begin{table}
	\centering
	\begin{tabular}{llc*{5}{c}}
		\\
		&       && \multicolumn{5}{c}{CPY partition} \\ 
		&       && \textit{Singletons}    & \textit{A}   & \textit{B}    & \textit{C}  \vspace{5pt}\\
		&&\multicolumn{1}{c}{\textit{total}}&\textit{16}&\textit{115}&\textit{4}&\textit{4}\\ 
		\multirow{3}{*}{\citet{Iba11}} & \textit{globular cluster}   &\multicolumn{1}{c}{\textit{118}}& 4  & 109 & 3 &2   \\
		& \textit{likely globular cluster} &\multicolumn{1}{c}{\textit{12}}& 5 & 6 & 1 &0 \\
		& \textit{contaminants} & \multicolumn{1}{c}{\textit{9}}& 7 &0 &0&2    \\
	\end{tabular}
	\caption{Comparison between the partition described in \citet{Iba11} and optimal partition estimated using a contaminated Pitman-Yor mixture model.}
	\label{tab:vsIBA}
\end{table}

\section{Discussion}\label{sec:discussion}

We introduced a new family of priors outside the Gibbs-type one which are still tractable from an analytical viewpoint.  According to the characterization by \cite{Bat_17}, the predictive probability weights of Gibbs-type priors cannot depend on the number of observations recorder with frequency one  $M_{n,1}$ in  the initial sample.
With the inclusion of a contaminant component, we have enriched the predictive structure  of Gibbs-type priors by including the additional sampling information on $M_{n,1}$.
Moreover we discussed the usage of contaminated Gibbs-type priors in two situations: i) for discrete data in presence of and excess of ones; ii) in mixture models to account for outliers. Nevertheless, the use of contaminated Gibbs-type priors is not restricted to the scenarios presented in this manuscript, but they can be relevant in other applications, where the presence of elements with frequency one is a key inferential interest. 

Firstly, contaminated Gibbs-type priors could be of potential interest in the analysis of disclosure risk for microdata. Microdata files typically contains two types of categorical information about individuals: identifying and sensitive information. Before releasing a dataset, statistical agencies estimate different measures of disclosure risk, which are typically based on the number of sample records which have a unique combination of the categorical variables and that are not shared with any element of the entire population. See, e.g., \cite{Bet_90,Ski_02,Ski_94} for possible definitions and estimators of disclosure indexes. In the disclosure risk framework, the random variable $\bar{M}_{m_1}$ appearing in our model represents a measure of disclosure, i.e., the number of records that contain a unique element both in the sample and in the whole population.

Language modeling constitutes another application area when one is interested to estimate the number of \textit{hapax legomena} in a corpus of documents. An hapax legomena is indeed a word that occurs only once in the entire production of an author. 
These unique words are particularly important since they have been recognized as peculiar
usage of words by specific authors, and they represent an interesting problem to study
from a statistical perspective. See e.g. \citet{Baa01} for further details on word frequency distributions. In such a framework, one may use a contaminated Gibbs-type prior to estimate the number of hapax legomena on the basis of the latent quantity $\bar{M}_{m_1}$. 

%
Finally a contaminated Gibbs-type model may be exploited to test the presence of contaminant observations in a set of data by selecting a spike and slab prior \citep{Mit88} for the parameter $\beta$ in \eqref{eq:HGT_def}. More precisely one may specify  a prior for $\beta$ which assigns positive mass 
to the point $\beta =1$. A prior specification of this type may  be exploited to perform posterior inference on the presence of a contaminant term in the model, looking at the posterior probability of $\{ \beta =1 \}$. 
Work on these points is ongoing.

\section*{Acknowledgement}
The authors gratefully acknowledge the financial support from the Italian Ministry of Education, University and Research (MIUR), ``Dipartimenti di Eccellenza" grant 2018-2022, and the DEMS Data Science Lab for supporting this work through computational resources. Federico Camerlenghi received funding from the European Research Council (ERC) under the European Union's Horizon 2020 research and innovation programme under grant agreement No 817257. 

\bibliography{bibliography}

\appendix

\section{Proofs}

\subsection{Proof of Proposition \ref{thm:prior_quant}}\label{sup:proof_prior}

The first assertion of Proposition \ref{thm:prior_quant} is immediate: indeed $\E[\tilde p(A)] = \beta \E[\tilde q(A)] + (1 - \beta) P_0(A)$, and $\E[\tilde q(A)]= Q_0(Q)$ since $\tilde{q}$ is a Gibbs-type prior. To prove the second assertion, note that
\[
\mathrm{cov}(\tilde p(A), \tilde p(B)) = \mathrm{cov}(\beta\tilde q(A) + (1-\beta)P_0(A), \beta\tilde q(B) + (1-\beta) P_0(B)) = \beta^2 \mathrm{cov}(\tilde q(A), \tilde q(B)),
\]
with 
\begin{equation}\label{eq:cov_sup}
\mathrm{cov}(\tilde q(A), \tilde q(B)) = \E[\tilde q(A)\tilde q(B)] - \E[\tilde q(A)]\E[\tilde q(B)].
\end{equation}
We focus on the evaluation of the first term in \eqref{eq:cov_sup} 
\[
\E[\tilde q(A)\tilde q(B)] = \P (X_1 \in A, X_2 \in B)
\]
where $ X_1, X_2 \mid \tilde q \simiid \tilde q$, thus, we get:
\[
\begin{split}
\P (X_1 \in A, X_2 \in B) &= \E [\P (X_2 \in B | X_1)\cdot \indic_A (X_1)]\\
&= \E\left[ \left((1 - \sigma)\frac{V_{2,1}}{V_{1,1}} \indic_B(X_1) + \frac{V_{2,2}}{V_{1,1}} Q_0(B)\right)\indic_A(X_1) \right] ,
\end{split} 
\]
where we used the predictive distribution of a Gibbs-type prior.
Then, the previous expression equals 
\[
\E[\tilde q(A)\tilde q(B)]=(1 - \sigma)\frac{V_{2,1}}{V_{1,1}} Q_0(A\cap B) + \frac{V_{2,2}}{V_{1,1}} Q_0(A)Q_0(B).
\]
We substitute the previous term in \eqref{eq:cov_sup}, and we obtain
\[
\begin{split}
\mathrm{cov}(\tilde q(A), \tilde q(B)) &= (1 - \sigma)\frac{V_{2,1}}{V_{1,1}} Q_0(A\cap B) + \frac{V_{2,2} - V_{1,1}}{V_{1,1}} Q_0(A)Q_0(B) \\
&= (1 - \sigma)\frac{V_{2,1}}{V_{1,1}} [Q_0(A\cap B) - Q_0(A)Q_0(B)]
\end{split}
\]
where the last equality holds true in force of the recurrence relation of the weights $\{ V_{n,k} : \; n \geq 1, \, 1 \leq k \leq n \}$, which implies $V_{2,2} - V_{1,1} = - (1-\sigma) V_{2,1}$.

\subsection{Proof of Theorem \ref{thm:EPPF_Gibbs}}\label{prof:th1}

Let us denote by $\mu := 1/2P_0+1/2 Q_0$ the diffuse probability measure on $(\X, \Xcr)$, with respect to which both $P_0$ and $Q_0$ are absolutely continuous measures.
We would like to evaluate the EPPF using the definition \eqref{eq:EPPF_def} and focusing on the contaminated Gibbs-type prior case
\begin{align}
\Pi_k^{(n)} (n_1, \ldots , n_k) & = \int_{\X^k} \E \left[  \prod_{i=1}^k (\beta \tilde{q} (\D x_i^*) + (1-\beta) P_0(\D x_i^*))^{n_i} \right] \nonumber\\
& = \int_{\X^k} \E \left[ \prod_{i=1}^k \sum_{j_i=0}^{n_i} \binom{n_i}{j_i} \beta^{j_i} \tilde{q}^{j_i} (\D x_i^*)   (1-\beta)^{n_i-j_i} P_0^{n_i-j_i}(\D x_i^*)\right]. \label{eq:EPPF_1}
\end{align}
We now concentrate on the evaluation of the expected value in \eqref{eq:EPPF_1}, that can be computed as follows
\begin{equation} \label{eq:EXP_EPPF}
\begin{split}
&\E \left[ \prod_{i=1}^k \sum_{j_i=0}^{n_i} \binom{n_i}{j_i} \beta^{j_i} \tilde{q}^{j_i} (\D x_i^*)   (1-\beta)^{n_i-j_i} P_0^{n_i-j_i}(\D x_i^*)\right] \\
& \qquad =  \sum_{j_1=0}^{n_1} \ldots  \sum_{j_k=0}^{n_k} \prod_{i=1}^k  \binom{n_i}{j_i} \E \left[ \prod_{i=1}^k  \beta^{j_i} \tilde{q}^{j_i} (\D x_i^*) (1-\beta)^{n_i-j_i} P_0^{n_i-j_i}(\D x_i^*) \right].
\end{split}
\end{equation}
We now recall that $m_1= \# \{ i : \; n_i=1 \}$  is the number of observations recorded only once out of the sample of size $n$. Without loss of generality we can assume that these observations are the first $m_1$ values $X_1^*, \ldots , X_{m_1}^*$, which is tantamount to saying that $n_i =1$ for any $i=1, \ldots , m_1$ and $n_i >1$ if $i \geq m_1+1$. Neglecting the superior order terms in \eqref{eq:EXP_EPPF}, we obtain
\begin{equation*}
\begin{split}
&\E \left[ \prod_{i=1}^k \sum_{j_i=0}^{n_i} \binom{n_i}{j_i} \beta^{j_i} \tilde{q}^{j_i} (\D x_i^*)   (1-\beta)^{n_i-j_i} P_0^{n_i-j_i}(\D x_i^*)\right] \\
& \qquad =  \sum_{(j_1, \ldots , j_{m_1}) \in \{ 0,1 \}^{m_1}} \E \left[ \prod_{i=1}^{m_1}  \beta^{j_i} \tilde{q}^{j_i} (\D x_i^*) (1-\beta)^{1-j_i} P_0^{1-j_i}(\D x_i^*)  \prod_{i=m_1+1}^{k} \beta^{n_i} \tilde{q}^{n_i} (\D x_i^*)\right]\\
& \qquad\qquad\qquad\qquad
+ o \left( \prod_{i=1}^k \mu (\D x_i^*) \right).
\end{split}
\end{equation*}
We now define $\bar m_{1} =\# \{ i : \; j_i=0 \}$
which represents the number of observations generated from $P_0$, having frequency $1$. Thus,  by noticing that $\sum_{i=m_{1}+1}^k n_i = n-m_1$, we get
\begin{equation*}
\begin{split}
&\E \left[ \prod_{i=1}^k \sum_{j_i=0}^{n_i} \binom{n_i}{j_i} \beta^{j_i} \tilde{q}^{j_i} (\D x_i^*)   (1-\beta)^{n_i-j_i} P_0^{n_i-j_i}(\D x_i^*)\right] \\
& \qquad =  \sum_{(j_1, \ldots , j_{m_1}) \in \{ 0,1 \}^{m_1}}  \beta^{n-\bar m_{1}} (1-\beta)^{\bar m_1}  
\E \left[ \prod_{i=1}^{m_1} \tilde{q}^{j_i} (\D x_i^*) \prod_{i=m_1+1}^{k} \tilde{q}^{n_i} (\D x_i^*) \right] \prod_{i=1}^{m_1} P_0^{1-j_i}(\D x_i^*)
\\
& \qquad\qquad\qquad\qquad
+ o \Big( \prod_{i=1}^k \mu (\D x_i^*) \Big).
\end{split}
\end{equation*}
By integrating the previous expression over $\X^k$, we get the expression of the EPPF
\begin{equation} \label{eq:EPPF_2}
\begin{split}
\Pi_k^{(n)} (n_1, \ldots , n_k)  & =   \sum_{(j_1, \ldots , j_{m_1}) \in \{ 0,1 \}^{m_1}}  \beta^{n-\bar m_1} (1-\beta)^{\bar m_1}  \\
& \qquad\qquad\times
\int_{\X^{k-\bar m_1}} \E \left[ \prod_{i=1}^{m_1} \tilde{q}^{j_i} (\D x_i^*) \prod_{i=m_1+1}^{k} \tilde{q}^{n_i} (\D x_i^*) \right].
\end{split}
\end{equation}
We recognize that the integral in \eqref{eq:EPPF_2} is the EPPF of a Gibbs-type prior, therefore
\begin{equation} \label{eq:EPPF_3}
\begin{split}
&\Pi_k^{(n)} (n_1, \ldots , n_k)  \\
& \quad=   \sum_{(j_1, \ldots , j_{m_1}) \in \{ 0,1 \}^{m_1}}  \beta^{n-\bar m_1} (1-\beta)^{\bar m_1}  
V_{n-\bar m_1, k-\bar m_1}  \prod_{i=m_1+1}^k (1-\sigma)_{n_i-1},
\end{split}
\end{equation}
where we have now to solve the summation over the $j_i$'s. We observe that each summand in \eqref{eq:EPPF_3} depends on the vector 
$(j_1, \ldots , j_{m_1})$ only through $\bar m_1$. Moreover note  that, fixed the value of $\bar m_1$, there are $\binom{m_1}{\bar m_1}$ possible ways to choose $(j_1, \ldots , j_{m_1})$ so that $\sum_{i=1}^{m_1} (1 - j_i) =\bar m_1$, thus
\begin{equation*} 
\begin{split}
\Pi_k^{(n)} (n_1, \ldots , n_k)  =  \sum_{\bar m_1=0}^{m_1}  \binom{m_1}{\bar m_1} \beta^{n-\bar m_1} (1-\beta)^{\bar m_1}  
V_{n-\bar m_1, k-\bar m_1} \prod_{i=m_1+1}^k (1-\sigma)_{n_i-1},
\end{split}
\end{equation*}
and the results easily follows having realized that the sum in the previous expression is an expected value w.r.t. the distribution of the Binomial random variable $\bar M_{m_1}\sim \mathrm{Binom}(m_1, 1 - \beta)$.

\subsection{Relative ratio between EPPFs: comments} \label{sec:proof_compare_EPPF}

Recall the probability ratio defined in the paper:
\begin{equation}
\label{eq:ratio_def_app}
\mathrm{R} (n_1, \ldots  n_k ;n_1', \ldots  n_k '; n,k) :=  \frac{\Pi_k^{(n)} (n_1, \ldots , n_k)}{\Pi_k^{(n)} (n_1', \ldots , n_k')}
\end{equation}
In this section, we want to compare the probability ratio when $\Pi_k^{(n)}$ is a Gibbs-type EPPF \eqref{eq:EPPF_def_gibbs}, denoted by $\mathrm{R_{G}}$, and when it equals the EPPF of a contaminated Gibbs-type prior \eqref{eq:EPPF_convex_comb}, denoted by $\mathrm{R_{cG}}$. 
\begin{proposition} \label{prop:compare_EPPF}
	Consider two compositions  $(n_1, \ldots , n_k)$  and $(n_1', \ldots , n_k')$ deriving from two samples having the same size $n$ and the same number of distinct values $k$, and denote by $m_1 = \# \{ i : \; n_i=1 \}$ (resp. $m_1'= \# \{ i : \; n_i'=1 \}$) the number of singletons in the first (resp. second) composition. If $m_1 =m_1' $ then 
	\[
	\mathrm{R_{cG}} (n_1, \ldots  n_k ;n_1', \ldots  n_k '; n,k) = \mathrm{R_{G}} (n_1, \ldots  n_k ;n_1', \ldots  n_k '; n,k),
	\]
	whereas if $m_1 > m_1'$
	\[
	\mathrm{R_{cG}} (n_1, \ldots  n_k ;n_1', \ldots  n_k '; n,k) \geq \mathrm{R_{G}} (n_1, \ldots  n_k ;n_1', \ldots  n_k '; n,k).
	\]
\end{proposition}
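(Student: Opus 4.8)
The plan is to reduce the comparison to a single monotonicity statement about the Gibbs weights $V_{n,k}$. First I would isolate, in both EPPFs, the factor that depends on the composition only through its multiplicities. By Theorem \ref{thm:EPPF_Gibbs}, setting for fixed $n$ and $k$
\[
\Phi(m) := \sum_{\bar m =0}^{m} \binom{m}{\bar m} (1-\beta)^{\bar m} \beta^{n-\bar m} V_{n-\bar m,\, k-\bar m},
\]
the contaminated EPPF factorises as $\Pi_k^{(n)}(n_1,\ldots,n_k) = \Phi(m_1)\prod_{i=1}^{k}(1-\sigma)_{n_i-1}$ (this is just $\beta^{n-m_1}\E_{\bar M_{m_1}}[V_{n-\bar M_{m_1},k-\bar M_{m_1}}]$ written out), whereas the Gibbs-type EPPF \eqref{eq:EPPF_def_gibbs} factorises as $V_{n,k}\prod_{i=1}^{k}(1-\sigma)_{n_i-1}$. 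The crucial observation is that $\Phi(m)$ depends on the composition only through its number of singletons $m$, exactly as the constant $V_{n,k}$ does not depend on it at all.

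Taking the ratio \eqref{eq:ratio_def_app}, the products of Pochhammer symbols are common to both models, and since the two compositions share the same $n$ and $k$ the factor $V_{n,k}$ cancels in $\mathrm{R_G}$, leaving $\mathrm{R_G}=\prod_{i}(1-\sigma)_{n_i-1}/\prod_{i}(1-\sigma)_{n_i'-1}$. Consequently
\[
\mathrm{R_{cG}}(n_1,\ldots n_k; n_1',\ldots n_k'; n,k) = \frac{\Phi(m_1)}{\Phi(m_1')}\,\mathrm{R_G}(n_1,\ldots n_k; n_1',\ldots n_k'; n,k),
\]
where both $\Phi(m_1),\Phi(m_1')$ are strictly positive because their $\bar m=0$ term equals $\beta^n V_{n,k}>0$. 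When $m_1=m_1'$ the prefactor $\Phi(m_1)/\Phi(m_1')$ equals $1$ and the two ratios coincide, which settles the first assertion. For the second assertion, since $\mathrm{R_G}>0$, it suffices to prove that $\Phi(m_1)\ge\Phi(m_1')$ whenever $m_1>m_1'$, that is, that $m\mapsto\Phi(m)$ is non-decreasing.

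The main step is therefore the monotonicity of $\Phi$, which I would establish by a Pascal-rule telescoping. Writing $\rho:=(1-\beta)/\beta>0$ and $g(m):=\beta^{-n}\Phi(m)=\sum_{\bar m=0}^{m}\binom{m}{\bar m}\rho^{\bar m}V_{n-\bar m,k-\bar m}$, the identity $\binom{m+1}{\bar m}=\binom{m}{\bar m}+\binom{m}{\bar m-1}$ gives, after reindexing the second sum,
\[
g(m+1)-g(m) = \rho\sum_{\bar m=0}^{m}\binom{m}{\bar m}\rho^{\bar m}\,V_{n-\bar m-1,\,k-\bar m-1}\ \ge\ 0,
\]
because $\rho>0$ and every $V_{\cdot,\cdot}$ is non-negative. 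Telescoping from $m_1'$ up to $m_1$ involves only weights $V_{n-\bar m,k-\bar m}$ with $0\le\bar m\le m_1\le k$, all well defined (the sole boundary instance is $V_{0,0}=1$, reached only in the degenerate all-singletons case $n=k=m_1$), so no index issue arises. Hence $\Phi$ is non-decreasing, $\Phi(m_1)\ge\Phi(m_1')$, and $\mathrm{R_{cG}}\ge\mathrm{R_G}$.

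I expect the only genuine obstacle to be verifying the monotonicity of $\Phi$; once the Pascal-rule identity above is in hand it follows purely from the sign of $\rho$ and the non-negativity of the Gibbs weights, without ever invoking the recurrence relation satisfied by the $V_{n,k}$'s. The rest of the argument is the bookkeeping that separates the composition-dependent factor $\Phi(m)$ from the common Pochhammer product.
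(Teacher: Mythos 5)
Your proof is correct, and it reaches the key inequality by a genuinely different mechanism than the paper's. Both arguments begin the same way: cancel the common Pochhammer product and reduce everything to comparing composition-independent prefactors (your first display $\mathrm{R_{cG}} = \bigl(\Phi(m_1)/\Phi(m_1')\bigr)\,\mathrm{R_{G}}$ and the equality case $m_1=m_1'$ coincide with the paper's). For $m_1>m_1'$, however, the paper keeps the prefactor split as $\E_{\bar{M}_{m_1}}[V_{n-\bar{M}_{m_1},k-\bar{M}_{m_1}}]\cdot\beta^{n-m_1}$ and uses two separate inequalities: it invokes the recurrence $V_{n,k}=(n-\sigma k)V_{n+1,k}+V_{n+1,k+1}$ to show $\bar{m}\mapsto V_{n-\bar{m},k-\bar{m}}$ is increasing, combines this with first-order stochastic dominance of $\bar{M}_{m_1}\sim\mathrm{Binom}(m_1,1-\beta)$ over $\bar{M}_{m_1'}$ to compare the two expectations, and then absorbs the mismatched powers of $\beta$ via $\beta^{m_1'-m_1}\ge 1$. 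You instead fold the $\beta$-power into the single function $\Phi(m)$ and prove its monotonicity in one stroke by Pascal's rule, needing only $\rho=(1-\beta)/\beta>0$ and the non-negativity of the weights --- neither the Gibbs recurrence nor stochastic dominance enters. Your route is more elementary and slightly more general: it would apply verbatim to any non-negative array $\{V_{n,k}\}$, whereas the paper's monotonicity of $V_{n-\bar{m},k-\bar{m}}$ genuinely relies on the recurrence. What the paper's argument buys in exchange is probabilistic interpretability (dominance of the number of structural singletons acting on an increasing weight sequence) and, as a by-product, the intermediate bound $\mathrm{R_{cG}}\ge\beta^{m_1'-m_1}\,\mathrm{R_{G}}$, which is marginally sharper than the stated claim. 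Your explicit handling of the boundary index $V_{0,0}$ in the all-singletons case $n=k=m_1$, and the observation that $\Phi>0$ via the $\bar{m}=0$ term, address points the paper's proof passes over silently (it implicitly assumes the positivity needed to form the ratios).
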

\begin{proof}
	If $m_1=m_1'$, we have that
	\begin{align*}
	&\mathrm{R_{cG}} (n_1, \ldots  n_k ;n_1', \ldots  n_k '; n,k) = \frac{\E_{\bar{M}_{m_1}} [V_{n-\bar{M}_{m_1}, k- \bar{M}_{m_1}}] \beta^{n-m_{1}} \prod_{i=1}^{k} (1-\sigma)_{n_i-1}}{\E_{\bar{M}_{m_1}} [V_{n-\bar{M}_{m_1}, k- \bar{M}_{m_1}}] \beta^{n-m_{1}} \prod_{i=1}^{k} (1-\sigma)_{n_i'-1}}\\
	& \qquad \qquad= \frac{V_{n,k} \prod_{i=1}^{k} (1-\sigma)_{n_i-1}}{V_{n,k}\prod_{i=1}^{k} (1-\sigma)_{n_i'-1}}=  \mathrm{R_{G}} (n_1, \ldots  n_k ;n_1', \ldots  n_k '; n,k)
	\end{align*}
	as stated.\\
	For the second part of the proposition consider $m_1 > m_1'$. Firstly, observe that $V_{n-\bar{m},k-\bar{m}}$ is an increasing function of $\bar{m}$, as a simple consequence of the recurrence relation which entails that
	\[
	\frac{V_{n-\bar{m}+1,k-\bar{m}+1}}{V_{n-\bar{m},k-\bar{m}}} \leq 1 .
	\]
	Moreover $\bar{M}_{m_1}$ stochastically dominates $\bar{M}_{m_1'}$, since they are two binomial random variables with $m_1> m_1'$, thus one has
	\begin{equation}
	\label{eq:dom_EV}
	\E_{\bar{M}_{m_1}} [V_{n-\bar{M}_{m_1}, k- \bar{M}_{m_1}}] \geq \E_{\bar{M}_{m_1'}} [V_{n-\bar{M}_{m_1'}, k- \bar{M}_{m_1'}}].
	\end{equation}
	One can exploit \eqref{eq:dom_EV} to conclude the proof:
	\begin{align*}
	\mathrm{R_{cG}} (n_1, \ldots  n_k ;n_1', \ldots  n_k '; n,k)& =  \frac{\E_{\bar{M}_{m_1}} [V_{n-\bar{M}_{m_1}, k- \bar{M}_{m_1}}] \beta^{n-m_{1}} \prod_{i=1}^{k} (1-\sigma)_{n_i-1}}{\E_{\bar{M}_{m_1'}} [V_{n-\bar{M}_{m_1'}, k- \bar{M}_{m_1'}}] \beta^{n-m_{1}'} \prod_{i=1}^{k} (1-\sigma)_{n_i'-1}}\\
	& \geq  \beta^{m_1'-m_{1}}\frac{ \prod_{i=1}^{k} (1-\sigma)_{n_i-1}}{ \prod_{i=1}^{k} (1-\sigma)_{n_i'-1}}
	\geq \frac{ \prod_{i=1}^{k} (1-\sigma)_{n_i-1}}{ \prod_{i=1}^{k} (1-\sigma)_{n_i'-1}}\\
	& =   \mathrm{R_{G}} (n_1, \ldots  n_k ;n_1', \ldots  n_k '; n,k)
	\end{align*}
	where we used the fact that $\beta^{m_1'-m_{1}} \geq 1$.
\end{proof}
The previous proposition tells us that if the two configurations have the same number of singletons, the ratio is the same for the contaminated and non-contaminated model. On the other side, the ratio increases in the contaminated model if we increase the number of singletons in the composition at the numerator term. See Section \ref{sec:proof_compare_EPPF} for a proof of Proposition \ref{prop:compare_EPPF}.

\subsection{Example \ref{ex:PYp2}: details}  \label{sec:lemma}

In this section we provide all the details to show that the probability of sampling a new value is monotone as a function of the number of distinct values 
$m_1$ for the contaminated Pitman-Yor process.\\
First of all we prove the following.
\begin{lemma} \label{lem:Mdom}
	Under the contaminated Pitman-Yor prior, the posterior distribution of $\bar{M}_{m_1}$, counting the number of values assigned to the diffuse component, satisfies the monotone likelihood ratio property, i.e., 
	if $m_1 < m_1'$, then 
	\[
	\frac{\P (\bar{M}_{m_1'} = \bar{m}| X_1, \ldots , X_n)}{\P (\bar{M}_{m_1} = \bar{m}| X_1, \ldots , X_n)}
	\]
	increases as $\bar{m}$ increases.
\end{lemma}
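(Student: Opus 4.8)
The plan is to read the posterior law of $\bar M_{m_1}$ off the EPPF \eqref{eq:EPPF_convex_comb_PY} and to isolate its dependence on $m_1$. Since in the augmented representation each summand of \eqref{eq:EPPF_convex_comb_PY} equals the joint probability of the observed partition together with the event $\{\bar M_{m_1}=\bar m\}$, and since the factor $\prod_{i=1}^k(1-\sigma)_{n_i-1}$ is common to every summand and hence cancels upon normalizing by the full EPPF, the posterior probability mass function is
\[
\P(\bar M_{m_1}=\bar m\mid X_1,\ldots,X_n)\;\propto\;\binom{m_1}{\bar m}\,g(\bar m),\qquad g(\bar m):=\beta^{n-\bar m}(1-\beta)^{\bar m}\,\frac{\sigma^{k-\bar m}(\vartheta/\sigma)_{k-\bar m}}{(\vartheta)_{n-\bar m}}.
\]
The crucial observation is that, with $n,k,\beta,\sigma,\vartheta$ held fixed, the quantity $g(\bar m)$ does not depend on $m_1$: the entire dependence of the posterior on $m_1$ is carried by the binomial coefficient $\binom{m_1}{\bar m}$.

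Next I would form the likelihood ratio. Because the two normalizing constants are independent of $\bar m$, for $m_1<m_1'$ one obtains
\[
\frac{\P(\bar M_{m_1'}=\bar m\mid X_1,\ldots,X_n)}{\P(\bar M_{m_1}=\bar m\mid X_1,\ldots,X_n)}\;=\;c\,\frac{\binom{m_1'}{\bar m}}{\binom{m_1}{\bar m}},
\]
with $c>0$ a constant not depending on $\bar m$. Thus the monotonicity claim reduces to showing that $\bar m\mapsto\binom{m_1'}{\bar m}/\binom{m_1}{\bar m}$ is non-decreasing, which follows from a one-line computation of the consecutive ratio,
\[
\frac{\binom{m_1'}{\bar m+1}\big/\binom{m_1}{\bar m+1}}{\binom{m_1'}{\bar m}\big/\binom{m_1}{\bar m}}=\frac{m_1'-\bar m}{m_1-\bar m}>1,
\]
valid for $0\le\bar m<m_1$, where the inequality uses $m_1'>m_1$.

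Finally I would address the differing supports. Since $\bar M_{m_1}$ is supported on $\{0,1,\ldots,m_1\}$ while $\bar M_{m_1'}$ is supported on $\{0,1,\ldots,m_1'\}$, for $\bar m>m_1$ the denominator of the likelihood ratio vanishes and the ratio is $+\infty$, which is consistent with and completes the monotone-likelihood-ratio statement. I do not anticipate a genuine obstacle: the whole argument hinges on the single structural fact that varying $m_1$ perturbs the posterior of $\bar M_{m_1}$ only through $\binom{m_1}{\bar m}$, so the only points needing care are the clean cancellation of the $\bar m$-independent normalizers and the verification that $g(\bar m)$ is truly free of $m_1$.
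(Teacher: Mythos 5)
Your proof is correct and takes essentially the same approach as the paper's: both reduce the posterior likelihood ratio to $\binom{m_1'}{\bar m}\big/\binom{m_1}{\bar m}$ by noting that the remaining factor (your $g(\bar m)$, the paper's $\beta^{n-\bar m}(1-\beta)^{\bar m}V_{n-\bar m,k-\bar m}$) is free of $m_1$ while the normalizing constants are free of $\bar m$, and both then check monotonicity of this binomial ratio --- you via the consecutive-ratio computation $\frac{m_1'-\bar m}{m_1-\bar m}>1$, the paper by rewriting it as $1/[(m_1'-\bar m)\cdots(m_1-\bar m+1)]$, which are trivially equivalent. Your explicit remark on the differing supports (the ratio being $+\infty$ for $m_1<\bar m\le m_1'$) is a minor tidy-up that the paper leaves implicit.
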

\begin{proof}
	Recall that the posterior distribution of $\bar{M}_{m_1}$ equals
	\[
	\P (\bar{M}_{m_1} = \bar{m}| X_1, \ldots , X_n) \propto \binom{m_1}{\bar{m}} \beta^{n-\bar{m}} (1-\beta)^{\bar{m}} V_{n-\bar{m}, k-\bar{m}}
	\] 
	where the normalizing factor does not depend on $\bar{m}$, but only on $m_1$ and $\beta$.
	To prove the monotone likelihood ratio property, we consider $m_1< m_1 '$ and we focus on the ratio
	\begin{align*}
	\frac{\P (\bar{M}_{m_1'} = \bar{m}| X_1, \ldots , X_n)}{\P (\bar{M}_{m_1} = \bar{m}| X_1, \ldots , X_n)} & 
	\propto \binom{m_1'}{\bar{m}}\cdot \binom{m_1}{\bar{m}}^{-1}  = \frac{m_1'!}{m_1!} \cdot \frac{(m_1-\bar{m})!}{(m_1'-\bar{m})!}\\
	& \propto \frac{(m_1-\bar{m})!}{(m_1'-\bar{m})!} = \frac{1}{(m_1'-\bar{m})\cdots (m_1-\bar{m}+1)}.
	\end{align*}
	The previous ratio increases as $\bar{m}$ increases, as long as $m_1'> m_1$. Thus, the result follows.
\end{proof}

Thanks to Lemma \ref{lem:Mdom} we are able to show the monotone property stated at the beginning of the section.
From the predictive distribution \eqref{eq:predictive}, we observe that, conditionally on $\bar{M}_{m_1}= \bar{m}_1$, the probability of sampling a new value at the $(n+1)$th stage equals 
\[
p_{new} (\bar{m}_1) :=(1 - \beta) + \beta \frac{\vartheta + (k - \bar{m}_1) \sigma}{\vartheta + n - \bar m_1}
\] 
and it is immediate to see that such a probability is monotone in $\bar{m}_1$, when $n$ and $k$ are fixed values. Moreover if $\sigma\to 0$, i.e. the Dirichlet case, $p_{new} (\bar{m}_1)$ is increasing, whereas if $\theta \to 0$ (stable process) the probability $p_{new} (\bar{m}_1)$ decreases in
$\bar{m}_1$. Note that the unconditional probability of sampling a new value may be recovered integrating $p_{new} (\bar{m}_1)$ with respect to the posterior distribution of $\bar{M}_{m_1}$. As a consequence of Lemma \ref{lem:Mdom} such a distribution satisfies the monotone likelihood ratio property in $\bar{m}_1$, thus  the unconditional probability of sampling a new vale is monotone as a function of the number of distinct values $m_1$. This results in a richer predictive structure w.r.t. the Pitman-Yor case, where $m_1$ does not appear in the probability of sampling a new value.

\subsection{Proof of Proposition \ref{prp:predictive}}

The conditional predictive distribution \eqref{eq:pred_cond} follows from the EPPF augmented with the introduction of the latent variables $J_1, \ldots , J_n$,  by considering all the possible scenarios: $X_{n+1}$ is new from $P_0$, $X_{n+1}$ is new from $Q_0$, $X_{n+1}$ coincides with a previously observed value appearing with frequency one in the initial sample, and $X_{n+1}$ coincides with a previously observed value having frequency $n_i \geq 2$ in the initial sample.

\subsection{Proof of Proposition \ref{prp:predictive_mar}}\label{sup:predictive}

Suppose now that $P_0 = Q_0$. We can integrate the distribution of the latent variables $(J_1, \dots, J_n)$ in \eqref{eq:pred_cond}. We start from the law of the random partition dictated by the data, augmented with the introduction of the latent variables $(J_1, \ldots, J_{m_1})$. This can be recovered from the proof of Theorem \ref{thm:EPPF_Gibbs}, and assuming $P_0 = Q_0$ it amounts to be
\begin{equation} \label{eq:joint_aug}
\begin{split}
&\Law (X_1, \ldots , X_n, J_1, \ldots, J_{m_1}) \\
& \qquad = \beta^{n-\bar m_{1}} (1-\beta)^{\bar m_{1}} V_{n-\bar m_{1}, k-\bar m_{1}} \prod_{i=m_1+1}^{k} (1-\sigma)_{n_i-1}
\prod_{i=1}^k P_0(\D X_i^*)
\end{split}
\end{equation}
where  $(j_1, \ldots, j_{m_1})$ are the observed values of $(J_1, \ldots , J_{m_1})$ and $\bar m_{1} =\# \{ i : \; j_i=0 \}$ is the observed number of uniques generated from the diffuse component. From Equation \eqref{eq:joint_aug} we may apply the Bayes theorem to recover the conditional distribution of $(J_1, \ldots, J_{m_1} | X_{1}, \ldots , X_n)$ and this is proportional to
\begin{equation}
\label{eq:cond_j|X_no_norm}
\P (J_1=j_1, \ldots, J_{m_1}= j_{m_1}| X_1, \ldots , X_n)  \propto \beta^{n-\bar m_{1}}(1-\beta)^{\bar m_{1}} V_{n-\bar m_{1}, k-\bar m_{1}}
\end{equation}
where the normalizing constant can be determined summing over all the vectors $(J_1,\ldots , J_{m_1})$ belonging to the set $\{ 0,1\}^{m_1}$. Indeed 
one can easily verify that
\begin{equation}
\label{eq:cond_j|X}
\P (J_1=j_1, \ldots, J_{m_1}= j_{m_1}| X_1, \ldots , X_n) = \frac{ \beta^{m_1 - \bar m_{1}}(1-\beta)^{\bar m_{1}} V_{n-\bar m_{1}, k-\bar m_{1}}}{\E_{\bar M_{m_1}} [V_{n-\bar M_{m_1}, k-\bar M_{m_1}}]}
\end{equation}
with $\bar M_{m_1} \sim {\mathrm{Binom}} ( m_1, 1 - \beta)$. We can now integrate the expression in \eqref{eq:pred_cond} w.r.t. the law \eqref{eq:cond_j|X} to get the result. It is straightforward to integrate the first and the last term on the r.h.s. of \eqref{eq:pred_cond}, the second one is more subtle. Fixing $i \in \{1, \ldots , m_1 \}$, we have to evaluate the following sum
\begin{equation} \label{eq:sum_star}
\begin{split}
& \sum_{(j_1, \ldots , j_{m_1}) \in \{0,1\}^{m_1}}  j_i  (1-\sigma)\frac{\beta^{m_1 - \bar m_1+1}(1-\beta)^{\bar m_1}
	V_{n-\bar m_1+1, k-\bar m_1}}{\E_{\bar M_{m_1}}[V_{n-\bar M_{m_1}, k-\bar M_{m_1}}]}  \\
& \qquad\qquad = \sum_{\bar m_1=0}^{m_1} \sum_{(\star)}
j_i  (1-\sigma)\frac{\beta^{m_1 - \bar m_1+1}(1-\beta)^{\bar m_1}
	V_{n-\bar m_1+1, k-\bar m_1}}{\E_{\bar M_{m_1}}[V_{n-\bar M_{m_1}, k-\bar M_{m_1}}]}
\end{split} 
\end{equation}
where the sum $(\star)$ is extended over all the possible vectors $(j_1, \ldots , j_{m_1})$ such that $\sum_{h=1}^{m_1} j_h= m_1 - \bar m_1$.
We note that if $j_i=0$ the summand on the r.h.s. of \eqref{eq:sum_star} is equal to $0$, hence we can equivalently sum over all the vectors  $(j_1, \ldots , j_{m_1})$
such that $j_i=1$. We further observe that, apart of $j_i$, the summand depends on $(j_1, \ldots , j_{m_1})$ only through $\bar m_1$. Thanks to these 
remarks, one has
\begin{align*}
& \sum_{(j_1, \ldots , j_{m_1}) \in \{0,1\}^{m_1}}  j_i (1-\sigma)\frac{\beta^{m_1 - \bar m_1+1}(1-\beta)^{\bar m_1}
	V_{n-\bar m_1+1, k-\bar m_1}}{\E_{\bar M_{m_1}}[V_{n-\bar M_{m_1}, k-\bar M_{m_1}}]}  \\
& \qquad\qquad = \sum_{\bar m_1=0}^{m_1-1} \# \{(j_1, \ldots , j_{m_1}): \; j_1+\cdots + j_{m_1}= m_1-\bar m_1, \; j_i=1\}\\
& \qquad\qquad\qquad\qquad \times
(1-\sigma)\frac{\beta^{m_1 - \bar m_1+1}(1-\beta)^{\bar m_1}
	V_{n-\bar m_1+1, k-\bar m_1}}{\E_{\bar M_{m_1}}[V_{n-\bar M_{m_1}, k-\bar M_{m_1}}]} \\
& \qquad\qquad = \sum_{\bar m_1=0}^{m_1-1}  \binom{m_1-1}{m_1 -\bar m_1-1}   (1-\sigma)\frac{\beta^{m_1 - \bar m_1+1}(1-\beta)^{\bar m_1}
	V_{n-\bar m_1+1, k-\bar m_1}}{\E_{\bar M_{m_1}}[V_{n-\bar M_{m_1}, k-\bar M_{m_1}}]}.
\end{align*}
By the fact that
\[
\binom{m_1-1}{m_1-\bar{m}_1-1} = \frac{(m_1 -\bar{m}_1)}{m_1} \binom{m_1}{\bar{m}_1}
\]
the previous expression reduces to
\begin{align*}
& \sum_{(j_1, \ldots , j_{m_1}) \in \{0,1\}^{m_1}}  j_i (1-\sigma)\frac{\beta^{m_1 -  \bar{m}_1+1}(1-\beta)^{\bar{m}_1}
	V_{n-\bar{m}_1+1, k-\bar{m}_1}}{\E_{\bar{M}_{m_1}}[V_{n-\bar{M}_{m_1}, k-\bar{M}_{m_1}}]}  \\
& \qquad = \sum_{\bar{m}_1=0}^{m_1}   \frac{(m_1 - \bar{m}_1)}{m_1} \binom{m_1}{\bar{m}_1}  (1-\sigma)\frac{\beta^{m_1 - \bar{m}_1+1}(1-\beta)^{\bar{m}_1}
	V_{n-\bar{m}_1+1, k-\bar{m}_1}}{\E_{\bar{M}_{m_1}}[V_{n-\bar{M}_{m_1}, k-\bar{M}_{m_1}}]}\\
& \qquad =  \frac{\beta (1-\sigma) \E_{\bar{M}_{m_1}}[(m_1 - \bar{M}_{m_1}) V_{n-\bar{M}_{m_1}+1, k-\bar{M}_{m_1}}]}{m_1\E_{\bar{M}_{m_1}}[V_{n-\bar{M}_{m_1}, k-\bar{M}_{m_1}}]}
\end{align*}
and this provides the second term on the r.h.s. of \eqref{eq:predictive}, after summing over $i =1, \ldots , m_1$.

\subsection{Re-sampling mechanism: details}\label{sup:prob_new_comparison}

Assume that $(X_1, \ldots , X_n)$ is a sample from an exchangeable sequence of observations $X_i | \tilde{p} \simiid \tilde{p}$, where $\tilde{p}$ is a random probability measure.
In this section we study the ratio between the probability of re-sampling a distinct value observed $n_1$ times and the probability of re-sampling a value observed $n_2$ times out of the initial sample in two distinct cases: i) $\tilde{p}$ is  a Gibbs-type prior; ii) $\tilde{p}$ is a contaminated Gibbs-type prior when $P_0=Q_0$.
From the predictive distribution  \eqref{eq:predictiveG} if $n_1, n_2 >1$ the ratio in the two cases is the same. Indeed, under the contaminated Gibbs-type prior the ratio between the two probabilities is equal to
\[
\frac{\beta (n_1-\sigma) \E_{\bar M_{m_1}}[ V_{n-\bar M_{m_1} + 1, k-\bar M_{m_1}}]\E_{\bar M_{m_1}}[V_{n-\bar M_{m_1}, k-\bar M_{m_1}}]}{
	\beta (n_2-\sigma) \E_{\bar M_{m_1}}[ V_{n-\bar M_{m_1} + 1, k-\bar M_{m_1}}]\E_{\bar M_{m_1}}[V_{n-\bar M_{m_1}, k-\bar M_{m_1}}]} 
= \frac{(n_1 - \sigma)}{(n_2 - \sigma)},
\]
which is exactly the same for Gibbs-type prior. On the other hand, if $n_1=1$ and $n_2>1$, the ratio in the contaminated model decreases w.r.t. the same quantity under a  Gibbs-type prior specification. Indeed, the probability ratio when $\tilde{p}$ is a contaminated Gibbs-type prior equals
\[
\frac{\beta (1-\sigma) \E_{\bar M_{m_1}}	\left[\frac{(m_1 - \bar M_{m_1})}{m_1} V_{n-\bar M_{m_1} + 1, k-\bar M_{m_1}}\right]\E_{\bar M_{m_1}}[V_{n-\bar M_{m_1}, k-\bar M_{m_1}}]}{
	\beta (n_2-\sigma) \E_{\bar M_{m_1}}[ V_{n-\bar M_{m_1} + 1, k-\bar M_{m_1}}]\E_{\bar M_{m_1}}[V_{n-\bar M_{m_1}, k-\bar M_{m_1}}]} 
\leq \frac{(1 - \sigma)}{(n_2 - \sigma)},
\]
where the last quantity corresponds to the probability ratio in the case of a Gibbs-type prior specification.	
The previous relations clarify that, conditionally on a sample $X_1, \dots, X_n$, the inclusion of a contaminant measure is preserving the same reinforcement as the discrete term of the model for observations with frequency larger than one, 
it mainly acts on singletons by decreasing the re-sampling probabilities w.r.t. observations with higher frequencies.

\subsection{Details for the determination of \eqref{eq:EPPF_convex_comb_PY}}\label{sup:eppf_PY}

We resort to Theorem \ref{thm:EPPF_Gibbs} and we specialize the expression of the EPPF in Equation \eqref{eq:EPPF_convex_comb} to the Pitman-Yor case:
\begin{align*}
\Pi_k^{(n)} (n_1, \ldots , n_k) &=  \sum_{\bar{m}_1=0}^{n_1} \binom{m_1}{\bar{m}_1} \beta^{n-\bar{m}_1} (1-\beta)^{\bar{m}_1}   \prod_{i=m_1+1}^k  (1-\sigma)_{n_i-1}   \frac{\prod_{i=1}^{k-\bar{m}_1-1} (\vartheta+i \sigma)}{(\vartheta+1)_{n-\bar{m}_1-1}}
\end{align*}
where we used in \eqref{eq:EPPF_convex_comb} the specification of the $V_{n,k}$'s of the Pitman-Yor process \eqref{eq:w_PY}. It can be easily seen that
\begin{align*}
\Pi_k^{(n)} (n_1, \ldots , n_k) &=  \prod_{i=m_1+1}^k  (1-\sigma)_{n_i-1}  
\sum_{\bar{m}_1=0}^{m_1} \binom{m_1}{\bar{m}_1} \beta^{n-\bar{m}_1} (1-\beta)^{\bar{m}_1}  \\
& \qquad\qquad\qquad \times
\frac{\sigma^{k-\bar{m}_1-1} \Gamma (\vartheta/\sigma+k-\bar{m}_1) \Gamma (\vartheta+1)}{\Gamma (\vartheta/\sigma+1) \Gamma (\vartheta+n-\bar{m}_1)}\\
&=  \prod_{i=m_1+1}^k  (1-\sigma)_{n_i-1} \beta^{n-m_1}
\sum_{\bar{m}_1=0}^{m_1} \binom{m_1}{\bar{m}_1} \beta^{m_1 - \bar{m}_1} (1-\beta)^{\bar{m}_1}  \frac{\sigma^{k - \bar{m}_1} (\vartheta/\sigma)_{k - \bar{m}_1}}{(\vartheta)_{n - \bar{m}_1}}.
\end{align*}

\subsection{Proof of Proposition \ref{prp:asymptotic}}

We start from the augmented model 
\begin{equation} \label{eq:model_prp}
\begin{split}
X_i \mid \tilde{p}, J_i  & \simiid J_i \tilde{q} + (1 - J_i) P_0 \\
J_i  &\simiid {\rm Bern} (\beta).
\end{split}
\end{equation}
We now use a slightly different notation to underline the dependence w.r.t. $\beta$ of the cluster frequencies, in particular if $V$ is a random variable depending on $(X_1 , \ldots , X_n)$ in \eqref{eq:model_prp}, we write $V(\beta)$ to make explicit the dependence on $\beta$.  If $\beta_1 < \beta_2$ we want to prove that $K_n (\beta_1)$ stochastically dominates $K_n (\beta_2)$, and the same for $M_{n,1}$.
We recall that $\bar M_{n} (\beta)  = \sum_{i=1}^n (1-J_i (\beta))$, which identifies the number of components coming from $P_0$, therefore the following equality holds true a.s.
\begin{equation} \label{eq:repK_n}
K_n (\beta)= \tilde{K}_{n-\bar M_{n}(\beta)} + \bar M_{n} (\beta)
\end{equation}
where $\tilde{K}_{n-\bar M_{n} (\beta)}|\bar M_{n} (\beta)$ is the number of distinct observations among the ones generated by the process $\tilde{q}$.
Now let $\beta_1 < \beta_2$, under the contaminated model with parameter $\beta_2$, $\sum_{i=1}^n J_i (\beta_2)$ represents the number of observations assigned to the discrete component $\tilde{q}$.  Now we consider $n$ independent Bernoulli random variables $U_1, \ldots , U_n$ with mean $\beta_1/\beta_2$ and we stochastically assign some of the $\sum_{i=1}^n J_i (\beta_2)$ observations to the diffuse component. The updated number of observations assigned to the diffuse component equals $\sum_{i=1}^n J_i (\beta_2) U_i$, which is a Binomial with parameters $n$ and $\beta_1$. As a consequence, the following chain holds true:
\begin{align*}
K_n (\beta_2) & = \tilde{K}_{\sum_{i=1}^n J_i (\beta_2)} + n-\sum_{i=1}^n J_i (\beta_2) \\
& \leq 
\tilde{K}_{\sum_{i=1}^n U_i J_i (\beta_2)} +\sum_{i=1}^n J_i (\beta_2)-\sum_{i=1}^n J_i (\beta_2) U_i + n-\sum_{i=1}^n J_i (\beta_2)\\
& 
= \tilde{K}_{\sum_{i=1}^n U_i J_i (\beta_2)}  + n -\sum_{i=1}^n J_i (\beta_2) U_i
\end{align*}
where we used the fact that the distinct number of observations $\tilde{K}_{\sum_{i=1}^n J_i (\beta_2)} $ is always less than  the number of observations still assigned to the discrete components plus the stochastic number of observations now assigned to $P_0$. Note that the distribution of the random variable on the right hand side of the previous equation equals the distribution of 
\[
\tilde{K}_{n-\bar M_{n}(\beta_1)} + \bar M_{n} (\beta_1)= K_n (\beta_1)
\]
as a consequence $\P (K_n (\beta_2) \geq x) \leq \P (K_n (\beta_1) \geq x) $, or in other words if $\beta_1 < \beta_2$, then $K_n (\beta_1)$ stochastically dominates $K_n (\beta_2)$.
As for the observations with frequency $1$, one can observe that
\[
M_{n,1} (\beta)= \tilde{M}_{n-\bar M_{n},1 (\beta)}+\bar M_{n} (\beta)
\]
where $\tilde{M}_{n-\bar M_{n} (\beta),1}$ denotes the number of observations with frequency $1$ among those with $J_i (\beta)=1$. Along similar lines as  
before, it is not difficult to see that if $\beta_1 < \beta_2$, then $M_{n,1} (\beta_1)$ stochastically dominates $M_{n,2} (\beta_2)$.

We now move to prove the asymptotic results, and we drop the explicit dependence on $\beta$ of the sample statistics.
We first focus on $K_n$, by exploiting the representation in \eqref{eq:repK_n} one observe that
$\tilde{K}_{n-\bar M_{n}}|\bar M_{n}$ is the number of distinct values in a sample of size $n-\bar M_{n}$ generated by a Gibbs-type prior $\tilde{q}$, then
\[
\P(\tilde{K}_{m}=k ) = \frac{V_{m,k}}{\sigma^k} \Ccr (m,k; \sigma), \quad \text{with }  \Ccr (m,k; \sigma) = \frac{1}{k!} \sum_{i=0}^k (-1)^i
\binom{k}{i} (-i \sigma)_m
\]
see, e.g., \citep{Gne05} and \citep{Lij07}. 
Then, the distribution of $K_n$ can be recovered by marginalizing out the distribution of $\bar M_{n}$ as follows
\begin{align*}
\P  (K_n=k) &= \sum_{\ell=0}^k \P (K_n=k| \bar M_{n}=\ell) \P (\bar M_{n}=\ell) \\
&= \sum_{\ell=0}^k \P (K_n=k| \bar M_{n}=\ell) \binom{n}{\ell} \beta^{n-\ell} (1-\beta)^{\ell} \\
&= \sum_{\ell=0}^k \P (\tilde{K}_{n-\ell}=k-\ell) \binom{n}{\ell} \beta^{n-\ell} (1-\beta)^{\ell}\\
& = \sum_{\ell=0}^k  \binom{n}{\ell} \beta^{n-\ell} (1-\beta)^{\ell} \frac{V_{n-\ell,k-\ell}}{\sigma^{k-\ell}} \Ccr (n-\ell,k-\ell; \sigma).
\end{align*}
We can easily study the asymptotic distribution of $K_n$, by observing that 
\[
\frac{\bar M_{n}}{n} = \sum_{i=1}^n \frac{1-J_i}{n} \stackrel{a.s.}{\to} (1 -\beta), \qquad \text{as  } n \to +\infty,
\]
by the strong law of large numbers. Thanks to the previous equation we have that $n - \bar M_{n} = n (1 - \frac{\bar M_{n}}{n}) \stackrel{a.s.}{\to} +\infty$, since $1 - \frac{\bar M_{n}}{n}$ is almost surely positive and bounded as $n \to +\infty$. We further notice that 
\[
\frac{\tilde{K}_{n-\bar M_{n}}}{(n-\bar M_{n})^\sigma}\to S_\sigma
\]
where $S_\sigma$ denotes a finite random variable termed $\sigma$-diversity \citep[see, e.g.,][]{Pit06, Deb15}. We can conclude that
\[
\frac{K_n}{n} = \frac{\tilde{K}_{n-\bar M_{n}} + \bar M_{n}}{n} = \frac{\tilde{K}_{n-\bar M_{n}}}{(n-\bar M_{n})^\sigma} \times \frac{(n-\bar M_{n})^\sigma}{n} + \frac{\bar M_{n}}{n} \to 1-\beta
\]
almost surely as $n \to +\infty$.\\

We now study the convergence of $M_{n,r}$, the number of types having frequency $r$ in the sample. We first recall that if $\tilde{M}_{n,r}$ is the number of unique elements with frequency $r$ in a sample of size $n$ generated from a Gibbs-type priors, then, thanks to \citep[Lemma 3.11]{Pit06}, one has
\begin{equation}
\label{eq:PY_freq}
\frac{\tilde{M}_{n,r}}{n^\sigma} \to \frac{\sigma (1-\sigma)_{r-1}}{r!}  S_{\sigma}, \qquad \text{as  } n \to +\infty.
\end{equation}
See also \citep{Fav13} for an explicit expression of the distribution of $\tilde{M}_{n,r}$.
For the case $r=1$, some observations are generated from $P_0$ and others from $\tilde{q}$, in formulas
\[
M_{n,1}= \tilde{M}_{n-\bar M_{n},1}+\bar M_{n}
\]
where $\tilde{M}_{n-\bar M_{n},1}$ denotes the number of observations with frequency $1$ among those with $J_i=1$. Thanks to \eqref{eq:PY_freq} we have that 
\[
\frac{\tilde{M}_{n-\bar M_{n},1}}{(n-\bar M_{n})^\sigma}  \to  \sigma S_{\sigma},
\]
since the random quantity $n-\bar M_{n} $ diverges as $n$ grows to infinity. As a consequence we obtain
\[
\frac{M_{n,1}}{n} =  \frac{ \tilde{M}_{n-\bar M_{n},1}+\bar M_{n}}{n} = \frac{ \tilde{M}_{n-\bar M_{n},1}}{(n-\bar M_{n})^\sigma}\times \frac{(n-\bar M_{n})^\sigma}{n} +\frac{\bar M_{n}}{n}  \to
1-\beta, \qquad \text{as  } n \to +\infty.
\]

If we now concentrate on the case $r \geq 2$, all the observations are generated from $\tilde{q}$, and we observe that
\[
\frac{M_{n,r}}{n^\sigma} = \frac{\tilde{M}_{n-\bar M_{n},r}}{n^\sigma}= \frac{\tilde{M}_{n-\bar M_{n},r}}{(n-\bar M_{n})^\sigma}\frac{(n-\bar M_{n})^\sigma}{n^\sigma}  \to
\frac{\sigma (1-\sigma)_{r-1}}{r!}  S_{\sigma} \beta^\sigma, \qquad \text{as  } n \to +\infty,
\]
thanks to \eqref{eq:PY_freq} and the fact that $\frac{\bar M_{n}}{n} \to 1-\beta$.

\section{Posterior inference for contaminated Pitman-Yor processes}  \label{sec:app_cPY}

In the present section we face posterior inference for the contaminated Pitman-Yor process of Example \ref{ex:PY3}. In particular, conditionally on a sample $(X_1, \ldots , X_n)$ of size $n$ we derive closed-form expressions for the posterior expected value of the following statistics: i)  $K_m^{(n)}$, i.e., 
the number of distinct values out a future sample  $X_{n+1}, \ldots , X_{n+m} $ not yet observed in the initial sample; ii) $N_{m,r}^{(n)}$, which denotes the number of new and distinct observations with frequency $r$ out of the additional sample, hitherto unobserved in the initial sample. By virtue of the posterior results, we also get formulas for the expected value of: i) $K_n$, the number of distinct values out of $(X_1, \ldots , X_n)$; ii) $M_{n,r}$, the number of clusters with frequency $r$ out of $(X_1, \ldots , X_n)$. Note that, in our framework, it is important to focus separately on the case $r=1$ and $r \geq 2$, since the contaminated model acts in a different way on the number of observations with frequency one.
Our results are based on the expressions derived by \cite{Fav_09,Fav13}, who have faced posterior inference for the number of blocks with a certain frequency generated by Gibbs-type random partitions.
It is possible to extend the results by \cite{Fav_09,Fav13} to contaminated Gibbs-type priors, here, for the easy of exposition, we discuss the contaminated Pitman-Yor case.

\subsection{Posterior expected value of $N_{m,1}^{(n)}$} \label{sec:posterior_new1}

Conditionally on a sample $X_1, \ldots , X_n$, we focus on predicting the number of new and distinct observations out of the additional sample $X_{n+1}, \ldots , X_{n+m} $ observed with frequency $1$, denoted here as $N_{m,1}^{(n)}$. This is an important quantity in our framework, indeed the contaminated model mainly acts on the number of observations with frequency $1$. 
We introduce the following random variables: i) $\bar{M}_{n}= \sum_{i=1}^n (1-J_i)$ the number of observations generated by the diffuse component out of the sample of size $n$; ii) $\bar{M}_{m}^{(n)}:= \sum_{i=1}^m (1-J_{i+n})$  the number of observations generated by the diffuse component out of the additional sample of size $m$. Note that these random variables have Binomial distributions and are independent.\\
We now focus on the evaluation of $\E [N_{m,1}^{(n)}| X_1, \ldots , X_n, \bar{M}_{n} ]$, we first observe  that
\[
N_{m,1}^{(n)} =  \tilde{N}_{m-\bar{M}_{m}^{(n)},1}^{(n-\bar{M}_{n})}+\bar{M}_{m}^{(n)}
\]
where $\tilde{N}_{m-\bar{M}_{m}^{(n)},1}^{(n-\bar{M}_{n})}$ denotes the number of distinct values out of the additional sample observed with frequency $1$ and coming from the discrete component, whose posterior expectation has been derived by \cite[Equation (30)]{Fav13}. As a consequence we get:
\begin{align*}
&\E [N_{m,1}^{(n)}| X_1, \ldots , X_n, \bar{M}_{n} ]  =  \E [\tilde{N}_{m-\bar{M}_{m}^{(n)},1}^{(n-\bar{M}_{n})}+\bar{M}_{m}^{(n)} | X_1, \ldots , X_n, 
\bar{M}_{n} ]\\
&\qquad = \E [\E [\tilde{N}_{m-\bar{M}_{m}^{(n)},1}^{(n-\bar{M}_{n})}+\bar{M}_{m}^{(n)} | X_1, \ldots , X_n, \bar{M}_{n}, \bar{M}_{m}^{(n)} ]| X_1, \ldots , X_n, \bar{M}_{n} ]\\
&\qquad = \E \Big[ (m - \bar{M}_{m}^{(n)}) (\vartheta+(k-\bar{M}_{n})\sigma)\cdot \frac{(\vartheta+n-\bar{M}_{n}+\sigma)_{m-\bar{M}_{m}^{(n)}-1}}{(\vartheta+n-\bar{M}_{n})_{m-\bar{M}_{m}^{(n)}}} \\
& \qquad\qquad\qquad\qquad\qquad\qquad\qquad\qquad\qquad\qquad\qquad+\bar{M}_{m}^{(n)}\Big| X_1, \ldots , X_n, \bar{M}_{n}\Big]
\end{align*}
where we used \cite[Equation (30)]{Fav13}. By observing that $\bar{M}_{m}^{(n)}$ has a Binomial distribution with parameters $(m, 1-\beta)$ and it is independent of $X_1, \ldots ,X_n$, we obtain
\begin{equation}  \label{eq:N_posterior_noint}
\begin{split}
&\E [N_{m,1}^{(n)}| X_1, \ldots , X_n, \bar{M}_{n} ]\\
& \quad= \sum_{\ell=0}^m \Big[ (m - \ell) (\vartheta+(k-\bar{M}_{n})\sigma)\cdot \frac{(\vartheta+n-\bar{M}_{n}+\sigma)_{m-\ell-1}}{(\vartheta+n-\bar{M}_{n})_{m-\ell}} +
\ell\Big]  \binom{m}{\ell} (1-\beta)^\ell \beta^{m-\ell}.
\end{split}
\end{equation}
In order to find $\E [N_{m,1}^{(n)}| X_1, \ldots , X_n]$, we need to marginalize the previous expectation with respect to the conditional distribution of $\bar{M}_{n}$ given $X_1, \ldots , X_n$, which can be derived from the EPPF:
\begin{equation}
\label{eq:condMbar}
\P (\bar{M}_{n}= \bar{m}_1 | X_1, \ldots , X_n )  = \frac{1}{C}  \cdot \binom{m_1}{\bar{m}_1}  \beta^{n-\bar{m}_1} (1-\beta)^{\bar{m}_1} \sigma^{k-\bar{m}_1}
\frac{\Gamma (\vartheta/\sigma+k - \bar{m}_1)}{\Gamma (\vartheta+n - \bar{m}_1)}
\end{equation}
where $C$ is the normalizing factor, i.e., 
\[
C =  \sum_{s=0}^{m_1} \binom{m_1}{s}  \beta^{n-s} (1-\beta)^{s} \sigma^{k-s}
\frac{\Gamma (\vartheta/\sigma+k - s)}{\Gamma (\vartheta+n - s)} .
\]
Thus integrating \eqref{eq:N_posterior_noint} with respect to the distribution \eqref{eq:condMbar} we get
\begin{equation}  \label{eq:N_posterior}
\begin{split}
&\E [N_{m,1}^{(n)}| X_1, \ldots , X_n]= \sum_{\bar{m}_1=0}^{m_1} \sum_{\ell=0}^m \Big[ (m - \ell) (\vartheta+(k-\bar{m}_1)\sigma)\cdot \frac{(\vartheta+n-\bar{m}_1+\sigma)_{m-\ell-1}}{(\vartheta+n-\bar{m}_1)_{m-\ell}} \\
& \qquad+
\ell\Big]  \binom{m}{\ell} (1-\beta)^{\ell+\bar{m}_1} \beta^{m-\ell+n-\bar{m}_1}   \frac{1}{C}  \cdot \binom{m_1}{\bar{m}_1}   \sigma^{k-\bar{m}_1}
\frac{\Gamma (\vartheta/\sigma+k - \bar{m}_1)}{\Gamma (\vartheta+n - \bar{m}_1)}.
\end{split}
\end{equation}
By a close inspection of the conditional expected value in \eqref{eq:N_posterior}, it is immediate to realize that it depends on the initial sample $X_1, \ldots , X_n$ through the sample size $n$, the number of distinct species $k$ and also by $m_1$, i.e., the number of species having frequency $1$ in the initial sample. This is a remarkable difference with respect to Gibbs-type priors, in which the posterior expected value depends only on $n$ and $k$ \citep{Bat_17}, thus improving the flexibility of the model.
Finally one may use \eqref{eq:N_posterior} to determine the distribution of $M_{n,1}$ for contaminated model by simply setting $n=0$ and replacing $m$ with $n$, thus obtaining:
\begin{equation}  \label{eq:M_prior}
\begin{split}
&\E [M_{n,1}]= n(1-\beta)+ \sum_{\ell=0}^n   (n - \ell)\frac{(\vartheta+\sigma)_{n-\ell-1}}{(\vartheta+1)_{n-\ell-1}}  \binom{n}{\ell} \beta^{n-\ell} (1-\beta)^{\ell}.
\end{split}
\end{equation}
We finally  provide the reader with another representation of the summation appearing in the previous equation \eqref{eq:M_prior}, by a change of variable $r=n-\ell$ we get
\begin{align*}
& \sum_{\ell=0}^n   (n - \ell)\frac{(\vartheta+\sigma)_{n-\ell-1}}{(\vartheta+1)_{n-\ell-1}}  \binom{n}{\ell} \beta^{n-\ell} (1-\beta)^{\ell}  = \sum_{r=1}^n r  \frac{(\theta+\sigma)_{r-1}}{(\theta+1)_{r-1}}  \binom{n}{r}   \beta^{r} (1-\beta)^{n-r}\\
& \qquad = \sum_{r=0}^{n-1} (r+1)  \frac{(\theta+\sigma)_{r}}{(\theta+1)_{r}}  \binom{n}{r+1}   \beta^{r+1} (1-\beta)^{n-1-r} \\
& \qquad = n \beta \sum_{r=0}^{n-1}  \frac{(\theta+\sigma)_{r}}{(\theta+1)_{r}}  \binom{n-1}{r}   \beta^{r} (1-\beta)^{n-1-r}.
\end{align*}
We now observe that the ratio between Pochhammer symbols is the expected value of $B_1^r$, where $B_1$ is a random variable having a Beta distribution with parameters $(\vartheta+\sigma, 1-\sigma)$, then:
\begin{align*}
& \sum_{\ell=0}^n   (n - \ell)\frac{(\vartheta+\sigma)_{n-\ell-1}}{(\vartheta+1)_{n-\ell-1}}  \binom{n}{\ell} \beta^{n-\ell} (1-\beta)^{\ell}  = n \beta  \sum_{r=0}^{n-1}  \E [B_1^r]  \binom{n-1}{r} \beta^{r} (1-\beta)^{n-1-r} \\
& \qquad =  n \beta  \E [(\beta B_1 +(1-\beta))^{n-1}].
\end{align*}
As a consequence the expected value of $M_{n,1}$ boils down to
\begin{equation} \label{eq:EM_beta}
\E [M_{n,1}] =  n (1-\beta) +  n \beta  \E [(\beta B_1 +(1-\beta))^{n-1}].
\end{equation}

\subsection{Posterior expected value of $N_{m,r}^{(n)}$, with $r \geq 2$} \label{sec:posterior_newr}

Conditionally on a sample $X_1, \ldots , X_n$, we focus on predicting the number of new and  distinct observations out of the additional sample $X_{n+1}, \ldots , X_{n+m} $ observed with frequency $r\geq 2$, denoted here as $N_{m,r}^{(n)}$. We exploit the same notation introduced in Section \ref{sec:posterior_new1}\\
We now focus on the evaluation of $\E [N_{m,r}^{(n)}| X_1, \ldots , X_n, \bar{M}_{n} ]$, we first observe  that
\[
N_{m,r}^{(n)} =  \tilde{N}_{m-\bar{M}_{m}^{(n)},r}^{(n-\bar{M}_{n})}
\]
where $\tilde{N}_{m-\bar{M}_{m}^{(n)},r}^{(n-\bar{M}_{n})}$ takes into account the contribution of the discrete component $\tilde{q}$ and it denotes the number of clusters containing  $r$ observations out of the additional sample
and not observed in $X_1, \ldots , X_n$. The posterior expected value of $\tilde{N}_{m-\bar{M}_{m}^{(n)},r}^{(n-\bar{M}_{n})}$ has been found in \cite[Equation (30)]{Fav13}, thus:
\begin{align*}
&\E [N_{m,r}^{(n)}| X_1, \ldots , X_n, \bar{M}_{n} ]  = \E [ \E [\tilde{N}_{m-\bar{M}_{m}^{(n)},r}^{(n-\bar{M}_{n})} |X_1, \ldots , X_n, \bar{M}_{n}, \bar{M}_{m}^{(n)}] | X_1, \ldots , X_n, \bar{M}_{n} ]\\
&\qquad\qquad = \E \Big[   \binom{m-\bar{M}_{m}^{(n)}}{r}  (1-\sigma)_{r-1}  (\vartheta+ (k-\bar{M}_{n})\sigma)\\
& \qquad\qquad\qquad\qquad \times
\frac{(\vartheta+n-\bar{M}_{n}+\sigma)_{m-\bar{M}_{m}^{(n)}-r}}{(\vartheta+n -\bar{M}_{n})_{m-\bar{M}_{m}^{(n)}}} \Big|   X_1, \ldots , X_n, \bar{M}_{n} \Big] .
\end{align*}
By marginalizing the previous expression over $\bar{M}_{m}^{(n)}$, which is distributed as a Binomial with parameters $m$ and $1-\beta$, we get
\begin{equation} \label{eq:Nmr1}
\begin{split}
&\E [N_{m,r}^{(n)}| X_1, \ldots , X_n, \bar{M}_{n} ] =
\sum_{\ell=0}^{m-r}  \binom{m-\ell}{r}  (1-\sigma)_{r-1} (\vartheta+\sigma (k-\bar{M}_{n}))\\
& \qquad\qquad\qquad\qquad\qquad\qquad \times \frac{(\vartheta+n-\bar{M}_{n}+\sigma)_{m-r-\ell}}{(\vartheta+n-\bar{M}_{n})_{m-\ell}}  \binom{m}{\ell} (1-\beta)^\ell \beta^{m-\ell}.
\end{split}
\end{equation}
The posterior expected value $\E [N_{m,r}^{(n)}| X_1, \ldots , X_n]$ can be easily evaluated by marginalizing \eqref{eq:Nmr1} with respect to the posterior  distribution of $ \bar{M}_{n}$, which appears in \eqref{eq:condMbar}. Thus, we obtain:
\begin{equation} \label{eq:posterior_newr}
\begin{split}
&\E [N_{m,r}^{(n)}| X_1, \ldots , X_n] \\
& \qquad=
\sum_{\bar{m}_1=0}^{m_1}\sum_{\ell=0}^{m-r}  \binom{m-\ell}{r}  (1-\sigma)_{r-1} (\vartheta+\sigma (k-\bar{m}_1))\frac{(\vartheta+n-\bar{m}_1+\sigma)_{m-r-\ell}}{(\vartheta+n-\bar{m}_{1})_{m-\ell}} \\
& \qquad\qquad\qquad \times  \binom{m}{\ell} \beta^{m+n-\ell-\bar{m}_1}
(1-\beta)^{\ell+\bar{m}_1}  \frac{\sigma^{k-\bar{m}_1}}{C} \binom{m_1}{\bar{m}_1} \frac{\Gamma (\vartheta/\sigma+k-\bar{m}_1)}{\Gamma (\vartheta+n-\bar{m}_1)}.
\end{split}
\end{equation}
As a consequence of \eqref{eq:posterior_newr}, one may derive an expression for $\E[M_{n,r}]$, considering $n=0$ and substituting $n$ in place of $m$:
\begin{equation}
\label{eq:EMr_prior}
\E [M_{n,r} ] =  \sum_{\ell=0}^{n-r} \binom{m-\ell}{r} (1-\sigma)_{r-1} \vartheta \frac{(\vartheta+\sigma)_{n-r-\ell}}{(\vartheta)_{n-\ell}}
\binom{n}{\ell}  \beta^{n-\ell} (1-\beta)^{\ell}
\end{equation}
and, proceeding along similar lines as in Section \ref{sec:posterior_new1}, one may easily see that
\begin{equation} \label{eq:Mnr_beta}
\E [M_{n,r} ] = \frac{(1-\sigma)_{r-1}}{(\vartheta+1)_{r-1}} \binom{n}{r} \beta^r \E[(B_r \beta +1-\beta)^{n-r}]
\end{equation}
where $B_r$ is a Beta distribution with parameters $(\vartheta+\sigma, r-\sigma)$.

\subsection{Posterior expected value of $K_m^{(n)}$} \label{sec:posterior_new}

Here we focus on $K_m^{(n)}$, which represents the number of new and distinct observations out of an additional sample $X_{n+1}, \ldots , X_{n+m}$, hitherto unobserved in $X_1, \ldots , X_n$. More specifically,  we are interested in the evaluation of its posterior expected value. Note that
\[
K_m^{(n)} = \tilde{K}_{m-\bar{M}_{m}^{(n)}}^{(n-\bar{M}_{n})} +\bar{M}_{m}^{(n)},
\]
in which  we have decomposed the new clusters generated by the discrete component ($\tilde{K}_{m-\bar{M}_{m}^{(n)}}^{(n-\bar{M}_{n})}$) and the ones due to the diffuse component ($\bar{M}_{m}^{(n)}$). By resorting to \citep[Equation (6)]{Fav_09}, we have
\begin{align*}
&\E [K_m^{(n)}| X_1, \ldots , X_n , \bar{M}_{n}] \\
& \qquad=  \E [\E [\tilde{K}_{m-\bar{M}_{m}^{(n)}}^{(n-\bar{M}_{n})} +\bar{M}_{m}^{(n)}| X_1, \ldots , X_n , \bar{M}_{n}, \bar{M}_{m}^{(n)} ]|X_1, \ldots , X_n , \bar{M}_{n}]\\
& \qquad= \E [\E [\tilde{K}_{m-\bar{M}_{m}^{(n)}}^{(n-\bar{M}_{n})} | X_1, \ldots , X_n , \bar{M}_{n}, \bar{M}_{m}^{(n)} ]
+\bar{M}_{m}^{(n)}|X_1, \ldots , X_n , \bar{M}_{n}]\\
& \qquad =  \E  \Big[ (k-\bar{M}_{n}+\vartheta/\sigma)  \Big(\frac{(\vartheta+n -\bar{M}_{n}+\sigma)_{m-\bar{M}_{m}^{(n)}}}{(\vartheta+n-\bar{M}_{n})_{m-\bar{M}_{m}^{(n)}}}  -1\Big) +\bar{M}_{m}^{(n)} \Big| X_1, \ldots , X_n , \bar{M}_{n}\Big].
\end{align*}
The expected value in the r.h.s. of the previous expression is taken with respect to $\bar{M}_{m}^{(n)}$, having a Binomial distribution with parameters 
$m$ and probability of success $1-\beta$, therefore
\begin{equation}
\label{eq:Kmn1}
\begin{split}
&\E [K_m^{(n)}| X_1, \ldots , X_n , \bar{M}_{n}]\\
& \qquad =  \sum_{\ell=0}^m  \left[ (k-\bar{M}_{n}+\vartheta/\sigma)  \Big(\frac{(\vartheta+n -\bar{M}_{n}+\sigma)_{m-\ell}}{(\vartheta+n-\bar{M}_{n})_{m-\ell}}  -1\Big) +\ell  \right] \binom{m}{\ell} (1-\beta)^\ell \beta^{m-\ell}.
\end{split}
\end{equation}
In order to obtain the posterior expected value of $K_{m}^{(n)}$ we need to marginalize the r.h.s. of \eqref{eq:Kmn1} with respect to the posterior distribution of $\bar{M}_{n}$, which has been found in \eqref{eq:condMbar}, as a consequence we get
\begin{equation}
\label{eq:Kmn_posterior}
\begin{split}
\begin{split}
&\E [K_m^{(n)}| X_1, \ldots , X_n ] =  \sum_{\bar{m}_1=0}^{m_1} \sum_{\ell=0}^m  \left[ (k-\bar{m}_1+\vartheta/\sigma)  \Big(\frac{(\vartheta+n -\bar{m}_1+\sigma)_{m-\ell}}{(\vartheta+n-\bar{m}_1)_{m-\ell}}  -1\Big) +\ell  \right] \\
& \qquad\qquad\qquad\qquad\qquad \times\binom{m}{\ell} \beta^{m+n-\bar{m}_1-\ell}(1-\beta)^{\ell+\bar{m}_1}  \frac{\sigma^{k-\bar{m}_1}}{C} \binom{m_1}{\bar{m}_1}
\frac{\Gamma (\vartheta/\sigma+k-\bar{m}_1)}{\Gamma (\vartheta+n-\bar{m}_1)}   .
\end{split}
\end{split}
\end{equation}
We can now evaluate the expectation of the number of distinct values out of the initial sample, $K_n$, by setting $n=0$
in \eqref{eq:Kmn_posterior} and replacing $m$ with $n$, more precisely we obtain
\begin{equation}
\label{eq:EK_n}
\E [K_n] =  \sum_{\ell=0}^{n}  \left[ \frac{\vartheta}{\sigma} \Big( \frac{(\vartheta+\sigma)_{n-\ell}}{(\vartheta)_{n-\ell}} -1\Big) +\ell \right]   \binom{n}{\ell} (1-\beta)^\ell  \beta^{n-\ell}.
\end{equation}
By reasoning as in Section \ref{sec:posterior_new1}, it is also possible to rewrite the expected value in \eqref{eq:EK_n} in a simpler way
\begin{equation} \label{eq:Kn_beta}
\E [K_n] = \frac{\vartheta}{\sigma} \E  (B_1\beta +1-\beta)^n  +\frac{n \beta}{\sigma} \E [B_1  (B_1\beta+1-\beta)^{n-1}]-\frac{\vartheta}{\sigma} +n (1-\beta)
\end{equation}
where $B_1$ is a Beta with parameters $\vartheta+\sigma$ and $1-\sigma$.  

\section{Strip-and-solid generalized P\'olya urn}\label{sup:urn}

The predictive distribution arising from the contaminated Pitman-Yor process may be described in terms of a generalization of the urn scheme by \cite{Zab97}. The predictive distribution of the contaminated Pitman-Yor process, conditionally on the observations and the latent variables, is given in  \eqref{eq:predictive}:
\begin{equation}
\label{eq:pred_PY_supp}
\begin{split}
&\P (X_{n+1} \in \D x | X_{1:n}, J_{1:m_1} ) =  (1 - \beta)P_0(\D x) + \beta \frac{\vartheta + (k - \bar{M}_{m_1}) \sigma}{\vartheta + n - \bar{M}_{m_1}} Q_0(\D x)\\
&\qquad\qquad + \sum_{i=1}^{m_1}J_i \beta\frac{1 - \sigma}{\vartheta + n- \bar{M}_{m_1}} \delta_{X_i^*}(\D x)+ \sum_{i=m_1 + 1}^{k}\beta\frac{n_i - \sigma}{\vartheta + n - \bar{M}_{m_1}} \delta_{X_i^*}(\D x).
\end{split}
\end{equation}
We now assume that the prior distribution for the parameter $\beta$ is a beta with parameters $\vartheta$ and $\alpha$. Thus, the distribution of $\beta$, conditionally on $X_{1:n}, J_{1:m_1}$ is  again a beta with parameters 
$(n-\bar{M}_{m_1}+\vartheta, \alpha +\bar{M}_{m_1})$, as one can realize from the augmented version of the EPPF with the inclusion of the latent elements. Thus, by integrating \eqref{eq:pred_PY_supp} with respect to the conditional distribution of $\beta$, we obtain
\begin{equation}
\label{eq:pred_PY_urn}
\begin{split}
&\P (X_{n+1} \in \D x | X_{1:n}, J_{1:m_1} ) =  \frac{\bar{M}_{m_1}+\alpha}{\alpha+\vartheta +n } P_0(\D x) +  \frac{\vartheta + (k - \bar{M}_{m_1}) \sigma}{\alpha+\vartheta+n} Q_0(\D x)\\
&\qquad\qquad + \sum_{i=1}^{m_1}J_i \frac{1 - \sigma}{\vartheta+\alpha +n} \delta_{X_i^*}(\D x)+ \sum_{i=m_1 + 1}^{k}\frac{n_i - \sigma}{\vartheta +\alpha+n} \delta_{X_i^*}(\D x).
\end{split}
\end{equation}
The predictive distribution \eqref{eq:pred_PY_urn} may be described through an urn scheme.  The main difference from usual urn schemes is that here we assume the urn composed by two types of balls: strip and solid balls, where the strip balls correspond to elements associated with the contaminant measure while the solid balls can be interpreted as elements associated with the discrete term of the model. Initially the urn is composed by a weight $\alpha$ of strip colored balls and a weight $\vartheta$ of black solid balls. We want to sample an exchangeable sequence from the urn in such a way that the updating rule is \eqref{eq:pred_PY_urn}, and the balls are sampled proportionally to their weight. At the first sampling step, if a strip colored ball is drawn from the urn, then we return the ball in the urn with an additional strip colored ball of a new color. On the other side if we draw a black solid ball, then we return a black ball in the urn with an additional weight $\sigma$ and a solid ball of a new color with weight $1-\sigma$. At the generic $i$th step, one can sample a strip ball of an arbitrary color, a black solid ball or a colored solid ball.
Thus, the updating mechanism of the urn works as follows: i) if we sample a strip ball, we return the strip ball in the urn with another strip ball of a new color having unitary weight; ii) if we sample a black solid ball, we return the solid ball in the urn with a new black solid ball of weight $\sigma$ and a solid ball of a new color having  weight $1 - \sigma$; iii) if we sample a colored solid ball, we return the ball in the urn with an additional  new solid ball of the same color having weight $1$.
Some comments are in order. 
\begin{enumerate}
	\item[a)] The overlying urn scheme describing the distinction among strip and solid balls is fully described by a P\'olya-Eggenberger urn scheme \citep{Pol23}. By introducing a suitable sequence of random variables $J_1, J_2, \dots$, where the generic $J_i = 1$ if the $i$th sampled ball is solid, and $J_i = 0$ otherwise, as in the  standard theory of P\'olya urn schemes we have 
	\begin{equation}\label{eq:lim_urn}
	\lim_{n\to \infty} \frac{1}{n}\sum_{i=1}^n J_i = Z \qquad\text{with}\qquad Z \sim \mathrm{Beta}(\vartheta, \alpha).
	\end{equation}
	See e.g \citet{Mah08}.
	\item[b)] The reinforcement mechanism of the solid balls matches the urn characterization of \cite{Zab97}. In fact, by ignoring the strip balls, we have an urn scheme where if we sample a black ball, we replace the ball in the urn with another black ball of weight $\sigma$ and a ball of a new color with weight $1 - \sigma$, while if we sample a colored ball, we replace the ball in the urn plus another ball of the same color. Such a mechanism is describing the reinforcement of a sequence sampled from a Pitman-Yor process. 
	\item[c)] If we initialize the urn without strip balls, we recover the urn of \cite{Zab97}, and the distribution in \eqref{eq:lim_urn} is degenerating to a mass point at $1$. On the other side, if we initialize the urn without solid ball, the urn is producing a sequence of strip balls of different colors, and the distribution in \eqref{eq:lim_urn} is degenerating to a mass point at $0$.
\end{enumerate}    

\section{Numerical illustrations}\label{sup:new_sample}

\subsection{Comparison on the predictive probability of sampling a new value}\label{sup:prob_new}

A crucial effect of bounding a discrete probability measure with a contaminant measure is the impact of the contamination on sampled sequences. In particular, it is relevant to study how such contaminant measure acts on the probability of sampling a new value at the $(n+1)$th step, conditionally on an already observed sample $(X_1, \ldots , X_n)$. Here we consider as illustrative example a contaminated Pitman-Yor process, and we compare the results with the standard Pitman-Yor process. We show four distinct scenarios: (i) a first less diffuse scenario with $\vartheta = 0.1$ and $\sigma = 0.2$, (ii) a second more diffuse scenario with $\vartheta = 0.1$ and $\sigma = 0.5$, (iii) a third scenario with $\vartheta = 1$ and $\sigma = 0$, (iv) and a fourth diffuse scenario with $\vartheta = 10$ and $\sigma = 0$, were the last two specifications stand for the Dirichlet process case. We then consider the probability of sampling a new species at the $n+1$ step, with $n = 50$ previous observations divided into $k = 30$ distinct values. 

\begin{figure}[!h]
	\centering
	\includegraphics[width = 0.8 \textwidth]{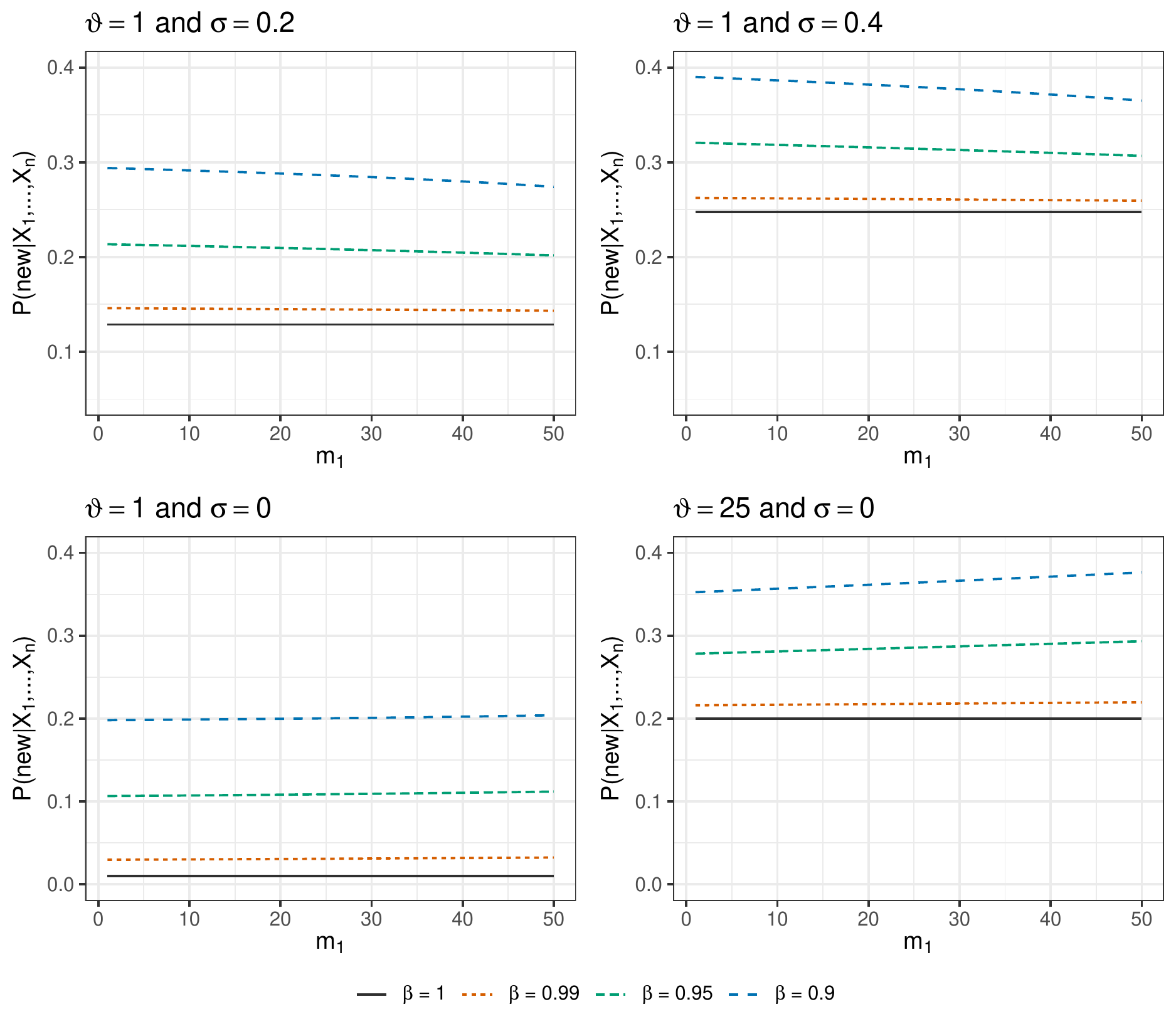}
	\caption{Probability of sampling a new value at the $(n+1)$th step, conditionally on a sample of size $n = 50$. Top-left panel: less diffuse specification, with $\vartheta = 0.1$ and $\sigma = 0.2$. Top-right panel: high diffuse specification, with $\vartheta = 0.1$ and $\sigma = 0.5$. Bottom-left panel: less diffuse specification, with $\vartheta = 1$ and $\sigma = 0$. Bottom-right panel: high diffuse specification with $\vartheta = 10$ and $\sigma = 0$. Different colors and line types correspond to different weights of the contaminant measure. The black horizontal lines corresponds to the models without contamination.}\label{fig:prob_sampling_new}
\end{figure}

One of the peculiarity of the contaminated Gibbs-type priors is that the predictive probability of sampling a new value depends on the sample sizes $n$, the number of already observed species $k$, and the number of observations with frequency one $m_1$ out of the initial sample, while in the Gibbs-type prior it does not depend on $m_1$. Such behavior is appreciable in Figure \ref{fig:prob_sampling_new}, which shows the predictive probabilities that the $(n+1)$th observation is new, i.e., it does not belong to the initial sample, as function of $m_1$, for both the contaminated Pitman-Yor process and the Pitman-Yor process, varying the weight of the contaminant measure and the specification of the discrete term of the model. The probability of sampling a new value, in the Pitman-Yor process, is a constant  function of $m_1$. As far as the weight of the discrete component is increasing, i.e. $\beta \to 1$, the probability is collapsing on the Pitman-Yor case. We further notice that the probability of sampling a new value is a decreasing function of $m_1$ when $\sigma > 0$ (contaminated Pitman-Yor process case), but it is an increasing function of $m_1$ when $\sigma = 0$ (contaminated Dirichlet process case).

\subsection{Out-of-sample prediction}

We perform a simulation study to investigate the difference between the contaminated and non-contaminated Pitman-Yor model in terms of prediction. More precisely we generate a sample $X_1, \dots, X_n$ of size $n=9\,000$ from the contaminated Pitman-Yor model with $\sigma=0.2, \vartheta=50$ and $\beta=0.9$, and we first use the sample to estimate the parameters of the two models. 
Then, we predict the posterior expected values $\E[N_{m,1}^{(n)}\mid X_1,\dots,X_n]$, $\E[N_{m,2}^{(n)}\mid X_1,\dots,X_n]$, and $\E[K_{m}^{(n)}\mid X_1,\dots,X_n]$ for different  additional sample sizes $m=1, \ldots , 1000$ for the two models, using \eqref{eq:Nmr1}, \eqref{eq:N_posterior} and \eqref{eq:Kmn_posterior} with the estimated parameters. The predicted quantities are compared with the \textit{oracle} curves  $n_{m,1}^{(n)}$, $n_{m,2}^{(n)}$ and $k_{m}^{(n)}$, which are obtained averaging over $1000$ trajectories of $N_{m,r}^{(n)}$, as $r=1,2$, and $K_{m}^{(n)}$ from the generating contaminated Pitman-Yor process. 
\begin{figure}[!h]
	\centering
	\includegraphics[width = 0.99 \textwidth]{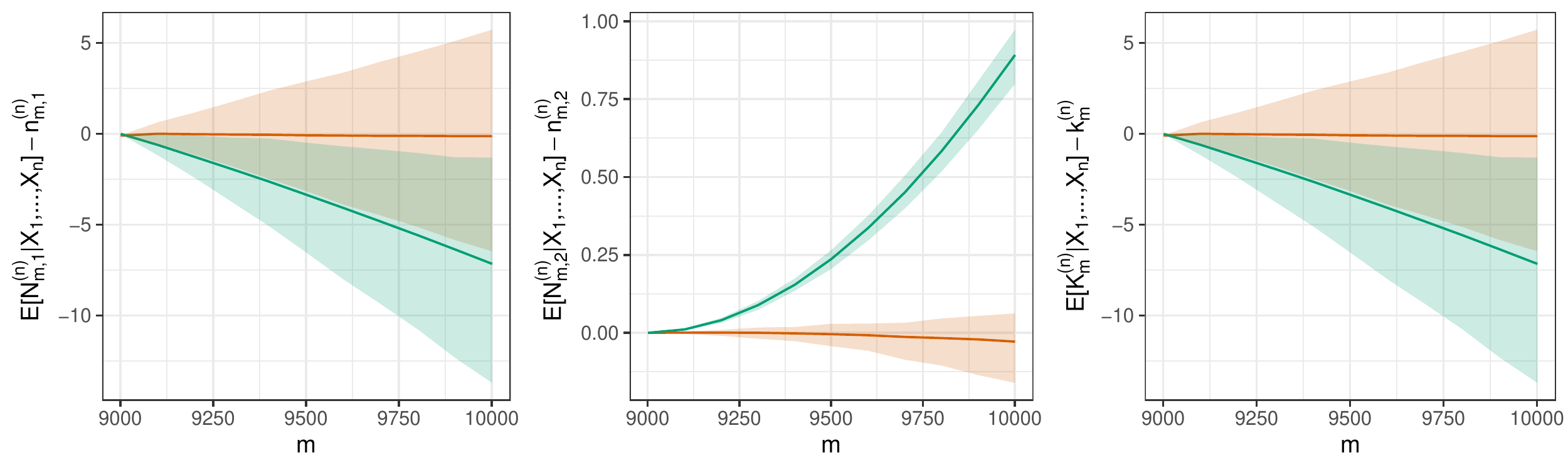}
	\caption{Simulation results of the difference between the number of new species with frequency one, the new number of species with frequency two and the number of new species in an additional sample of size $m$, discarded by their corresponding estimated true values. Two distinct models: the contaminated Pitman-Yor model (green) and the Pitman-Yor model (orange).}\label{fig:new_samp}
\end{figure}

Each panel of Figure \ref{fig:new_samp} refers to a different statistic ($N_{m,1}^{(n)}$, $N_{m,2}^{(n)}$, $K_{m}^{(n)}$).
Each panel contains two curves for different values of the additional sample size $m$: the green curve shows the difference between the  oracle curve and the predicted value under the contaminated prior; the orange curve represents the difference between the oracle and the corresponding prediction under the Pitman-Yor model.
All the experiments are averaged over $100$ iterations.
From Figure \ref{fig:new_samp}, it is appreciable how the contaminated Pitman-Yor process has an error which is averagely stable around zero for all the quantity considered, while the Pitman-Yor process, in presence of contamination, tends to underestimate the number of new elements with frequency one and the number of new distinct elements, and  it is overestimates the number of new elements with frequency two. 

\section{Algorithms}
\subsection{Discrete case}\label{sup:disc}

We can exploit the representation of the EPPF provided in Section \ref{sec:inference}  to sample realizations from the posterior distribution of the main quantities of interest. Let $\Law(\sigma)$, $\Law(\vartheta)$ and $\Law(\beta)$ denote the prior distributions  of $\sigma$, $\vartheta$ and $\beta$, respectively. Here we provide the algorithm to perform posterior inference in the case of contaminated Pitman-Yor model, given a set of data $X_{1:n} := (X_1, \dots, X_n)$, by sampling $R$ realizations from an MCMC scheme.

\begin{algo}\label{algo:discrete}
	Sampling scheme for contaminated Pitman-Yor model.
	\begin{itemize}
		\item[(i)] Set initial values for $\vartheta$, $\bar{m}_1$; 
		\item[(ii)] For $r=1, \dots, R$:
		\begin{itemize}
			\item[(a)] Update $\sigma$ from 
			\[ 
			\begin{split}
			\qquad\Law(\sigma\mid X_{1:n}, \vartheta, \bar m_{1})  \propto \Law(\sigma)\sigma^{k - \bar m_{1}}\prod_{i=m_1+1}^k(1-\sigma)_{n_i - 1} \frac{\Gamma(\vartheta/\sigma + k - \bar m_{1})}{\Gamma(\vartheta / \sigma)};
			\end{split}
			\]
			\item[(b)] Update $\vartheta$ from 
			\[ 
			\qquad\Law(\vartheta\mid X_{1:n},\sigma, \bar m_{1}) \propto \Law(\vartheta) \frac{\Gamma(\vartheta)\Gamma(\vartheta/\sigma + k - \bar m_{1})}{\Gamma(\vartheta / \sigma)\Gamma(\vartheta + n - \bar m_{1})};
			\]
			\item[(c)] Update $\beta$ from 
			\[ 
			\qquad\Law(\beta\mid X_{1:n},\bar m_{1}) \propto \Law(\beta)\beta^{n - \bar m_{1}} (1 - \beta)^{\bar m_{1}};
			\]
			\item[(d)] Update $\bar m_{1}$ from 
			\[ 
			\begin{split}
			\Law(\bar m_{1} \mid X_{1:n}, \sigma, \vartheta,\beta) 
			\propto\binom{m_1}{\bar m_{1}} \beta^{n-\bar m_{1}} (1-\beta)^{\bar m_{1}} \sigma^{k-\bar m_{1}}  \frac{\Gamma(\vartheta/\sigma +k-\bar m_{1})}{\Gamma(\vartheta+n-\bar m_{1})}
			\end{split}
			\]
		\end{itemize}
	\end{itemize}
\end{algo}

We further notice that $\Law(\beta) \stackrel{d}{=} Beta(a,b)$ is a conjugate prior distribution for $\beta$. We assume $\Law(\sigma) \stackrel{d}{=} Beta(a_\sigma, b_\sigma)$ and $\Law(\vartheta) \stackrel{d}{=} Gamma(a_\vartheta, b_\vartheta)$. We made the steps (a) and (b) via Metropolis-Hastings with Gaussian proposal on a transformed scale $(\sigma, \vartheta)\to (\psi, \lambda)$, where $\psi = \log\left(\frac{\sigma}{1- \sigma}\right)$ and $\lambda = \log(\vartheta)$. The variances of the Gaussian proposals can be tuned to reach an optimal acceptance ratio for the Metropolis-Hastings steps \citep[see][]{Rob97}. We further initialize $\bar m_{1}$ by sampling uniformly on the integers from $0$ to $m_1$.

\subsection{Mixture case}\label{sup:mix}
Hereby we describe a sampling strategy to perform posterior inference with a contaminate Pitman-Yor mixture model. Let us denote by $\kernel(\cdot ; \xi)$ a kernel function with support $\R^d$, where $\xi \in \Xi$ denotes a generic set of parameters indexing the distribution of the kernel $\kernel :  \R^d \times \Xi \to \R^+$. We can then use the predictive distribution to specify a marginal sampling scheme, in the spirit of \citet{Esc88} and \citet{Esc95}. Let $Y_1, \dots, Y_n$ be a set of $\R^d$-valued random variables. We denote by $S_1, \dots, S_n$ the variables describing the latent group allocations in the mixture, more precisely we have $S_i = j$ if the $i$th observation belongs to the $j$th group of the mixture, with the proviso $S_i = 0$ if the $i$th observation comes from the diffuse component. For the sake of notational simplicity we define the vectors $Y_{1:n}:= (Y_1, \ldots , Y_n)$ and 
$S_{1:n} := (S_1, \ldots , S_n)$, moreover, for a generic vector $V_{1:n}$, we denote by $V_{(i)}$ the vector $V_{1:n}$ with the $i$th element removed.
Here we provide the algorithm to face posterior inference with the contaminated Pitman-Yor mixture model, by sampling $R$ realizations from an MCMC scheme.
\begin{algo}\label{algo:mixture}
	Sampling scheme for contaminated Pitman-Yor mixture model.
	\begin{itemize}
		\item[(i)] Set initial values for $S_{1:n}$, $\sigma$, $\vartheta$, $\xi_{1:n}$, $\beta$;
		\item[(ii)] For $r=1, \dots, R$: 
		\begin{itemize}
			\item[(a)] For $i = 1, \dots, n$:
			\begin{itemize}
				\item[--] Update the cluster allocation of the $i$th element, where 
				\[ 
				\begin{split}
				P(S_i^{(r)} = j &\mid Y_i,  S_{(i)}, \xi_{(i)})
				\propto \begin{cases}
				\vspace{8pt} (1 - \beta) \int_\Xi \kernel(Y_i; \theta)P_0(\D \xi) &\quad \text{if } j = 0\\
				\vspace{8pt}
				\beta \frac{n_{j(i) } - \sigma}{\vartheta + n - \bar{m}_{1 (i)} - 1}\kernel(Y_i; \xi_{j(i)}^*) &\quad \text{if } j=1,\dots,k_{(i)}\\
				\beta \frac{\vartheta + (k_{(i)} - \bar{m}_{1 (i)})\sigma}{\vartheta + n - \bar{m}_{1 (i)} - 1} \int_\Xi\kernel(Y_i; \theta)Q_0(\D \theta) &\quad \text{if }  j = k_{(i)} + 1
				\end{cases}
				\end{split}
				\]
				where $\xi_{1(i)}^*, \dots, \xi_{k(i)}^*$ denote the unique values in $\xi_{(i)}$ with frequencies $n_{1(i) }, \dots, n_{k (i)}$,  $k_{(i)}$ represents the number of  distinct unique elements and $\bar{m}_{1,(i)} = \sum_{\ell=1, \ell \neq i}^n \indic_{[S_\ell = 0]}$;
				\item[--] Discard the empty clusters;
			\end{itemize}
		\end{itemize}
	\end{itemize}
	Let $\bar{m}_{1} =\sum_{i=1}^n \indic_{[S_i = 0]}$, $m_1 = \bar m_{1} + \sum_{j=1}^k\indic_{[\#\{i:S_i = S_j^*\} = 1]}$, and $n_1, \dots, n_k$ denote the frequencies of the unique values $\xi_1^*,\dots, \xi_k^*$ out of $\xi_1, \dots, \xi_n$, with $k = \max(S_1, \dots, S_n)$.
	\vspace{6pt}
	\begin{itemize}
		\item[(b)] For $j = 1, \dots, k$:
		\begin{itemize}
			\item[--] Update the $j$th unique value $\xi_j^*$ from 
			\[
			\Law(\xi_j^*\mid Y_{1:n}, S_{1:n}) \propto Q_0(\xi_j^*)\prod_{\{i: S_i = j\}}\kernel(Y_i; \xi_j^*);
			\]
		\end{itemize}
		\item[(c)] Update the parameters of the discrete component from 
		\[ 
		\begin{split}
		\qquad\Law(\sigma\mid S_{1:n}, \vartheta, \bar{m}_{1})  \propto \Law(\sigma)\sigma^{k - \bar{m}_{1}}\prod_{i=m_1}^k(1-\sigma)_{n_i - 1} \frac{\Gamma(\vartheta/\sigma + k - \bar{m}_{1})}{\Gamma(\vartheta / \sigma)},
		\end{split}
		\]
		and 
		\[ 
		\qquad\Law(\vartheta\mid S_{1:n},\sigma, \bar{m}_{1}) \propto \Law(\vartheta) \frac{\Gamma(\vartheta)\Gamma(\vartheta/\sigma + k - \bar{m}_{1})}{\Gamma(\vartheta / \sigma)\Gamma(\vartheta + n - \bar{m}_{1})};
		\]
		\item[(d)] Update $\beta$ from 
		\[ 
		\qquad\Law(\beta\mid S_{1:n},\bar{m}_{1}) \propto \Law(\beta)\beta^{n - \bar{m}_{1}} (1 - \beta)^{\bar{m}_{1}};
		\]
	\end{itemize}
\end{algo}

The acceleration step (b) is not mandatory, but it improves the mixing performances of the algorithm. The integral in the predictive distribution of step (a) can be easily solved for suitable choices of the kernel function and the measures $P_0$ and $Q_0$, leading to a closed form expression for the predictive distribution. Otherwise such integral can be approximated via Monte Carlo methods, in the spirit of Algorithm 8 in \citet{Nea00}. In the simulation study and application we consider both $P_0$ and $Q_0$ distributed as Normal-Inverse-Wishart distributions, and a Gaussian kernel function. Such choices lead to a closed form of the predictive distribution, which corresponds to a multivariate Student's $t$ distribution.



\clearpage
\section{Simulation studies}  \label{app_simulazioni}

\subsection{Simulation studies: discrete data}\label{sup:sim_disc}

We simulated a set of $n$ observations $X_1, \ldots , X_n$ generated from a possibly contaminated model, i.e., $X_i | \tilde{p} \simiid \tilde{p}$, where $\tilde{p}$ can be the contaminated Pitman-Yor process of Example \ref{ex:cPY1} or the Pitman-Yor process.
We considered different specifications of the data generating process' parameters, by choosing: $n \in \{ 10\,000, 50\,000\}$, $\vartheta \in \{25, 100\}$, $\sigma \in \{0.2, 0.4, 0.6\}$ and $\beta \in \{ 0.9, 0.95, 1 \}$, where $\beta = 1$ corresponds to the Pitman-Yor process case.
We then estimated the parameters of the model when $\tilde{p}$ is a contaminated Pitman-Yor process and when $\tilde{p}$ is a Pitman-Yor process.  Posterior inference for the contaminated model is carried out through Algorithm \ref{algo:discrete}, and a similar algorithm has been used in the Pitman-Yor case. We selected the following prior distributions for the parameters, corresponding to vague specifications: $\vartheta \sim \mathrm{Gamma}(2, 0.02)$, $\sigma \sim \mathrm{Unif}(0,1)$ and $\beta \sim \mathrm{Unif}(0,1)$.
We tune the variances of the Metropolis-Hastings steps for the update of $\vartheta$ and $\sigma$ in order to achieve optimal acceptance rates \citep{Rob97}. The MCMC method is based on $15\,000$ iterations, including a burn-in period of $5\,000$ iterations, moreover the values are thinned each $10$ iterations, obtaining a final sample of size $1\,000$. The convergence of the chains was assessed by randomly checking some replications, without any evidence against it.


\begin{table}[!h]
	\centering
	\begin{tabular}{lllll rrr}
		&      &       &       & $\sigma$ & \multicolumn{1}{c}{0.2} & \multicolumn{1}{c}{0.4} & \multicolumn{1}{c}{0.6} \\ 
		& $\beta$ & $n$     & $\theta$ &       & \multicolumn{1}{l}{     } & \multicolumn{1}{l}{     } & \multicolumn{1}{l}{     } \\ 
		CPY&0.9&10000&25 & & $0.14 \,(0.08)$ & $0.37 \,(0.05)$ & $0.58 \,(0.04)$\\
		&&&100 & & $0.17 \,(0.06)$ & $0.39 \,(0.04)$ & $0.59 \,(0.04)$\\
		&&50000&25 & & $0.18 \,(0.04)$ & $0.39 \,(0.02)$ & $0.6 \,(0.02)$\\
		&&&100 & & $0.19 \,(0.03)$ & $0.4 \,(0.02)$ & $0.6 \,(0.02)$\\
		&0.95&10000&25 & & $0.14 \,(0.08)$ & $0.38 \,(0.05)$ & $0.59 \,(0.04)$\\
		&&&100 & & $0.17 \,(0.07)$ & $0.39 \,(0.05)$ & $0.59 \,(0.03)$\\
		&&50000&25 & & $0.19 \,(0.03)$ & $0.39 \,(0.03)$ & $0.6 \,(0.02)$\\
		&&&100 & & $0.19 \,(0.03)$ & $0.4 \,(0.02)$ & $0.6 \,(0.02)$\\
		&1&10000&25 & & $0.1 \,(0.07)$ & $0.35 \,(0.03)$ & $0.56 \,(0.02)$\\
		&&&100 & & $0.13 \,(0.05)$ & $0.36 \,(0.03)$ & $0.56 \,(0.02)$\\
		&&50000&25 & & $0.16 \,(0.03)$ & $0.37 \,(0.02)$ & $0.58 \,(0.01)$\\
		&&&100 & & $0.18 \,(0.02)$ & $0.39 \,(0.01)$ & $0.59 \,(0.01)$\\
		PY&0.9&10000&25 & & $0.81 \,(0.02)$ & $0.76 \,(0.01)$ & $0.76 \,(0.01)$\\
		&&&100 & & $0.67 \,(0.01)$ & $0.67 \,(0.01)$ & $0.72 \,(0.01)$\\
		&&50000&25 & & $0.93 \,(0.01)$ & $0.87 \,(0.01)$ & $0.82 \,(0.01)$\\
		&&&100 & & $0.83 \,(0.01)$ & $0.78 \,(0.01)$ & $0.77 \,(0.01)$\\
		&0.95&10000&25 & & $0.68 \,(0.02)$ & $0.65 \,(0.02)$ & $0.7 \,(0.01)$\\
		&&&100 & & $0.53 \,(0.01)$ & $0.57 \,(0.01)$ & $0.67 \,(0.01)$\\
		&&50000&25 & & $0.86 \,(0.01)$ & $0.78 \,(0.01)$ & $0.75 \,(0.01)$\\
		&&&100 & & $0.71 \,(0.01)$ & $0.67 \,(0.01)$ & $0.71 \,(0.01)$\\
		&1&10000&25 & & $0.18 \,(0.05)$ & $0.39 \,(0.02)$ & $0.6 \,(0.01)$\\
		&&&100 & & $0.19 \,(0.03)$ & $0.4 \,(0.02)$ & $0.6 \,(0.01)$\\
		&&50000&25 & & $0.19 \,(0.02)$ & $0.4 \,(0.01)$ & $0.6 \,(0.01)$\\
		&&&100 & & $0.2 \,(0.02)$ & $0.4 \,(0.01)$ & $0.6 \,(0.01)$\\
	\end{tabular}
	\caption{Posterior mean (and standard deviation) estimates of the parameter $\sigma$ for the contaminated Pitman-Yor process (CPY) and the Pitman-Yor process (PY) in different simulated scenarios. All the estimates are averaged over $100$ replications.}
	\label{tab:sigma}
\end{table}

\begin{table}[ht]
	\centering
	\begin{tabular}{llll rrr}
		&       &       & $\sigma$ & \multicolumn{1}{c}{    0.2} & \multicolumn{1}{c}{    0.4} & \multicolumn{1}{c}{    0.6} \\ 
		$\beta$ & $n$     & $\theta$ &       & \multicolumn{1}{l}{       } & \multicolumn{1}{l}{       } & \multicolumn{1}{l}{       } \\ 
		0.9&10000&25 & & $1008.41 \,(19.7)$ & $1012.78 \,(39.6)$ & $1021.15 \,(145.12)$\\
		&&100 & & $1009.15 \,(39.88)$ & $1003.96 \,(74.53)$ & $1003.74 \,(207.73)$\\
		&50000&25 & & $5005.76 \,(18.36)$ & $5007.45 \,(53.24)$ & $4981.76 \,(192.49)$\\
		&&100 & & $5003.09 \,(35.41)$ & $5002.53 \,(97.65)$ & $5011.61 \,(279.41)$\\
		0.95&10000&25 & & $509.3 \,(21.48)$ & $503.69 \,(45.43)$ & $515.07 \,(132.24)$\\
		&&100 & & $515.68 \,(44.18)$ & $499.72 \,(85.18)$ & $537.85 \,(165.13)$\\
		&50000&25 & & $2503.8 \,(18.79)$ & $2512.59 \,(49.75)$ & $2494.11 \,(184.61)$\\
		&&100 & & $2506.53 \,(34.71)$ & $2494.05 \,(87.37)$ & $2448.66 \,(304.99)$\\
		1&10000&25 & & $20.42 \,(9.91)$ & $37.42 \,(15.72)$ & $127.84 \,(55.58)$\\
		&&100 & & $37.98 \,(20.48)$ & $70.5 \,(33.63)$ & $181.82 \,(80.47)$\\
		&50000&25 & & $18.59 \,(9.13)$ & $51.24 \,(24.23)$ & $173.53 \,(89.17)$\\
		&&100 & & $30.69 \,(13.81)$ & $74.85 \,(37.6)$ & $226.81 \,(108.3)$\\
	\end{tabular}
	\caption{Summaries of the posterior mean (and standard deviation) estimates of $\bar m_1$, for different scenarios, averaged over $100$ replications, for the contaminated Pitman-Yor process. The true value of $\bar m_1$ equals $n (1 - \beta)$.}
	\label{tab:k1}
\end{table}

\begin{table}[h!]
	\centering
	\begin{tabular}{lllll rrr}
		&      &       &       & $\sigma$ & \multicolumn{1}{c}{    0.2} & \multicolumn{1}{c}{    0.4} & \multicolumn{1}{c}{    0.6} \\ 
		& $\beta$ & $n$     & $\theta$ &       & \multicolumn{1}{l}{       } & \multicolumn{1}{l}{       } & \multicolumn{1}{l}{       } \\ 
		CPY&0.9&10000&25 & & $32.25 \,(9.34)$ & $31.21 \,(7.85)$ & $32.78 \,(10.87)$\\
		&&&100 & & $108.68 \,(21.27)$ & $103.46 \,(17.3)$ & $106.82 \,(22.68)$\\
		&&50000&25 & & $28.97 \,(6.79)$ & $27.77 \,(6.33)$ & $26.96 \,(7.47)$\\
		&&&100 & & $104.03 \,(13.6)$ & $99.93 \,(14.91)$ & $104.51 \,(16.9)$\\
		&0.95&10000&25 & & $32.7 \,(9.91)$ & $29.49 \,(7.63)$ & $30.54 \,(9.84)$\\
		&&&100 & & $110.05 \,(21.7)$ & $105.91 \,(18.67)$ & $103.98 \,(20.47)$\\
		&&50000&25 & & $27.72 \,(4.99)$ & $28.58 \,(5.87)$ & $28.44 \,(7.88)$\\
		&&&100 & & $104.48 \,(12.74)$ & $99.79 \,(13.41)$ & $98.51 \,(17.77)$\\
		&1&10000&25 & & $35.04 \,(8.27)$ & $31.53 \,(6.85)$ & $35.02 \,(9.12)$\\
		&&&100 & & $118.04 \,(19.56)$ & $111.85 \,(15.02)$ & $112.76 \,(18.86)$\\
		&&50000&25 & & $30.76 \,(5.13)$ & $30.22 \,(5.41)$ & $30.24 \,(8.06)$\\
		&&&100 & & $107.15 \,(10.76)$ & $108.57 \,(14.75)$ & $110.23 \,(16.39)$\\
		PY&0.9&10000&25 & & $2.31 \,(0.36)$ & $5.01 \,(1.12)$ & $13.49 \,(4.59)$\\
		&&&100 & & $18.05 \,(3.47)$ & $36.71 \,(7.11)$ & $65.57 \,(14.23)$\\
		&&50000&25 & & $1.18 \,(0.12)$ & $1.79 \,(0.21)$ & $4.89 \,(1.2)$\\
		&&&100 & & $2.56 \,(0.27)$ & $7.03 \,(1.29)$ & $29.52 \,(6)$\\
		&0.95&10000&25 & & $3.54 \,(0.83)$ & $8.45 \,(2.22)$ & $17.38 \,(5.33)$\\
		&&&100 & & $35.7 \,(5.28)$ & $58.51 \,(9.35)$ & $77.34 \,(14.46)$\\
		&&50000&25 & & $1.12 \,(0.13)$ & $2.38 \,(0.41)$ & $8.52 \,(2.58)$\\
		&&&100 & & $5.01 \,(0.81)$ & $18.17 \,(3.16)$ & $48.4 \,(9.29)$\\
		&1&10000&25 & & $28.46 \,(6.41)$ & $27.48 \,(6.14)$ & $29.46 \,(7.54)$\\
		&&&100 & & $103.97 \,(13.86)$ & $100.86 \,(13.1)$ & $100.41 \,(16.63)$\\
		&&50000&25 & & $27.67 \,(4.4)$ & $26.95 \,(4.85)$ & $26.72 \,(6.91)$\\
		&&&100 & & $101.03 \,(10.35)$ & $101.85 \,(13.31)$ & $102.23 \,(14.61)$\\
	\end{tabular}
	\caption{Summaries of the posterior mean (and standard deviation) estimates of $\vartheta$, averaged for $100$ of replications, for the contaminated Pitman-Yor process (CPY) and the Pitman-Yor process (PY).}
	\label{tab:theta}
\end{table}

\begin{table}[ht]
	\centering
	\begin{tabular}{llll rrr}
		&       &       & $\sigma$ & \multicolumn{1}{c}{  0.2} & \multicolumn{1}{c}{  0.4} & \multicolumn{1}{c}{  0.6} \\ 
		$\beta$ & $n$     & $\theta$ &       & \multicolumn{1}{l}{     } & \multicolumn{1}{l}{     } & \multicolumn{1}{l}{     } \\ 
		0.9&10000&25 & & $0.899 \,(0.002)$ & $0.899 \,(0.004)$ & $0.898 \,(0.015)$\\
		&&100 & & $0.899 \,(0.004)$ & $0.9 \,(0.007)$ & $0.9 \,(0.021)$\\
		&50000&25 & & $0.9 \,(0)$ & $0.9 \,(0.001)$ & $0.9 \,(0.004)$\\
		&&100 & & $0.9 \,(0.001)$ & $0.9 \,(0.002)$ & $0.9 \,(0.006)$\\
		0.95&10000&25 & & $0.949 \,(0.002)$ & $0.95 \,(0.005)$ & $0.948 \,(0.013)$\\
		&&100 & & $0.948 \,(0.004)$ & $0.95 \,(0.009)$ & $0.946 \,(0.017)$\\
		&50000&25 & & $0.95 \,(0)$ & $0.95 \,(0.001)$ & $0.95 \,(0.004)$\\
		&&100 & & $0.95 \,(0.001)$ & $0.95 \,(0.002)$ & $0.951 \,(0.006)$\\
		1&10000&25 & & $0.998 \,(0.001)$ & $0.996 \,(0.002)$ & $0.987 \,(0.006)$\\
		&&100 & & $0.996 \,(0.002)$ & $0.993 \,(0.003)$ & $0.982 \,(0.008)$\\
		&50000&25 & & $1 \,(0)$ & $0.999 \,(0)$ & $0.997 \,(0.002)$\\
		&&100 & & $0.999 \,(0)$ & $0.998 \,(0.001)$ & $0.995 \,(0.002)$\\
	\end{tabular}
	\caption{Summaries of the posterior mean (and standard deviation) estimates of $\beta$, for different scenarios, averaged over $100$ replications, for the contaminated Pitman-Yor process.}
	\label{tab:beta}
\end{table}

Table \ref{tab:sigma} shows the summaries of the posterior inference on $\sigma$ for the different scenarios, corresponding to different parametrizations of the data generating process.  All the estimates are averaged over $100$ replications. 
When the simulated data are contaminated, the Pitman-Yor process leads to a heavy misleading posterior inference for the discount parameter, due to the presence of a large number of singletons. On the counterpart, the contaminated model is slightly underestimating the discount parameter in absence of contaminant observations. We further remark that when both  $\vartheta$ and $\sigma$ are small, the value of $\sigma$ is slightly underestimated in the case of contaminated Pitman-Yor model. 

Similarly, Table \ref{tab:theta} shows the posterior means of $\vartheta$ for both the models, while Table \ref{tab:beta} and Table \ref{tab:k1} shows the posterior means of $\beta$ and the posterior estimates of the number of structural singletons respectively, for the contaminated Pitman-Yor model. We remark that estimating the number of structural singletons is crucial in several applied fields, such as disclosure risk assessment and language modeling. See Section \ref{sec:discussion} for further details on these applications.

\subsection{Simulation study with continuous data}\label{sec:cont_simu}

We now move to the mixture scenario, by considering a set of continuous data taking values in $\R^d$. We simulate a set of data from a mixture of two Gaussian distributions with density $f(y) = 0.5 \phi_d(y;-3, \mathrm{diag}_d(1)) + 0.5 \phi_d(y;3,\mathrm{diag}_d(1))$, where $\phi_d(\cdot; a, B)$ denotes the density of a $d$-dimensional Gaussian random variable with mean vector $a$ and covariance matrix $B$, and $\mathrm{diag}_d(b)$ denotes a diagonal matrix of dimension $d \times d$ with diagonal elements equal to $b$. After sampling $m$ observations $Y_1, \dots, Y_m$, we then augment the sample with additional $s$ outliers $cY_{m+1}, \dots, cY_{m+s}$ from an over-disperse truncated Gaussian distribution 
\[
\phi(y; 0, \mathrm{diag}_d(3^2))\mathbb \indic_{[(-\infty, -3\sqrt{\chi_d^2(0.9)})\cup(3\sqrt{\chi_d^2(0.9)}, \infty)]}\left(\lvert\lvert y \rvert\rvert_2^2\right)
\]
i.e. a multivariate Gaussian distribution with support $\R^d$ minus the $d$-dimensional sphere of radius $3\sqrt{\chi_{d}^2(0.9)}$ centered at the origin, where $\chi_d^2(0.9)$ denotes the quantile of order $0.9$ of a Chi-square distribution with $d$ degrees of freedom, obtaining a sample of size $n=m+s$; the parameter  $c$  has the role to shrink or expand the nuisance observations towards the origin. We consider different simulated scenarios by selecting: $s = 10$, $m \in \{90, 240\}$, $d \in \{2, 4\}$ and $c \in \{1, 1.25\}$. Figure \ref{fig:ex_mix_simu} shows an example of simulate dataset for different values of the scaling constant $c$. 

\begin{figure}[!h]
	\centering
	\includegraphics[width = 0.9 \textwidth]{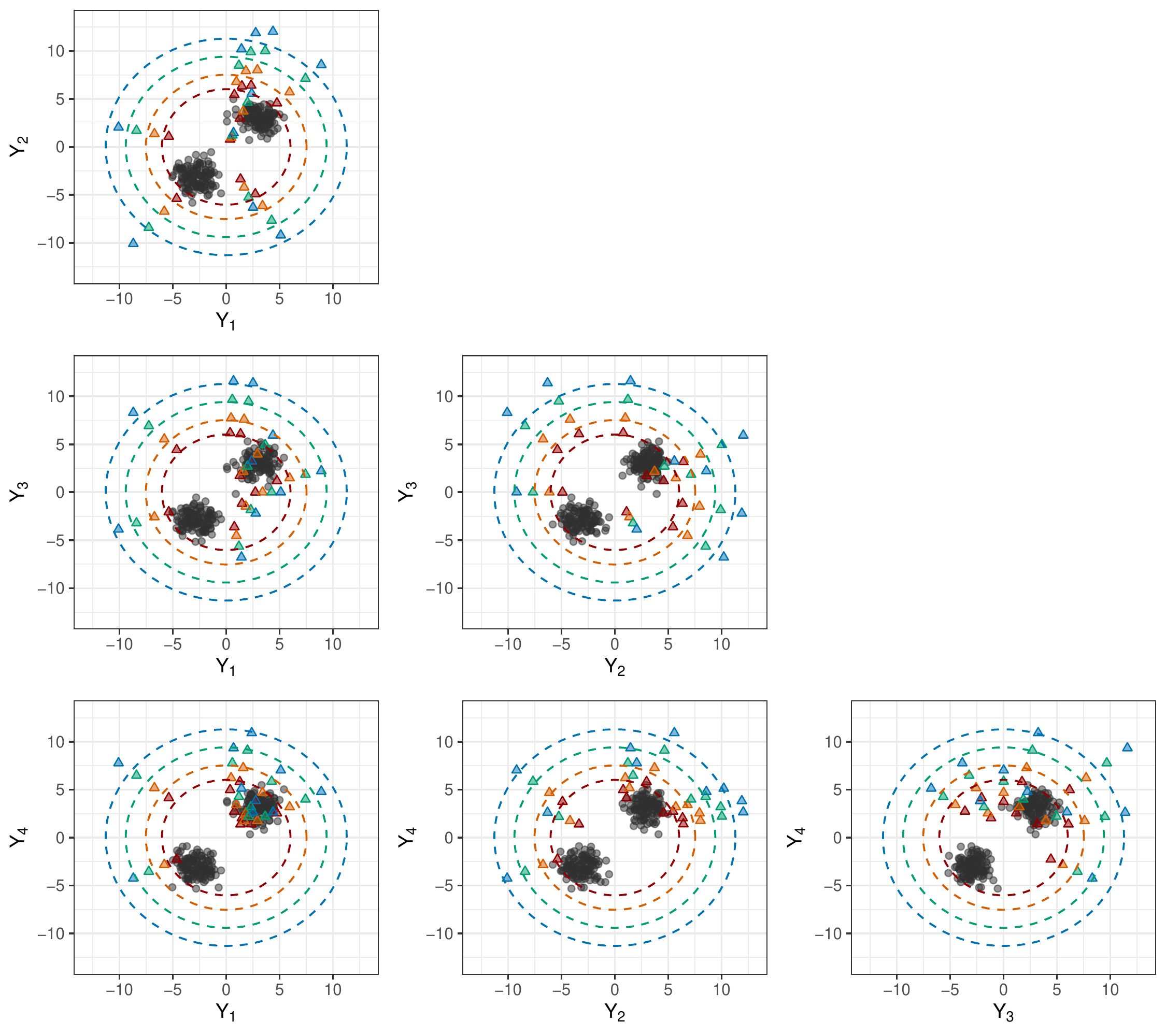}
	\caption{An example of simulated data set with dimension $d = 4$. Different colors correspond to different values of the scaling constant $c$, the dashed lines correspond to the hypersphere of radius equal to the square root of $3^2\times\chi_4^2(0.9)$, the $0.9$ quantile of a Chi-square distribution with $4$ degrees of freedom, multiplied for different scaling constants.}\label{fig:ex_mix_simu}
\end{figure}

We face posterior inference with the mixture model described in Section \ref{sec:mixture}, letting $\tilde{p}$ distributed as a contaminated Pitman-Yor process. We assume $P_0$ and $Q_0$ distributed as Normal-Inverse-Wishart distributions, with $Q_0 \sim \mathrm{NIW}(\mu_0, \kappa_0, \nu_0, S_0)$ and $P_0 \sim \mathrm{NIW}(\mu_1, \kappa_1, \nu_1, S_1)$. We set $\mu_0 = 0$, $\nu_0 = d + 3$, $\mu_1 = 0$, $\nu_1 = d + 3$, and $S_0 = S_1$ matching the diagonal of the sample variance of the data across all the scenarios. We consider two distinct specifications of the contaminated Pitman-Yor mixture model: i) CPY$_1$, here  the base measure is different from the contaminant measure, and we set $\kappa_0 = 1$, $\kappa_1 = 0.25$, so that we allow an over-disperse contaminant measure $P_0$; ii) CPY$_2$, a scenario where the base measure and the diffuse component coincide, we assume $\kappa_0 = \kappa_1 = 0.5$.  We further consider the Pitman-Yor mixture model (PY) with base measure $Q_0 \sim \mathrm{NIW}(\mu_0, \kappa_0, \nu_0, S_0)$  and $\mu_0 = 0$, $\kappa_0 = 1$, $\nu_0 = d+3$, and $S_0$ equals the diagonal of the sample variance of the data. We finally consider the following vague prior specifications for the parameters: $\vartheta \sim \mathrm{Gamma}(2, 0.02)$, $\sigma \sim \mathrm{Unif}(0,1)$, and for the contaminated models  $\beta \sim \mathrm{Unif}(0,1)$. 

We applied the algorithm described in Section \ref{sup:mix} to provide posterior estimates of the number of outliers out of the sample. In order to do this we have  identified  the optimal posterior point estimate of the latent partition of the data, which has been obtained minimizing the variation of information loss function  among the sampled partitions \citep[see][]{Wad18,Ras18}. 
The MCMC procedure is based on $15\,000$ iterations, including a burn-in period of  $5\,000$ iterations. The convergence of the chains was assessed by randomly checking some of the replications, without any evidence against it. Table \ref{tab:sing} reports the number of singletons in the posterior point estimate of the latent partition of the data, while the true number of outliers equals $10$.

\begin{table}[!h]
	\centering
	\begin{tabular}{lllrrr}
		&          &      & \multicolumn{1}{c}{ $c =  1$} & \multicolumn{1}{c}{ $c =  1.25$} & \multicolumn{1}{c}{ $c =  1.5$} \\ 
		$d$ & $n$     & model &  \multicolumn{1}{l}{         } & \multicolumn{1}{l}{       } & \multicolumn{1}{l}{          } \\ 
		$2$  & $100$ & CPY$_1$ & $9.10\,(1.63)$ & $9.78\,(1.34)$ & $9.63\,(1.57)$  \\ 
		&       & CPY$_2$ & $4.58\,(3.54)$ & $3.73\,(3.81)$ & $2.50\,(3.21)$ \\ 
		&       & PY      & $1.10\,(9.99)$ & $0.09\,(0.35)$ & $0.05\,(0.29)$ \\ 
		& $250$ & CPY$_1$  & $9.67\,(6.53)$ & $8.87\,(2.59)$ & $8.36\,(3.19)$ \\ 
		&       & CPY$_2$ & $8.16\,(10.94)$ & $6.88\,(10.29)$ & $10.32\,(17.19)$ \\ 
		&       & PY      & $0.06\,(0.23)$ & $0.03\,(0.17)$ & $0.02\,(0.14)$ \\ 
		$4$  & $100$ & CPY$_1$ & $10.68\,(4.62)$ & $10.41\,(2.12)$ & $10.89\,(3.31)$ \\ 
		&       & CPY$_2$ & $8.46\,(3.40)$ & $9.50\,(4.99)$ & $9.09\,(4.41)$ \\ 
		&       & PY      & $0.66\,(1.87)$ & $0.54\,(1.67)$ & $0.62\,(1.67)$ \\ 
		& $250$ & CPY$_1$ & $14.26\,(9.15)$ & $13.15\,(5.71)$ & $13.07\,(7.91)$ \\ 
		&       & CPY$_2$ & $17.36\,(14.24)$ & $15.55\,(12.93)$ & $16.65\,(14.25)$ \\ 
		&       & PY      & $0.36\,(1.31)$ & $0.55\,(1.59)$ & $0.70\,(1.94)$
	\end{tabular}
	\caption{Posterior mean (and standard deviation) of the number of singletons, averaged over $100$ replications, detected with the two specifications of the contaminated Pitman-Yor mixture model and the Pitman-Yor mixture model.}
	\label{tab:sing}
\end{table}

We can appreciate how the Pitman-Yor mixture model lack in flexibility to estimate the number of outliers in a set of data. Among the contaminated models, the model with $P_0 \neq Q_0$, denote by CPY$_1$, shows overall estimates closer to the true number of contaminants compared to the case with $P_0 = Q_0$, with also a smaller uncertainty over different replications.

\clearpage
\section{North America Ranidae dataset}\label{sup:disc_app}

To analyze the North America Ranidae dataset, we applied  Algorithm \ref{algo:discrete} described in Section \ref{sup:disc}. We have tuned the variances of the Metropolis-Hastings steps in order to attain optimal acceptance rates. We produced a raw chain of $35\,000$ iterations, which includes $10\,000$ burn-in iterations, and we thinned the chain every $10$ realizations, obtaining a final sample of size $2\,500$. We report here the posterior summaries and the traceplots for the main parameters of the contaminated Pitman-Yor model and the Pitman-Yor model estimated with North America Ranidae data. 

\begin{figure}[h]
	\centering
	\includegraphics[width = 0.79 \textwidth]{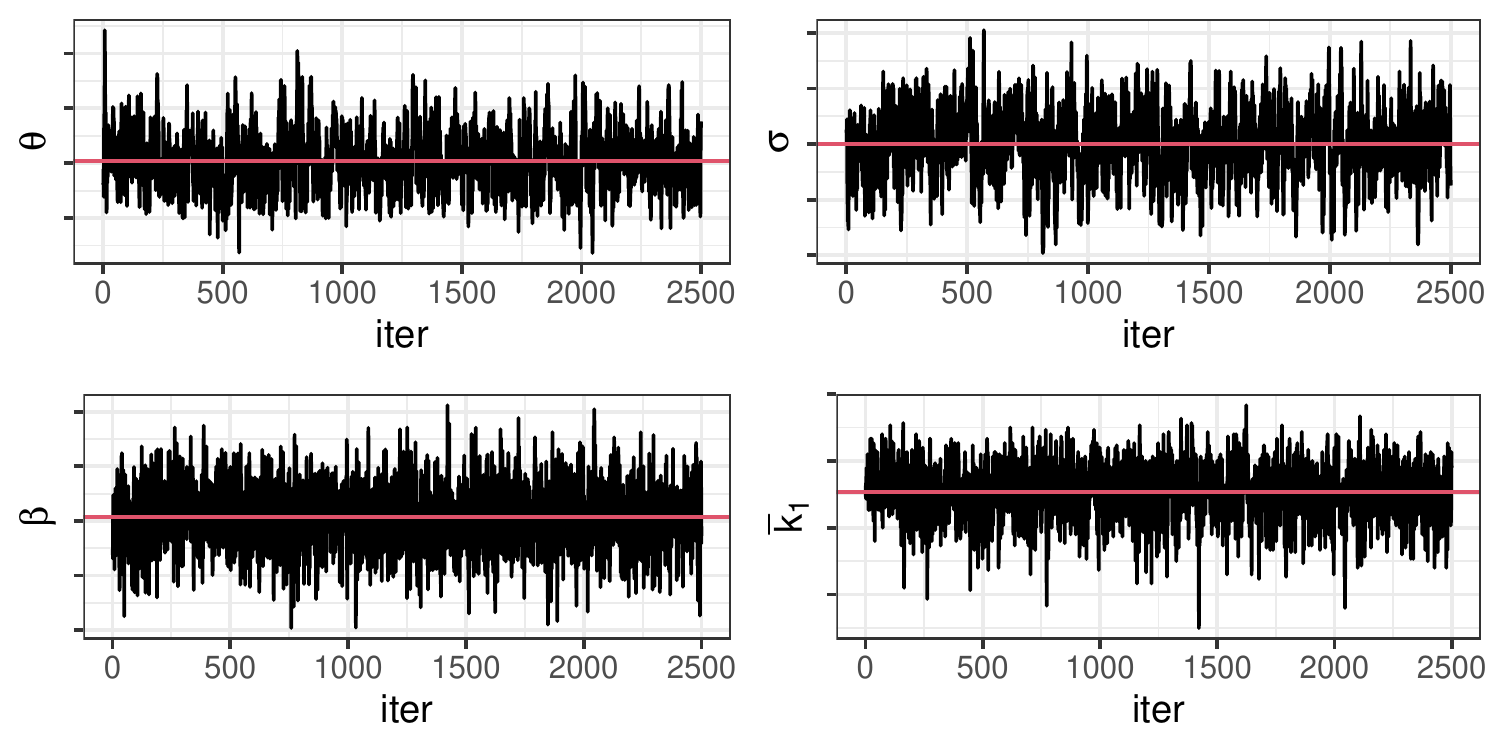}
	\caption{Traceplots for the contaminated Pitman-Yor model. Top-left panel: traceplot of $\vartheta$. Top-right panel: traceplot of $\sigma$. Bottom-left panel: traceplot of $\beta$. Bottom-right panel: traceplot of $\bar m_1$.}\label{fig:trace_PY}
\end{figure}
\begin{figure}[h]
	\centering
	\includegraphics[width = 0.79 \textwidth]{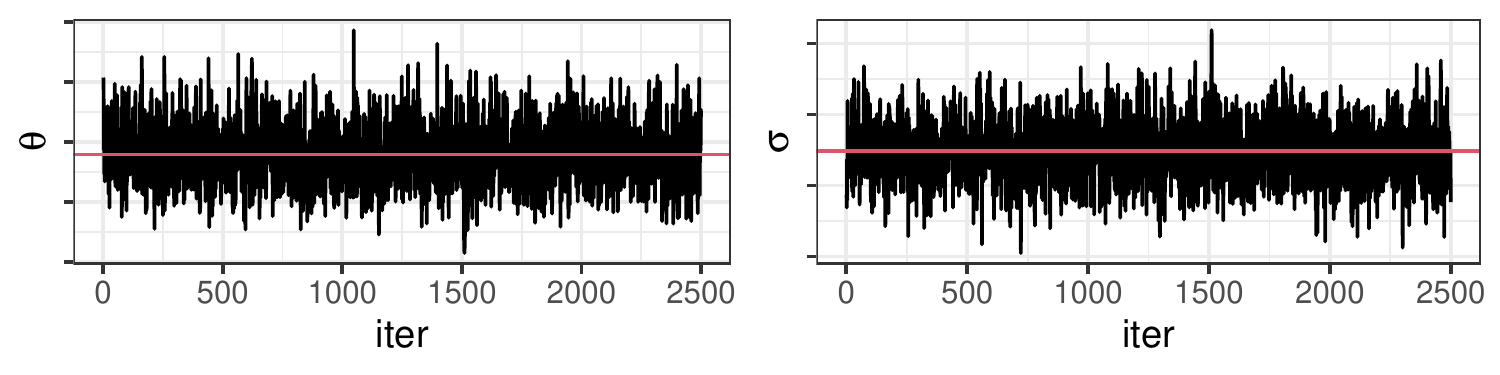}
	\caption{Traceplots for the Pitman-Yor model. Left panel: traceplot of $\vartheta$. Right panel: traceplot of $\sigma$.}\label{fig:trace_cPY}
\end{figure}

\begin{table}[h]
	\centering
	\begin{tabular}{lllll}
		\\
		& mean & SD & ESS      & Geweke diagnostic \\
		$\vartheta$ & 30.466         & 5.172          & 229.066  & 0.642             \\
		$\sigma$    & 0.099          & 0.031          & 254.958  & -0.612            \\
		$\beta$     & 0.998          & 0.00015        & 1186.791 & -0.520            \\
		$\bar m_1$  & 226.196        & 12.373         & 524.699  & 1.115            
	\end{tabular}
	\caption{Posterior summaries of the chain sampled from the posterior distribution of the contaminated Pitman-Yor model with the north America ranidae dataset.}
	\label{tab:post_cPY}
\end{table}

\begin{table}[h]
	\centering
	\begin{tabular}{lllll}
		& mean & SD & ESS      & Geweke diagnostic \\
		$\vartheta$ & 9.112          & 2.561          & 534.603  & 0.328             \\
		$\sigma$    & 0.339          & 0.0175         & 1141.642 & 0.872            
	\end{tabular}
	\caption{Posterior summaries of the chain sampled from the posterior distribution of the Pitman-Yor model with the north America ranidae dataset.}
	\label{tab:post_PY}
\end{table}
%

The traceplots reported in figure \ref{fig:trace_cPY} and \ref{fig:trace_PY} show a good mixing of the chains, with random spikes and without systematic behavior. Tables \ref{tab:post_cPY} and \ref{tab:post_PY} shows posterior summaries of the produced chains.

\begin{figure}[!h]
	\centering
	\includegraphics[width = 0.4 \textwidth]{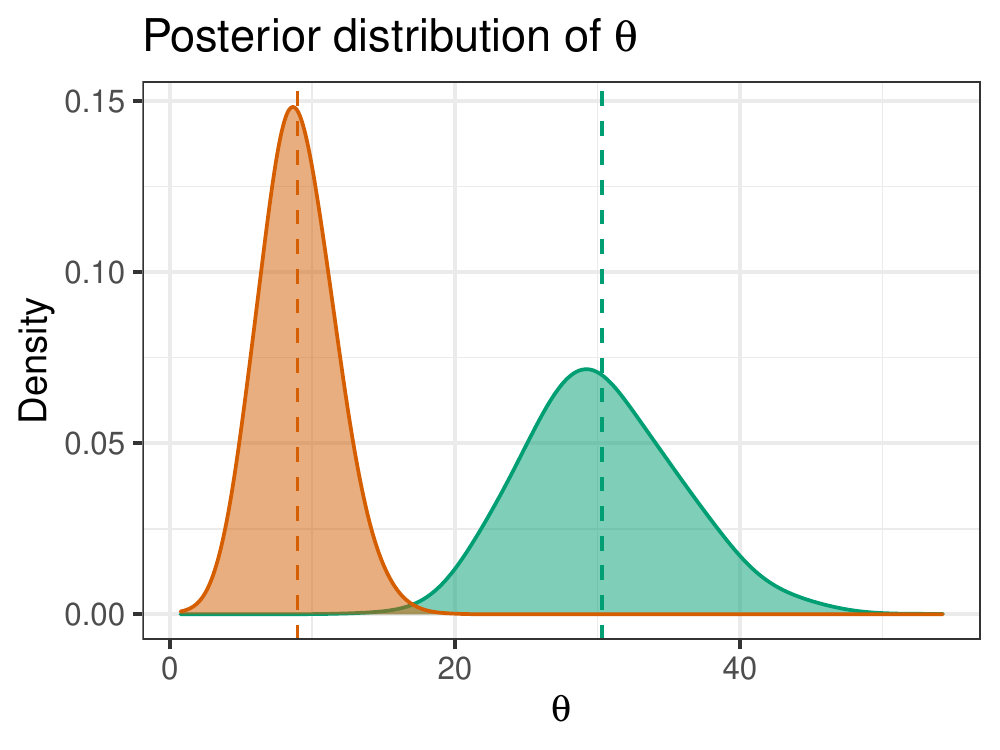}
	\caption{Posterior distributions of the parameter $\vartheta$ for the contaminated Pitman-Yor model (green) and the Pitman-Yor model (orange).}\label{fig:post_plot3}
\end{figure}

\begin{figure}[!h]
	\centering
	\includegraphics[width = 0.8 \textwidth]{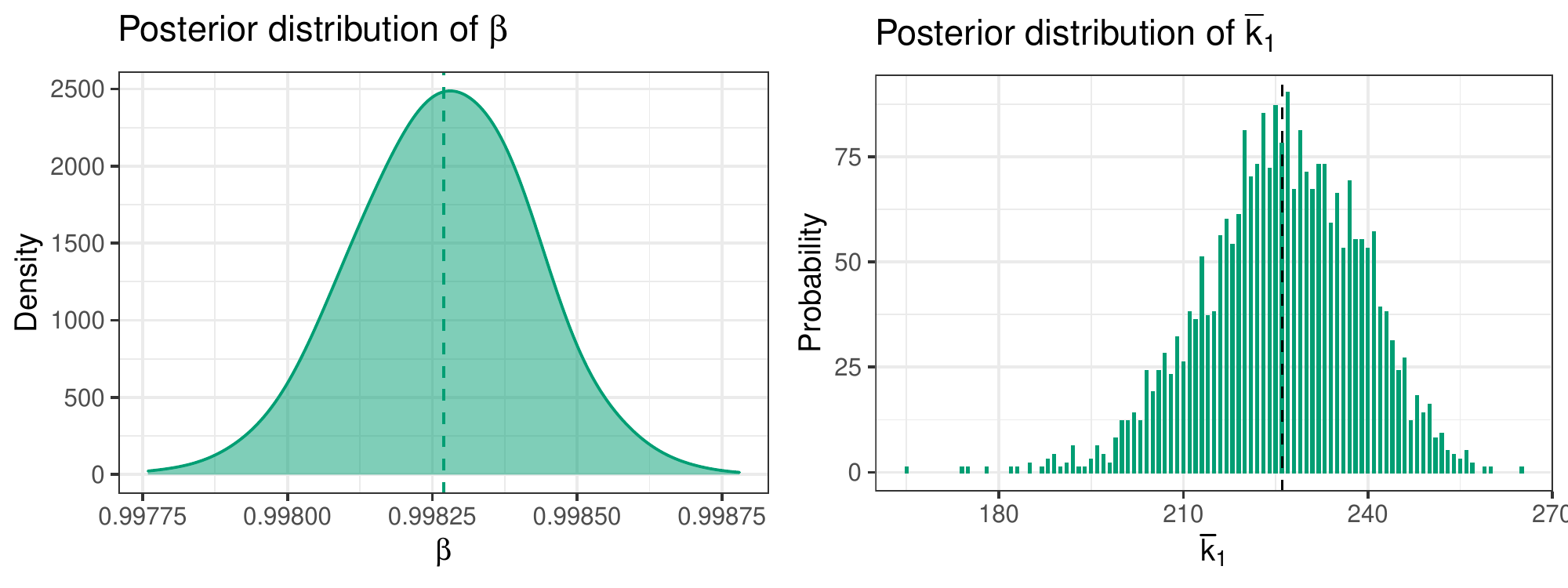}
	\caption{Traceplots for the Pitman-Yor model. Left panel: traceplot of $\vartheta$. Right panel: traceplot of $\sigma$.}\label{fig:post_plot4}
\end{figure}

\subsection{Inference on an additional sample}\label{sec:inf_new_samp}

We validated the capability of the model of capturing an inflation of the number of species with frequency one in an additional sample. To this aim, we estimate the distribution of $\E [N_{m,1}^{(n-m)}| X_1, \ldots , X_{n - m}]$, $\E [N_{m,2}^{(n-m)}| X_1, \ldots ,\allowbreak X_{n - m}]$ and $\E [K_{m}^{(n-m)}| X_1, \ldots , X_{n - m}]$ via cross-validation, by sampling without replacement the $80\%$ of the observations  $X_1^{\mathrm{train}},\dots,X_{n-m}^{\mathrm{train}}$ to estimate the model, and then predicting the number of new species with frequency one in an additional sample, composed by the remaining $20\%$ of the observations, $X_{n-m+1}^{\mathrm{test}},\dots,X_n^{\mathrm{test}}$. We estimate both the contaminated Pitman-Yor model and the Pitman-Yor model with $X_1^{\mathrm{train}},\dots,X_{n-m}^{\mathrm{train}}$, assuming the same model specification of Section \ref{sec:discrete_app}, and by running the algorithm described in Section \ref{sup:disc} for $10\,000$ iterations, of which $5\,000$ burn-in iterations, thinning the sampled chains every $5$ realizations. We then evaluate $\E [N_{m,1}^{(n-m)}| X_1, \ldots , X_{n - m}]$, $\E [N_{m,2}^{(n-m)}| X_1, \ldots , X_{n - m}]$ and $\E [K_{m}^{(n-m)}| X_1, \ldots , X_{n - m}]$ for both the models. We further compute, for each sampled train and test data set, the observed number of new species with frequency equal to one, the observed number of new species with frequency equal to two, and the observed number of new species in the additional sample. We replicated the cross-validation for $1\,000$ times. 

\begin{figure}[h]
	\centering
	\includegraphics[width = 0.49 \textwidth]{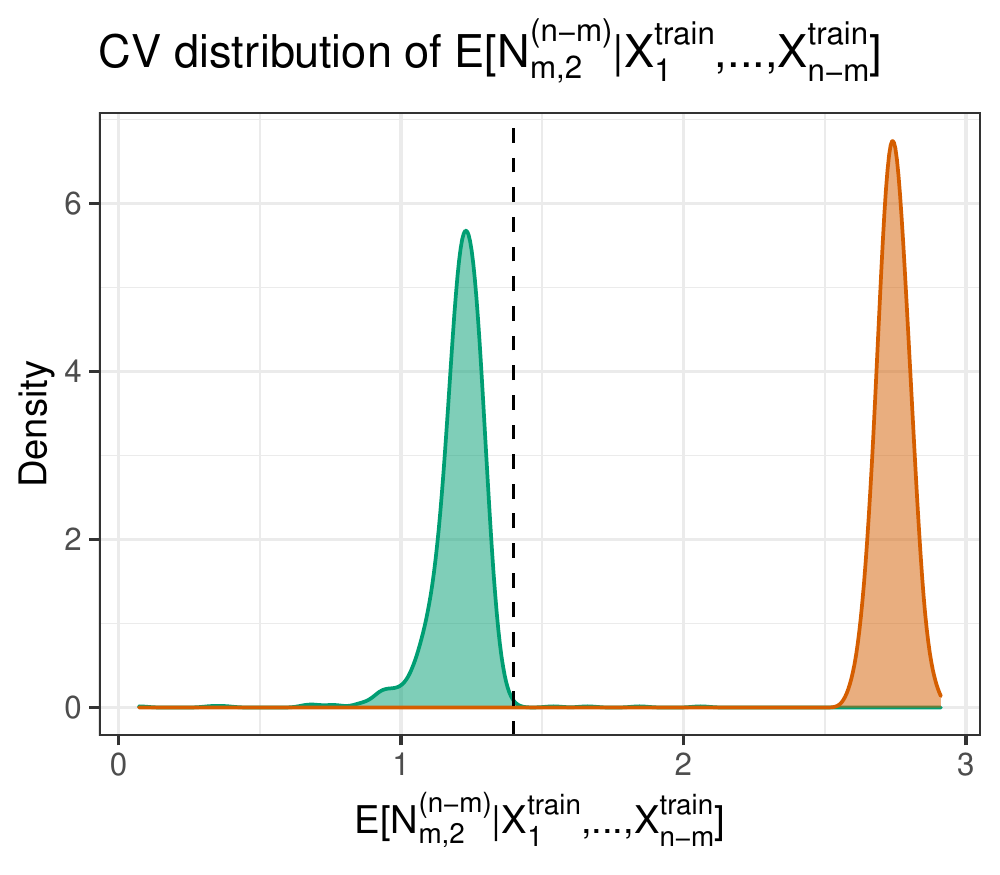}
	\caption{Cross-validated distribution of $\E [N_{m,2}^{(n-m)}| X_1^{\mathrm{train}}, \ldots , X_{n-m}^{\mathrm{train}}]$, for the contaminated Pitman-Yor model (green) and the Pitman-Yor model (orange). The black dashed line denotes the average of the observed number of species with frequency equal to two in the additional sample.}\label{fig:add_sample2}
\end{figure}

Figure \ref{fig:add_sample} of the article shows the resulting distributions of $\E [N_{m,1}^{(n-m)}| X_1^{\mathrm{train}}, \ldots , X_{n-m}^{\mathrm{train}}]$ and $\E [K_{m}^{(n-m)}| X_1^{\mathrm{train}},\allowbreak \ldots , X_{n-m}^{\mathrm{train}}]$, while Figure \ref{fig:add_sample2} show the resulting distributions $\E [N_{m,2}^{(n-m)}| X_1^{\mathrm{train}}, \ldots , X_{n-m}^{\mathrm{train}}]$, for both the contaminated Pitman-Yor model and the Pitman-Yor model. We can appreciate that the distributions for the contaminated Pitman-Yor model shrink toward the observed values, while the Pitman-Yor model has a distortion in the produced estimates. 

\clearpage
\section{NGC 2419 data}\label{sup:mixt_app}

In order to analyze the NGC 2419 dataset we applied the sampling strategy of Section \ref{sup:mix}. We have tuned the variances of the Metropolis-Hastings steps to update $\vartheta$ and $\sigma$ to achieve optimal acceptance rates. We ran the model for $35\,000$ iterations, which include $10\,000$ burn-in iterations, thinning the produced chain every $10$ realizations, and obtaining a final sample  of size $2\,500$.
We report here the posterior summaries and the traceplots for the main parameters of the contaminated Pitman-Yor mixture model estimated with the NGC 2419 data.

\begin{table}[ht]
	\centering
	\begin{tabular}{lllll}
		\\
		& mean & SD & ESS      & Geweke diagnostic \\
		$\vartheta$ & 2.904         & 1.710          & 654.302  & -0.702            \\
		$\sigma$    & 0.116         & 0.032          & 302.821   & 1.373            \\
		$\beta$     & 0.925         & 0.034        	 & 1059.736   & -0.819            \\
		$k$   		& 20.491        & 4.429          & 636.313	  & 1.961            
	\end{tabular}
	\caption{Posterior summaries of the chain sampled for the contaminated Pitman-Yor mixture model. In evidence the main parameters of the mixing measure ($\vartheta$, $\sigma$ and $\beta$), and the number of different clusters $k$.}
	\label{tab:post_cPY_mix}
\end{table}

\begin{figure}[!h]
	\centering
	\includegraphics[width = 0.8 \textwidth]{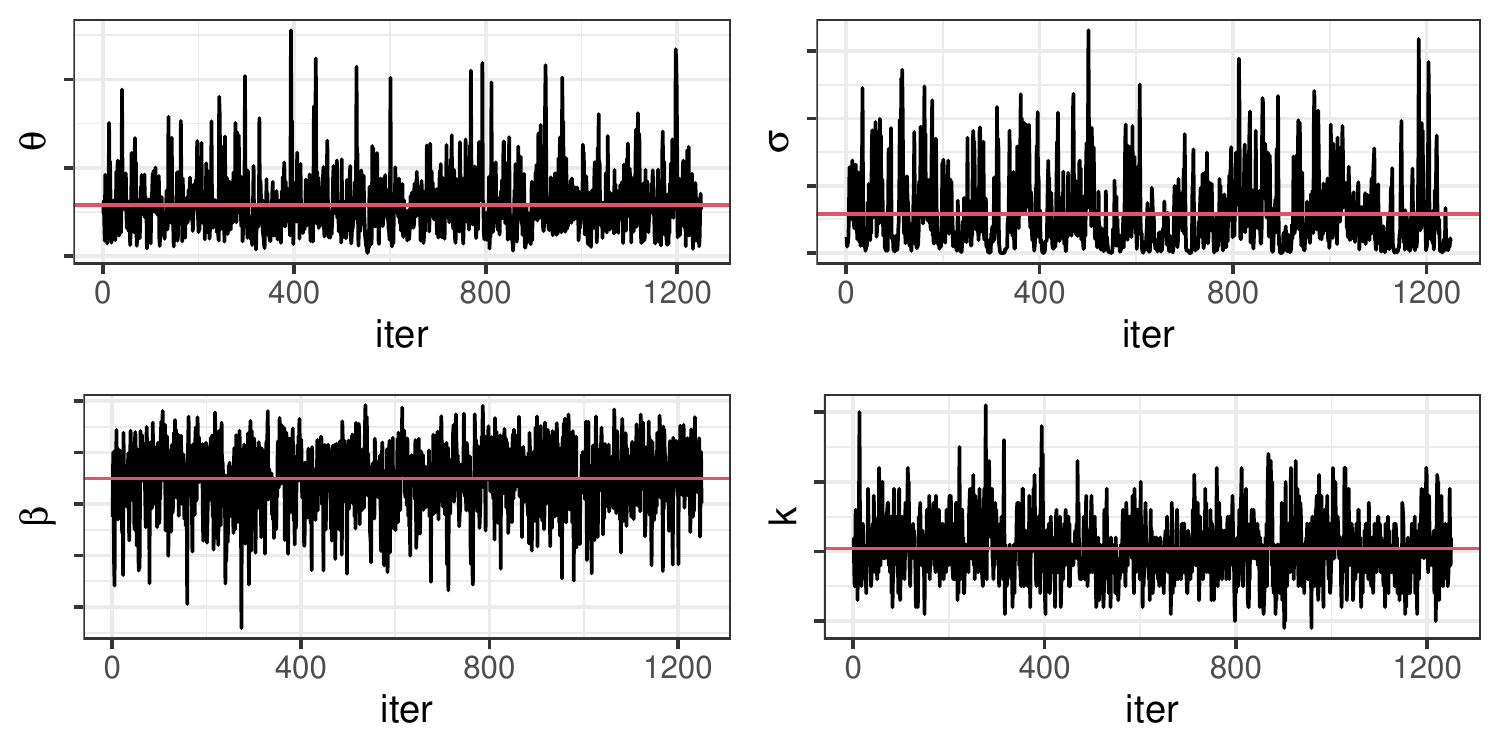}
	\caption{Traceplots for the contaminated Pitman-Yor mixture model. Top-left panel: traceplot of $\vartheta$. Top-right panel: traceplot of $\sigma$. Bottom-left panel: traceplot of $\beta$. Bottom-right panel: traceplot of the number of distinct clusters $k$.}\label{fig:trace_PY_mix}
\end{figure}

\begin{figure}[!h]
	\centering
	\includegraphics[width = 0.99 \textwidth]{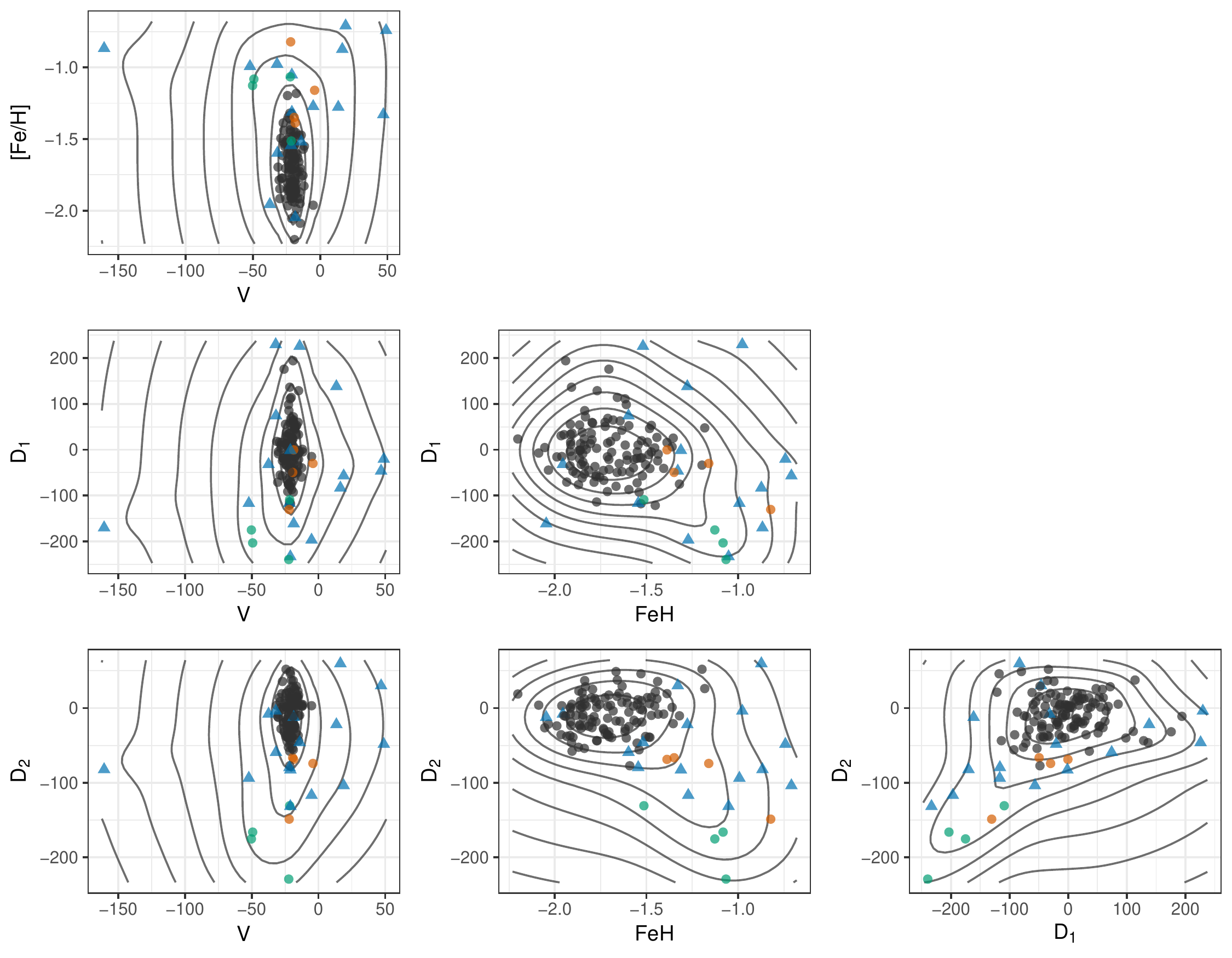}
	\caption{Estimated density and optimal partition for the NGC 2419 dataset. The blue triangles denote observations which are singletons in the posterior optimal partition, different colors denote different clusters. The contour lines denote the expectation of the estimated posterior random density.}\label{fig:application_plot}
\end{figure}

\subsection{Comparison with Pitman-Yor mixture model}\label{sec:mixt_comparison}
We compare the posterior inference faced for the contaminated Pitman-Yor mixture model, described in Section \ref{sec:cont_app}, with the standard Pitman-Yor mixture model case. We specify the Pitman-Yor mixture model preserving at most the commonalities with the contaminated model of Section \ref{sec:cont_app}, by considering the same multivariate Gaussian kernel function $\kernel(\cdot;(\mu,\Sigma))$, with expectation $\mu$ and covariance matrix $\Sigma$. We further maintain the same specification of the base measure, i.e. $Q_0 \sim NIW(\mu_0, \kappa_0, \nu_0, S_0)$ is a Normal-Inverse-Wishart distribution. We specify the parameters of the base measure by setting $\mu_0$ equals to the sample mean of the data, $\kappa_0 = 1$, $\nu_0 = d + 3 = 7$ and $S_0$ equals to the diagonal of the sample variance of the data. We complete the model specification by choosing vague priors for the parameters of the mixing measure, $\vartheta \sim \mathrm{Gamma}(2,0.02)$, $\sigma \sim \mathrm{Unif}(0,1)$ and $\beta \sim \mathrm{Unif}(0,1)$. Posterior inference is carried out using a marginal sampling scheme \citep{Esc88, Esc95}, in the spirit of the algorithm described in section \ref{sup:mix}, but without the contaminant measure. We tune variances of the Metropolis-Hastings steps to update $\vartheta$ and $\sigma$ to achieve optimal acceptance rates. We ran the model for $35\,000$ iterations, which include $10\,000$ burn-in iterations. The produced chain is thinned every $10$ realizations, thus obtaining a final sample  of size $2\,500$.

\begin{table}[h]
	\centering
	\begin{tabular}{llc*{5}{c}}
		\\
		&       && \multicolumn{5}{c}{CPY partition} \\ 
		&       && \textit{Singletons}    & \textit{A}   & \textit{B}    & \textit{C}  \vspace{5pt}\\
		&&\multicolumn{1}{c}{\textit{total}}&\textit{16}&\textit{115}&\textit{4}&\textit{4}\\ 
		\multirow{4}{*}{PY partition} & \textit{Singletons$_{\mathrm{PY}}$}   &\multicolumn{1}{c}{\textit{7}}& 5  & 2 & 0 &0   \\
		& \textit{A$_{\mathrm{PY}}$} &\multicolumn{1}{c}{\textit{113}}& 3 & 109 & 0 &1 \\
		& \textit{B$_{\mathrm{PY}}$} & \multicolumn{1}{c}{\textit{15}}& 7 &1 &4&3 \\
		& \textit{C$_{\mathrm{PY}}$} & \multicolumn{1}{c}{\textit{2}}& 0 &2 &0&0      \\
		& \textit{D$_{\mathrm{PY}}$} & \multicolumn{1}{c}{\textit{2}}& 1 &1 &0&0      \\
	\end{tabular}
	\caption{Comparison between the partition estimated using a contaminated Pitman-Yor mixture model and the optimal partition estimated using a Pitman-Yor mixture model.}
	\label{tab:vsPY}
\end{table}

Table \ref{tab:vsPY} shows a comparison between the optimal partition estimated with a contaminated Pitman-Yor mixture model and the optimal partition estimated using a Pitman-Yor mixture model. Only $5$ of the $16$ stars identified as singletons in the contaminated mixture model are singletons also in the standard mixture model, while the optimal partition for the standard Pitman-Yor mixture model shows a total of $7$ singletons. Out of $115$ stars belonging to the main cluster in the contaminated model, $109$ belong to the main cluster also in the standard model. The standard Pitman-Yor mixture model, with respect to the contaminated mixture model, is producing overall a partition with fewer singletons, and with the remaining stars less concentrated in the main cluster. 

\begin{figure}[!h]
	\centering
	\includegraphics[width = 0.99 \textwidth]{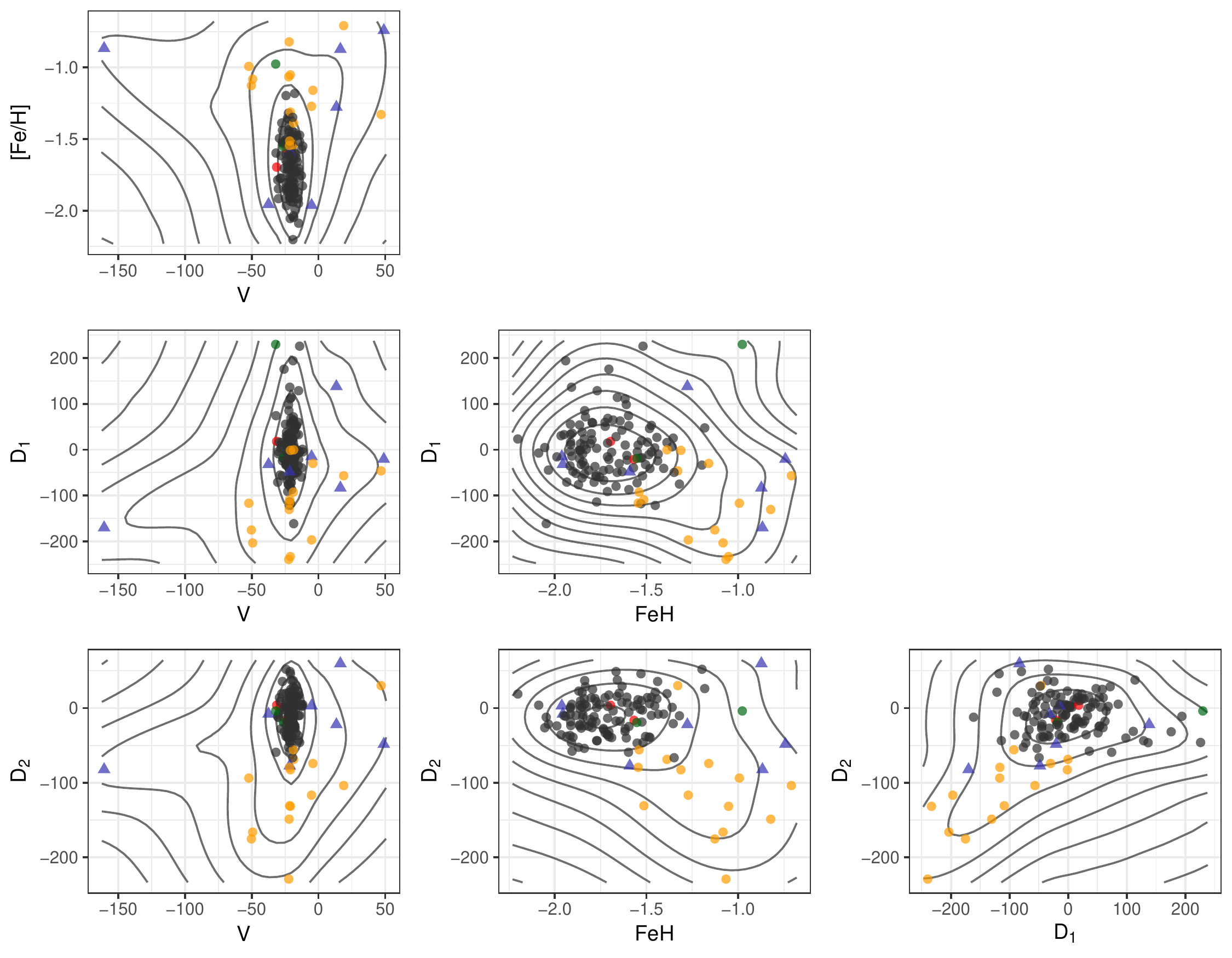}
	\caption{Estimated density and optimal partition for the NGC 2419 dataset with the Pitman-Yor mixture model. The blue triangles denote observations which are singletons in the posterior optimal partition, different colors denote different clusters. The contour lines denote the expectation of the estimated posterior random density.}\label{fig:application_plot_PY}
\end{figure}

\end{document}